\newtheorem{definition}{Definition}[section]
\newtheorem{example}[definition]{Example}
\newtheorem{theorem}[definition]{Theorem}
\newtheorem{lemma}[definition]{Lemma}
\newcommand{\calA}{\mathcal{A}\xspace}
\newcommand{\calB}{\mathcal{B}\xspace}
\newcommand{\calF}{\mathcal{F}\xspace}
\newcommand{\famop}[2]{\calF^{#1}_{#2}\xspace}
\newcommand{\attA}{\mathsf{A}\xspace}
\newcommand{\attB}{\mathsf{B}\xspace}
\newcommand{\attK}{\mathsf{K}\xspace}
\newcommand{\lgeq}{\leq\geq}
\newcommand{\range}{\mathsf{R}}
\newcommand{\rangeunion}{\mathsf{U}}
\newcommand{\getsadd}{\gets+}
\newcommand{\affine}{\mathsf{aff}}
\newcommand{\Dbeg}{D_\mathsf{S}\xspace}
\newcommand{\Dend}{D_\mathsf{T}\xspace}
\newcommand{\Rbeg}{R_\mathsf{S}\xspace}
\newcommand{\Rend}{R_\mathsf{T}\xspace}
\newcommand{\NP}{\mathsf{NP}}
\newcommand{\diff}{\Delta}
\newcommand{\bd}{\mathrm{BD}}
\newcommand{\bdI}{\mathrm{BD}1}
\newcommand{\bdII}{\mathrm{BD}\mu}
\newcommand{\cost}{\mathit{cost}}
\newcommand{\ddiff}{\textsc{Data-Diff}\xspace}
\newenvironment{denselist}{
    \begin{list}{\small{$\bullet$}}%
    {\setlength{\itemsep}{1pt} \setlength{\topsep}{1pt}
    \setlength{\parsep}{1pt} \setlength{\itemindent}{0pt}
    \setlength{\leftmargin}{1.5em}
    \setlength{\partopsep}{1pt}}}%
    {\end{list}}
\newcommand{\topic}[1]{\vspace{-3.5pt}\smallskip \smallskip \noindent{\bf #1.}}
\newcommand{\stitle}[1]{\vspace{0.5em}\noindent\textbf{#1}}
\begin{document}


\title{Towards a Theory of {\LARGE  \sc Data-Diff}: Optimal Synthesis of \\ Succinct Data Modification Scripts}



%
%
%
%

\numberofauthors{3} 

\author{
%
%
\alignauthor
Tana Wattanawaroon\\
       \affaddr{University of Illinois (UIUC)}\\
       \email{wattana2@illinois.edu}
\alignauthor
Stephen Macke\\
       \affaddr{University of Illinois (UIUC)}\\
       \email{smacke@illinois.edu}
\alignauthor
Aditya Parameswaran\\
      \affaddr{University of Illinois (UIUC)}\\
      \email{adityagp@illinois.edu}
}

\maketitle

\begin{abstract}

This paper addresses the \ddiff problem: given
a dataset and a subsequent version of the dataset, find
the shortest sequence of operations that transforms the
dataset to the subsequent version, under a restricted
family of operations.
We consider operations similar to SQL UPDATE,
each with a \emph{condition} (WHERE)
that matches a subset of tuples
and a \emph{modifier} (SET)
that makes changes to those matched tuples.
We characterize the problem
based on different constraints on the attributes
and the allowed conditions and modifiers,
providing complexity classification
and algorithms in each case.

\end{abstract}

\section{Introduction}
\label{sec:intro}

Over the course of data analysis, data scientists
routinely generate versions of datasets by
performing various data curation and cleaning operations,
including updating, normalizing, fixing, adding, or
deleting attribute values or rows,
or adding or deleting new features or columns.
They may use various ad-hoc tools for
performing these edit operations,
including scripting tools like sed, awk, or perl,
or programming languages, like R or Python.
Each such new dataset version is stored in a networked file
system and shared
with other data scientists~\cite{bhardwaj2014datahub,maddox2016decibel,bhattacherjee2015principles}.
Usually, however, the sequence of edit operations or the script that
was used to generate the new version is not recorded along with the
new version---since it may have been the result of a quick-and-dirty update;
and even if the script is recorded, since
the script may be in various programming or scripting languages,
it may be hard to decipher or reverse-engineer the sequence of
edit operations performed within this script.

To tackle this issue, in this paper, we introduce
the \ddiff problem: {\em given a dataset $\Dbeg$
and a subsequent dataset $\Dend$ that was derived from $\Dbeg$,
can we synthesize the most succinct sequence of edit operations,
$\Delta$, that transforms $\Dbeg$ to $\Dend$?}
Our target is SQL edit operations that can be efficiently
executed in relational databases.
We call this problem the \ddiff problem
as the {\em data}-analog of the traditional text {\em diff},
or differentiation problem,
often used in source code versioning systems to synthesize
the sequence of edit operations that resulted in a new version.

There are three reasons why solving \ddiff,
i.e., synthesizing a succinct
sequence of edit operations, is valuable: {\em understanding, generalization,
and compactness}.
First, the data-diff helps users compactly understand the edit operations
that have been made to generate a new version $\Dend$ from $\Dbeg$, without having
to read through a long programming script;
second, it allows us to potentially record and recreate the edit
operations so that they can be similarly applied to other datasets; and
third, instead of storing $\Dend$, we can simply store the sequence of edit
operations, which, since it is written in SQL, is often smaller.

\topic{Our Focus}
In this paper, {\em our key contribution is to
introduce the \ddiff problem and study it from a theoretical perspective,
aiming to characterize the complexity of the problem and understand
when the problem becomes intractable}.
We focus on recovering edits to a single relation $R$,
with edit operations that follow the following template:
\[\text{$\mathtt{UPDATE}$ $R$ $\mathtt{SET}$ $\langle U \rangle$ $\mathtt{WHERE}$ $\langle C \rangle$;}\]
We characterize the complexity of \ddiff across three dimensions:
\begin{enumerate}
\itemsep0em
\item {\em [characteristics]} the attributes that may be used within $U$ and $C$: we call an
attribute {\em read-only}  ({\em write-only})  if it can be used within $C$ ($U$) but not
$U$ ($C$), {\em read-write}  if it can be used within $U$ and $C$,
and {\em inaccessible}  if should not be used within either $U$ or $C$;
\item {\em [modifiers]} the space of transformations that can be used within $U$: we span
basic assignment operations, as well as arithmetic operations; and
\item {\em [conditions]} the space of conditions that can be used within $C$: we span
both equality conditions, $\leq$ and $\geq$, and range-based conditions.
\end{enumerate}
In any of these cases, the user will specify the space
of attribute characteristics, modifiers, and conditions,
and the system will then automatically synthesize the smallest
sequence of edit operations.
Next, we illustrate the challenges in solving \ddiff using a simple example.

\begin{table*}[!hbt]
\centering
\begin{tabular}{c l l l l}
& $\gets$ & $+$ & $\getsadd$ & $\affine$ \\
\hline\hline
$=$          & [Thm \ref{eqgets}] $O(N \log N)$
             & [Thm \ref{eqadd}] $O(N \log N)$
             & [Thm \ref{eqgetsadd}] $O(N \log N)$
             & [Thm \ref{eqaffine}] $O(N \log N)$\\
\hline
$\leq$       & [Thm \ref{leqgets}] $O(N \log N)$
             & [Thm \ref{leqadd}] $O(N \log N)$
             & [Thm \ref{leqgetsadd}] $O(N \log N)$
             & [Thm \ref{leqaffine}] $O(N \log N)$\\
\hline
\multirow{2}{*}{$\lgeq$} & \multirow{2}{*}{[Thm \ref{lgeqgets}] $O(N^2)$}
                         & [Thm \ref{lgeqadd}] $\NP$-hard
                         & \multirow{2}{*}{[Thm \ref{lgeqgetsadd}] $\NP$-hard}
                         & \multirow{2}{*}{[Thm \ref{lgeqaffine}] $\NP$-hard}\\
                         &
                         & [Thm \ref{lgeqaddapprox}] $O(N\log N)$ to $+1$-approx\\
\hline
\multirow{2}{*}{$\range$} & \multirow{2}{*}{[Thm \ref{rangegets}] $O(N^4)$}
                          & [Thm \ref{rangeadd}] $\NP$-hard
                          & \multirow{2}{*}{[Thm \ref{rangegetsadd}] $\NP$-hard}
                          & \multirow{2}{*}{[Thm \ref{rangeaffine}] $\NP$-hard}\\
                          &
                          & [Thm \ref{rangeaddapprox}] $O(N\log N)$ to $\times 2$-approx\\
\hline
\multirow{2}{*}{$\rangeunion$} & \multirow{2}{*}{[Thm \ref{uniongets}] $\NP$-hard}
                          & \multirow{2}{*}{[Thm \ref{unionadd}] $\NP$-hard}
                          & \multirow{2}{*}{[Thm \ref{uniongetsadd}] $\NP$-hard}
                          & \multirow{2}{*}{[Thm \ref{unionaffine}] $\NP$-hard}\\
\end{tabular}
\caption{Results summary for the one read-only attribute, one write-only attribute case}
\label{table:summary1}
\end{table*}

\begin{figure}[t]
\centering
\begin{tabular}{c c c c c}
$\attK$&$\attA$&$\attB$\\
\hline
\texttt{c17}&$1$&$0$\\
\texttt{3bd}&$5$&$0$\\
\texttt{97a}&$3$&$0$\\
\texttt{1b8}&$0$&$0$\\
\texttt{94f}&$4$&$0$\\
\texttt{842}&$2$&$0$\\
\\
&$R_1$
\end{tabular}
\begin{tikzpicture}
\path[draw=black,solid,line width=1mm,fill=black,
    preaction={-triangle 90,thin,draw,shorten >=-1mm}
] (0, -0.3) -- (0.3, -0.3);
\end{tikzpicture}
\begin{tabular}{c c c c c}
$\attK$&$\attA$&$\attB$\\
\hline
\texttt{c17}&$1$&$1$\\
\texttt{3bd}&$5$&$3$\\
\texttt{97a}&$3$&$2$\\
\texttt{1b8}&$0$&$1$\\
\texttt{94f}&$4$&$2$\\
\texttt{842}&$2$&$1$\\
\\
&$R_2$
\end{tabular}
\begin{tikzpicture}
\path[draw=black,solid,line width=1mm,fill=black,
    preaction={-triangle 90,thin,draw,shorten >=-1mm}
] (0, -0.3) -- (0.3, -0.3);
\end{tikzpicture}
\begin{tabular}{c c c c c}
$\attK$&$\attA$&$\attB$\\
\hline
\texttt{c17}&$7$&$1$\\
\texttt{3bd}&$8$&$3$\\
\texttt{97a}&$8$&$2$\\
\texttt{1b8}&$7$&$1$\\
\texttt{94f}&$8$&$2$\\
\texttt{842}&$7$&$1$\\
\\
&$R_3$
\end{tabular}
\caption{Example of a setting with three versions of a relation $R$
where we want to solve \ddiff with one read-write column $\attA$, and one
write-only column $\attB$.}
\label{fig:ddiffex}
\end{figure}

\begin{example}[Motivating Example]\label{ex:motivating-example}
Consider the scenario  in Figure~\ref{fig:ddiffex}, where
we depict three versions of a given relation $R$,
namely $R_1$, $R_2$, and $R_3$,
with the primary key $\attK$.
Using $\attK$, we can identify how individual tuples have evolved
across the versions.
For this simple example, we do not have any tuples being added or deleted,
nor do we have any attributes being added or deleted.
Our goal is to solve \ddiff under the specification
that we have
one read-write attribute, $\attA$, and one write-only attribute, $\attB$.
(since $\attK$ is the primary key, in this case, it has
been denoted an inaccessible attribute, which means that
it cannot be used in the modifier or in the condition.)

One approach to solving \ddiff between $R_1$ and $R_2$,
which only differ in the value
of $\attB$, is to use six edit operations of the following form:
\[\text{$\mathtt{UPDATE}$ $R$ $\mathtt{SET}$ $\attB=b_k$ $\mathtt{WHERE}$ $\attA=a_k$;}\]
one for each tuple. Recall that $\attA$ being a read-write attribute,
can be used for the (equality) condition, while $\attB$ being a write-only attribute
can be used for the (assignment) modifier.
If we relax the space of conditions to admit $\leq$ and $\geq$
in addition to equality,
then there is a shorter sequence of three edit operations:
\begin{align*}
& U_1: \text{$\mathtt{UPDATE}$ $R$ $\mathtt{SET}$ $\attB=1$ $\mathtt{WHERE}$ $\attA\leq 2$;}\\
& U_2: \text{$\mathtt{UPDATE}$ $R$ $\mathtt{SET}$ $\attB=2$ $\mathtt{WHERE}$ $\attA\geq 3$;}\\
& U_3: \text{$\mathtt{UPDATE}$ $R$ $\mathtt{SET}$ $\attB=3$ $\mathtt{WHERE}$ $\attA = 5$;}
\end{align*}
Notice that the order of operations is important: $U_1 \rightarrow U_2
\rightarrow U_3$ does not give the same result as $U_1 \rightarrow U_3
\rightarrow U_2$.
Similarly, to solve \ddiff between $R_2$ and $R_3$
(wherein the read-write attribute $\attA$ is transformed),
we could use as many as six operations, but
in fact two operations suffice:
\begin{align*}
& U_4: \text{$\mathtt{UPDATE}$ $R$ $\mathtt{SET}$ $\attA=7$ $\mathtt{WHERE}$ $\attA\leq 2$;}\\
& U_5: \text{$\mathtt{UPDATE}$ $R$ $\mathtt{SET}$ $\attA=8$ $\mathtt{WHERE}$ $\attA\leq 5$;}
\end{align*}
Once again, $U_5 \rightarrow U_4$ does not provide the same
result as $U_4 \rightarrow U_5$.
As it turns out, this sequence of three edit operations
for the first case, and two edit operations for the second case
are the smallest possible sequences, based on
modifiers that are assignment-based and
on conditions that are based on $\leq$, $\geq$ or equality.
Indeed, when we expand the space of modifiers to not just assignment, but also
addition or subtraction, the \ddiff problem becomes even more challenging.
Overall, depending on the instance, the smallest
sequence of operations may be as small as one operation, or
as many as $O(N)$ (typically non-commutative) operations, where $N$ is the number of tuples, making
it challenging to navigate.
\end{example}

\begin{table*}[!hbt]
\centering
\begin{tabular}{c l l l l}
& $\gets$ & $+$ & $\getsadd$ & $\affine$ \\
\hline\hline
$=$          & [Thm \ref{multeqgets}] $\NP$-hard
             & [Thm \ref{multeqrest}] $\NP$-hard
             & [Thm \ref{multeqrest}] $\NP$-hard
             & [Thm \ref{multeqrest}] $\NP$-hard\\
\hline
$\leq$       & ?
             & [Thm \ref{multleq}] $\NP$-hard
             & [Thm \ref{multleq}] $\NP$-hard
             & [Thm \ref{multleq}] $\NP$-hard\\
\hline
$\lgeq$       & ?
              & [Thm \ref{multtrivial}] $\NP$-hard
              & [Thm \ref{multtrivial}] $\NP$-hard
              & [Thm \ref{multtrivial}] $\NP$-hard\\
\hline
$\range$      & [Thm \ref{multrangegets}] $\NP$-hard
              & [Thm \ref{multtrivial}] $\NP$-hard
              & [Thm \ref{multtrivial}] $\NP$-hard
              & [Thm \ref{multtrivial}] $\NP$-hard\\
\hline
$\rangeunion$ & [Thm \ref{multuniongets}] $\NP$-hard
              & [Thm \ref{multtrivial}] $\NP$-hard
              & [Thm \ref{multtrivial}] $\NP$-hard
              & [Thm \ref{multtrivial}] $\NP$-hard\\
\end{tabular}
\caption{Results summary for the multiple read-only attributes, one write-only attribute case}
\label{table:summary2}
\end{table*}

\topic{Related Work}
The \ddiff problem is related to the
{\em view synthesis} problem, a complementary
problem that targets the following setting:
given $D, D'$, find the most succinct
single view definition $Q$ using selection operations such that
$D' \approx Q(D)$~\cite{das2010synthesizing,QBO}.
For example,
\[Q(R): \text{$\mathtt{SELECT}$ $*$ $\mathtt{FROM}$ $R$ $\mathtt{WHERE}$ $\attA=k$;}\]
is a view definition $Q$ that selects all of the
tuples that match a certain criteria from $R$.
This work has been extended in multiple directions that we will discuss
in Section~\ref{sec:related}.
\ddiff is much harder than view synthesis,
due to non-commutativity of edit operations,
leading to intractability even for relations with
a finite number of attributes, while view synthesis
is only intractable when the number of attributes is allowed to vary.
\ddiff is also related to the problem
of synthesizing string transformations---the difference
between that line of work and ours is the difference
between learning regular expressions and learning SQL modification
statements: the space of operations and therefore the techniques
and contributions are very different.
We will cover related work in more detail in Section~\ref{sec:related}.

\topic{Contributions}
We introduce the family of \ddiff problems
under different attribute characteristics and the space
of modifiers and conditions of interest. We identify
a ``base case'', fully characterize it, and then
identify a generalization and proceed to show hardness
results in the generalization.
The characterization summary can be seen in Tables~\ref{table:summary1}
and~\ref{table:summary2}.

\section{Related Work}
\label{sec:related}

\ddiff is related to the topics of view synthesis
and learning string transformations from examples.

\topic{View Synthesis} The view synthesis problem originally defined the question
of synthesizing a view definition given two database instances, which was originally laid out in Das Sarma et al.~\cite{das2010synthesizing} and Tran et al.~\cite{QBO} and
extended in various ways since
then~\cite{lu2013quicksilver,zhang2013automatically,tran2014query,zhang2013reverse,panev2016reverse}.
For example, recent work has extended the original work
on the view synthesis problem to the problem of
synthesizing join queries~\cite{zhang2013reverse}
and top-$k$ queries~\cite{panev2016reverse}.
Other work has extended the view synthesis problem
to an iterative one, with the user being asked to confirm
the presence or absence of tuples one at a time in order to {\em learn} an appropriate user query  for various
settings~\cite{bonifati2014paradigm,bonifati2016learning,bonifati2015learning,bonifati2014interactive,abouzied2013learning}.
Earlier work studied the problem of checking if there
exists a view definition without synthesizing it~\cite{fletcher2009expressive}.
Another related direction is that of synthesizing a view given
multiple pairs of database instances, introduced in the context of
data integration as a problem of learning schema mappings from
data examples~\cite{cate2013learning,gottlob2010schema,alexe2011designing}.

While all of these directions are interesting and relevant to
the \ddiff problem, note that the \ddiff problem
is substantially harder than the view synthesis problem, even
when applied on a single relation $R$.
First, edit operations, unlike selection operations,
are non-commutative and therefore cannot be applied in any order.
Thus, the order of operations, while unimportant in view synthesis,
is crucial in \ddiff.
Second, the ability to use multiple operations is not very
important in the view synthesis problem,
since we can simply overload the $\mathtt{WHERE}$ clause
to be more complex; in the \ddiff problem on the other hand,
multiple edit operations offer substantial additional power, e.g.,
transforming $R_1$ to $R_3$ as given in Figure~\ref{fig:ddiffex}
would be difficult using one operation.

For these reasons, we find that the problem of \ddiff becomes
intractable much sooner---even on edit operations on a single relation with
two or three attributes, while the view synthesis problem
is only intractable when the number of attributes is allowed to vary.
In fact, notice that \ddiff problem has a view synthesis problem
as a sub-problem: for the case where a number of tuples have been
deleted from $D$ to $D'$, we could use the results from the view
synthesis problem to identify the condition that selects
all of the tuples to be deleted, and therefore
we can inherit all of the same hardness results for those cases.
To understand the complexity of \ddiff independent of view synthesis,
we focus on the case when no tuples have been deleted.

\topic{String Transformations}
A related direction from the program analysis community focuses
on the learning of string transformations given input-output examples~\cite{DBLP:conf/popl/Gulwani11,DBLP:journals/cacm/GulwaniHS12},
extending it to various settings in cleaning data in spreadsheets, such as
transforming times and dates~\cite{DBLP:journals/pvldb/SinghG12},
numbers~\cite{singh2012synthesizing},
text~\cite{kini2015flashnormalize}, and
miscellaneous data types~\cite{singh2016transforming},
changing the structure of spreadsheet tables~\cite{harris2011spreadsheet}, as well as
extracting structured data from semi-structured spreadsheet data~\cite{barowy2015flashrelate}.
Like us, this body of work targets {\em edit operations}---however, these operations are
regular-expression like operations that are applied
to transform each value in a set of values (e.g., extracting the first three digits of a phone number).
Each such value can be then treated as a
training example for learning the edit operation.
Instead, we focus SQL operations:
not as fine-grained at the value level,
but are more fine-grained at a global level, admitting conditional clauses, e.g.: if $\attA \in [a, b]$, add $c$ to $\attB$.
Thus, the difference in the space of operations under consideration can be
seen as the difference between regular expressions being applied to a set of values (in the string transformation case),
versus a sequence of SQL modification statements (in our case).
In addition, we do not attempt to precisely characterize
the complexity of learning transformations as a function of the space of operations, preferring
instead to prove soundness and completeness.

\section{Problem Definitions}
\label{sec:problem}

In this section, we formulate the problem of finding a
succinct description of changes between two datasets.
We define the \emph{diff}, which captures the notion of
the description of changes, along with some relevant terms.
Then we formally define the problem and scope
of operations that are of interest in this paper.

To understand and characterize the complexity frontier
of the \ddiff problem, where the goal is to find the most
succinct sequence of operations that transform $\Dbeg$ to $\Dend$,
we assume that $\Dbeg$ and $\Dend$ are both single relations $\Rbeg$ and $\Rend$
with the same schema, along with an unmodified primary key attribute
(e.g., $\mathtt{employeeID}$, $\mathtt{transactionID}$)
that allows us to track how tuples have evolved---thus,
there is a one-to-one correspondence between the tuples in $\Rbeg$ and $\Rend$.
We further assume that the primary key values in $\Rbeg$ and $\Rend$ are the same,
essentially guaranteeing that there are no insertions or deletions.
Thus, overall, our setting is one where there is a single
relation (with a primary key)
being modified by {\em data modification operations},
but there are no insertions or deletions (of tuples or attributes),
or modification of schema.
We will formalize these assumptions later in this section.

\stitle{Rationale for Assumptions.}
We now briefly describe why we make these simplifying assumptions
to focus on \ddiff for {\em data modification operations}.
When there is an unmodified primary key,
insertions of new tuples are easy to identify,
and trivial to represent as either a single batch $\mathtt{INSERT}$ statement,
or insertion of one tuple at a time,
with no further compression possible or necessary.
Deletions of tuples, on the other hand, ends up being
equivalent to the view synthesis problem (as described in Section~\ref{sec:related}), since we need to identify a query $Q$ that selects precisely
the tuples that were deleted, and thus we can reuse existing
results from related work previously discussed.
Since there is an unmodified primary key, if we know which attributes
are deleted, they can all be dropped in one single $\mathtt{ALTER}$
statement, along with any attributes that are renamed.
Naturally, attributes that are inserted are a lot more complicated,
since, in general, a succinct description for new attributes
would fall under the realm of pattern recognition---this is outside
the scope of our work, which focuses on data modification.


\subsection{Similar Relations, Diff, and Best Diff}
First, we introduce the notion of attribute characteristics.
Different settings of characteristics play a major role in
determining the hardness of the problem.
Here, $\calA$ is the set of \emph{read} attributes on which conditions are based,
and $\calB$ is the set of \emph{write} attributes on which modifiers make changes.

In general, we can detect which attributes $\calB$
have been modified automatically, but we allow the user to specify the
set of attributes $\calA$ explicitly, since they may not
want the system to use all attributes to infer SQL data modification scripts.
For example, if the user knows that $\mathtt{gender}$ is never
an attribute that is read when modifying the $\mathtt{GPA}$, they can
exclude $\mathtt{gender}$ from the set of attributes in $\calA$.

\begin{definition}[Attribute Characteristics]
An attribute $\attA\in\calA\cup\calB$ is called
\emph{read-only} if $\attA\in\calA$ and $\attA\not\in\calB$, or
\emph{write-only} if $\attA\not\in\calA$ and $\attA\in\calB$, or
\emph{read-write} otherwise.
\end{definition}

Second, we define ``similar'' relations.
The \ddiff problem concerns two relations, one representing the ``before'' snapshot
and the other representing the ``after'' snapshot.
As previously discussed,
we will not consider adding or removing attributes, and we
want to exclude insertions and deletions of tuples from our family
of possible operations; we only consider ``update'' operations.
Thus, the two relations should have the same schema
and the same number of rows.

In addition, we want to be able to tell
which tuples map to which in the two relations,
hence the requirement that the two relations share a primary key $\attK$, and that
the sets of primary keys are identical and cannot be modified.
The primary key serves as an identifier of the tuples
in the two relations.

\begin{definition}[Similar Relation]
For an attribute $\attK$ and sets of attributes
$\calA$ and $\calB$, neither of which contains $\attK$,
two relations $\Rbeg$ and $\Rend$ are \emph{$(\attK,\calA,\calB)$-similar}
iff
\begin{itemize}
\itemsep0em
\item $\Rbeg$ and $\Rend$ both have schema $\{\attK\}\cup\calA\cup\calB$,
and $\attK$ is their primary key, and
\item $\pi_\attK(\Rbeg)=\pi_\attK(\Rend)$
(here $\pi$ is the projection operator in relational algebra).
\end{itemize}
\end{definition}
In other words, $\Rbeg$ and $\Rend$ have the same schema with
one primary key attribute containing the same set of values.
This implies that the number of tuples in $\Rbeg$ and $\Rend$ are equal,
and that we can match the tuples in $\Rbeg$ and $\Rend$ one-to-one based on
the primary key. Note that we could simply define two relations as
similar if they have the same schema, but our definition explicitly references
the sets $\calA$ and $\calB$ as a notational convenience that will help
with later exposition.

Third, we define what an operation is, and what it does.
It must obey the
read-write characteristics of the attributes.

\begin{definition}[Operation]
For sets of attributes $\calA$ and $\calB$,
an \emph{$(\calA, \calB)$-operation} $f=(p, u)$ has
a \emph{condition} $p$ on attributes in $\calA$ and
a \emph{modifier} $u$ on attributes in $\calB$. Let
$f(\Rbeg) = \Rend$ if and only if $\Rend$ is the resulting relation after
calling the SQL command:
\[\text{$\mathtt{UPDATE}$ $\Rbeg$ $\mathtt{SET}$ $u$ $\mathtt{WHERE}$ $p$.}\]
\end{definition}
Note that the result of an $(\calA, \calB)$-operation
is $(\attK,\calA,\calB)$-similar to the operand; i.e.,
if $\Rbeg$ and $\Rend$ are relations such that $f(\Rbeg)=\Rend$
and $\Rbeg$ has schema $\{\attK\}\cup\calA\cup\calB$,
then $\Rbeg$ and $\Rend$ are $(\attK,\calA,\calB)$-similar.

We now define the \emph{diff}, the sequence of operations transforming
the ``before'' relation to the ``after'' relation, along with its
associated cost.
In the following definitions, we consider an attribute $\attK$,
sets of attributes $\calA$ and $\calB$ neither of which contain $\attK$,
a set of $(\calA, \calB)$-operations $\calF$,
and $(\attK,\calA,\calB)$-similar relations $\Rbeg$ and $\Rend$.

\begin{definition}[Diff]
A sequence of operations $F=(f_1,\ldots,f_m)$
where $f_i\in\calF$ for each $i\in[m]$
is called a \emph{diff} between $\Rbeg$ and $\Rend$
under $\calF$,
also written $F(\Rbeg)=\Rend$,
if there are relations $R_0,\ldots,R_m$ such that
\begin{denselist}
    \item $R_0=\Rbeg$,
    \item $R_m=\Rend$, and
    \item $f_i(R_{i-1})=R_i$ for all $i\in[m]$.
\end{denselist}
Let $\diff(\Rbeg,\Rend,\calF)$
denote the set of all diffs between $\Rbeg$ and $\Rend$
under $\calF$.
\end{definition}

\begin{definition}[Cost]
Each operation $f$ has an associated integer \emph{cost}, denoted $\cost(f)$.
The cost of a diff $F=(f_1,\ldots,f_m)$ is defined as
$\cost(F)=\sum_{i=1}^m \cost(f_i)$.
\end{definition}

\begin{definition}[Best Diff]
A diff $F\in\diff(\Rbeg,\Rend,\calF)$
is called a \emph{best diff} between $\Rbeg$ and $\Rend$
under $\calF$
if it has the smallest cost in $\diff(\Rbeg,\Rend,\calF)$;
i.e., for any diff $F'\in\diff(\Rbeg,\Rend,\calF)$,
we have $\cost(F)\leq\cost(F')$.
We also write that $F$ is a best diff in
$\diff(\Rbeg,\Rend,\calF)$.
\end{definition}

Note that if $\diff(\Rbeg,\Rend,\calF)$
is nonempty, then it must contain a best diff,
by the well-ordering principle of integers.

\subsection{Diff Problems}
Next, we define the best diff problem
that is the focal point of this paper.

\begin{definition}[Best Diff Problem]
Fix a family of $(\calA, \calB)$-operations $\calF$.
The best diff problem $\bd(\calF)$ is, given as input:
\begin{denselist}
\item an attribute $\attK$,
\item attribute sets $\calA$ and $\calB$, neither of which contain $\attK$, and
\item two $(\attK,\calA,\calB)$-similar $N$-tuple relations $\Rbeg$ and $\Rend$,
where all values are integers,
\end{denselist}
find and return a best diff between $\Rbeg$ and $\Rend$
under $\calF$ if one exists,
or correctly report that no diffs exist.
\end{definition}

Note that while we
restrict relations $\Rbeg$ and $\Rend$ to integer values
(for simple arguments of representation sizes), conditions
and modifiers are not restricted to integers; real values
can be used.

The following auxiliary definitions are used in proofs.
\begin{definition}[Attribute Values]
For an attribute $\attA$, $V_\attA(\Rbeg,\Rend)$ is the set of all
$\attA$ values in relations $\Rbeg$ and $\Rend$; in other words,
\[V_\attA(\Rbeg,\Rend) = \pi_\attA(\Rbeg)\cup\pi_\attA(\Rend)\]
\end{definition}

\begin{definition}[Boundary and Length]
    Let
    \begin{align*}
        v_\attA^\mathrm{max} &= \max V_\attA(\Rbeg,\Rend)+1\\
        v_\attA^\mathrm{min} &= \min V_\attA(\Rbeg,\Rend)-1
    \end{align*}

    For an operation $f$, define the \emph{length} $\ell(f)$ as
    \[\ell(f)=\begin{cases}
    a-v_\attA^\mathrm{min}&\text{if $f$ has the condition $\attA\leq a$}\\
    v_\attA^\mathrm{max}-a&\text{if $f$ has the condition $\attA\geq a$}\\
    z-a+1&\text{if $f$ has the condition $\attA\in [a,z]$}\\
    \end{cases}\]
    For a sequence of operations
    $F=(f_1,\ldots,f_m)$, define the \emph{total length} of $F$
    as $\ell(F)=\sum_{i\in[m]}\ell(f_i)$.
\end{definition}


\subsection{Families of Operations}
Generally, $(\calA,\calB)$-operations can be simple or complicated.
Given two relations $\Rbeg$ and $\Rend$, one might claim that there is a diff
between them containing the following $(\calA,\calB)$-operation as its
only operation:
\[\left(\attA\in\{2,9,11,23\},\attB\gets\left\lceil|\attB|^{\sqrt{11}}\right\rceil-7\attB^2\right)\]
The given $(\calA,\calB)$-operation has an overfitting condition
and a complicated modifier, which makes it unlikely to be an operation
actually used to transform $\Rbeg$ into $\Rend$ by, say, an accountant
working on this database. Therefore, we would like to limit ourselves
to operations that are relatively simple and are more likely to
correspond to actual scenarios.

We describe families of $(\calA,\calB)$-operations
that are of interest in this paper.
Here, $\attA$ is an attribute from $\calA$,
and $\attB$ is an attribute from $\calB$.

\topic{Condition Types}
We consider conditions $p$ that are conjunctions of
single-attribute clauses,
i.e., statements in the form $p=p_1\wedge \ldots \wedge p_h$,
where the clauses $p_i$ have the same type but are on different attributes.
The condition $p$ on $\calA$ does not necessarily use
all attributes in $\calA$, but must use at least one (cannot be empty).

We consider the following single-attribute clause types.
\begin{center}
\begin{tabular}{c l l c}
    symbol & name & condition & cost\\
\hline
$=$         & equality & $\attA=a$ & $1$\\
$\leq$      & at-most & $\attA\leq a$ & $1$\\
$\lgeq$     & at-most/at-least & $\attA\leq a$ \textbf{or} $\attA\geq a$ & $1$\\
$\range$    & range & $\attA\in [a,z]$ & 1\\
$\rangeunion$ & union-of-ranges & $\attA\in\bigcup_{j=1}^r [a_j,z_j]$ & varies
\end{tabular}
\end{center}
The cost is $1$ \emph{per operation} (not per clause),
except in the union-of-ranges case, where
the cost is $\kappa_0+\kappa_1\sum r$, where $\sum r$
is the sum of number of ranges over all clauses.
Here, $\kappa_0$ and $\kappa_1$ are
non-negative integers to be supplied as input.

For the at-most/at-least clause type, each clause can assume
either of the two subtypes, and it is not required that all clauses
use the same subtype. The clause type with only \emph{at-least}
condition is not explicitly discussed, because it is symmetric to
using the at-most clause type.

\topic{Modifier Types}
We only consider single-attribute modifiers in this paper.
We consider the following modifier types.
\begin{center}
\begin{tabular}{c l l}
    symbol & name & modifier\\
\hline
$\gets$      & assignment & $\attB\gets b$\\
$+$          & increment  & $\attB\gets \attB+b$\\
$\getsadd$   & assignment/increment & $\attB\gets b$ \textbf{or} $\attB\gets \attB+b$\\
$\affine$    & affine & $\attB\gets b\attB + c$
\end{tabular}
\end{center}
The modifier type does not affect the cost of an operation.

\topic{Operations}
The family of $(\calA,\calB)$-operations using condition type $\phi$ and
modifier type $\omega$ is denoted by $\famop{\phi}{\omega}$.
For example, $\famop{\leq}{+}$ is the family of
$(\calA,\calB)$-operations where each operation uses an
\emph{at-most} condition and an \emph{increment} modifier.

\begin{example}
Once again, consider the three versions of the relation
$R$ given in Figure~\ref{fig:ddiffex},
namely $R_1$, $R_2$, and $R_3$.
Let $\calA=\{\attA\}$, $\calB=\{\attA,\attB\}$, so that $\attA$
is a read-write attribute and $\attB$ is a write-only attribute.

Let $f_1$ and $f_2$ be the following operations:
\begin{align*}
f_1&=(\attA\leq 2, \attA\gets 7)\\
f_2&=(\attA\leq 5, \attA\gets 8)
\end{align*}
Here, $f_1$ is in $\famop{\lgeq}{\gets}$
(and also $\famop{\leq}{\gets}$ and
$\famop{\lgeq}{\getsadd}$),
and so is $f_2$.

If $F=(f_1,f_2)$, then $F$ is a diff between $R_2$ and $R_3$
under $\famop{\lgeq}{\gets}$, and $\cost(F)=2$.
Note that $F'=(f_2,f_1)$ is, however, not a diff between $R_2$ and $R_3$
under $\famop{\lgeq}{\gets}$.
\end{example}

\section{Base Case: $\bdI$ problems}
\label{sec:solution-ab-compose}

In this section, we consider a ``base case'' of the best diff
problems in terms of number of attributes and attribute characteristics,
and present its characterization under different families of operations.

The $\bdI(\calF)$ problem is similar to the best diff $\bd(\calF)$
problem, but constrained to one read-only attribute,
one write-only attribute, and no read-write attributes.
Let $\calA=\{\attA\}$ and $\calB=\{\attB\}$,
where $\attA$ and $\attB$ are different attributes.

We also assume that
for any tuple $T_1\in \Rbeg$
and $T_2\in \Rend$, if $T_1.\attK=T_2.\attK$ then $T_1.\attA=T_2.\attA$,
because an $(\calA,\calB)$-operation cannot modify $\attA$ values.
If the assumption does not hold, we can immediately claim that a
diff between $\Rbeg$ and $\Rend$ does not exist.

Table \ref{table:summary1} summarizes the characterization.
The table is roughly ordered according to how ``powerful''
each condition/modifier type is, although it is not necessarily
true that a condition/modifier is a generalization of what precedes it.
We encounter the hardness boundary at the families of operations
$\famop{\lgeq}{+}$ and $\famop{\rangeunion}{\gets}$, where we present
two main $\NP$-hardness results via reductions from different problems.
While the remaining $\NP$-hardness results do not trivially follow
from the two main results, they use similar reductions.
Polynomial-time results are discussed more thoroughly
in Appendix~\ref{sec:polytimeproofs}.

\subsection{With Equality Conditions}
With \emph{equality} ($\attA=a$) conditions, tuples with different $\attA$ values
are independent of each other, in terms of how the $(\calA,\calB)$-operations
affect them. Therefore, the best diff problem under these families of
operations is rather straightforward.
\begin{theorem}
    The
    $\bdI(\famop{=}{\gets})$,
    $\bdI(\famop{=}{+})$,
    $\bdI(\famop{=}{\getsadd})$, and
    $\bdI(\famop{=}{\affine})$ problems
    can be solved in $O(N\log N)$ time.
\label{eqgets}
\label{eqadd}
\label{eqgetsadd}
\label{eqaffine}
\end{theorem}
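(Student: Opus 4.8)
The plan is to exploit the structural simplicity of equality conditions. Because $\attA$ is read-only, every tuple keeps its $\attA$ value fixed across $\Rbeg$ and $\Rend$, and a condition $\attA=a$ selects exactly the tuples whose $\attA$ value is $a$. I would first partition the $N$ tuples into groups by their common $\attA$ value. Two operations whose conditions reference distinct values $a\neq a'$ act on disjoint tuple sets, so they commute; consequently any diff can be reordered so that all operations touching a given group are contiguous, and the problem decomposes into independent per-group subproblems whose costs simply add. Thus it suffices to compute, for each group, the minimum number of operations needed to turn its initial $\attB$ vector (from $\Rbeg$) into its target $\attB$ vector (from $\Rend$), and then sum.

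Next I would analyze a single group. Every operation that can touch it carries the condition $\attA=a$ and therefore applies one modifier to all of its tuples simultaneously, so after any sequence of operations each tuple's value is the image of its initial value under the common composed modifier. For each of the four modifier families I would characterize the set of reachable target vectors and show it is already realized by a single operation: assignment ($\gets$) can only produce vectors constant within the group; increment ($+$) can only produce the initial vector shifted by a common constant; assignment/increment ($\getsadd$) yields exactly the union of these two; and affine ($\affine$) yields exactly the affine images $x\mapsto Bx+C$, since affine maps are closed under composition. In every case a feasible, non-identity transformation is achievable in exactly one operation, so the per-group cost is $0$ when the initial and target vectors already agree and $1$ otherwise. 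The corresponding feasibility test---target constant for $\gets$, constant difference for $+$, either condition for $\getsadd$, and consistency with a single affine map for $\affine$---runs in time linear in the group size, and a changed group that fails its test certifies that no diff exists.

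For optimality I would pair this per-group upper bound with a matching lower bound: since each operation's condition selects a single $\attA$ value, one operation touches only one group, so a group whose initial and target vectors differ must be acted on by at least one dedicated operation. Distinct changed groups therefore require distinct operations, giving $\cost \geq$ the number of changed groups. Hence the best diff has cost exactly the number of feasible changed groups, realized by emitting one operation per such group in any order.

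Finally, for the running time I would join $\Rbeg$ and $\Rend$ on the primary key $\attK$ to assemble, for each tuple, the triple consisting of its $\attA$ value, its initial $\attB$, and its target $\attB$, then sort and group by $\attA$; sorting dominates at $O(N\log N)$, while all per-group tests and the emission of operations together take $O(N)$. The main obstacle is the affine case: I must argue carefully that iterating affine operations produces nothing beyond a single affine map, and handle the degenerate group in which all initial $\attB$ values coincide, where $B$ is underdetermined and feasibility reduces to the targets being constant. Since real-valued $b$ and $c$ are permitted, solving for $B$ and $C$ from two tuples with distinct initial values and verifying the remaining tuples causes no representational difficulty.
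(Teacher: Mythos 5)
Your proposal is correct and follows essentially the same route as the paper's proof: group tuples by their $\attA$ value (via sorting, giving the $O(N\log N)$ bound), observe that equality conditions make groups independent and that one operation per changed group suffices and is necessary, and emit one appropriate operation per group when feasible. The paper states this only as a brief sketch; your write-up supplies the per-modifier feasibility characterizations (constant target for $\gets$, constant shift for $+$, their union for $\getsadd$, closure of affine maps under composition for $\affine$) and the matching lower bound, which are exactly the details the paper leaves implicit.
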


\subsection{With At-most Conditions}
With \emph{at-most} ($\attA\leq a$) conditions, we can always reorder
the operations within a diff (with some modifications) so that
they affect the tuples in a certain order. Such reordering
allows for polynomial time algorithms under all families
of operations of interest.

More precisely, the at-most condition and the modifiers
permit the theorems to utilize this property:
if there is a best diff, there must
be a best diff $F=(f_1,\ldots,f_m)$
in which for all $i,j\in[m]$, if $i<j$ and
$f_i$ has condition $\attA\leq a_i$ and
$f_j$ has condition $\attA\leq a_j$,
then $a_i > a_j$.

\begin{theorem}
    The $\bdI(\famop{\leq}{\gets})$ problem
    can be solved in $O(N\log N)$ time.
\label{leqgets}
\end{theorem}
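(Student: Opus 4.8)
The plan is to exploit the reordering property stated just above the theorem: it suffices to search over diffs $F=(f_1,\dots,f_m)$ whose at-most thresholds are strictly decreasing, $a_1>a_2>\cdots>a_m$. First I would observe that, because $\attA$ is read-only and the thresholds are nested, the operations affecting a tuple with read value $v$ are exactly a prefix $f_1,\dots,f_j$ with $j=\max\{i:a_i\ge v\}$; since the modifier is pure assignment, the last one wins, so the tuple's final $\attB$ value is $b_j$. Hence the outcome is a step function over the distinct $\attA$ values sorted increasingly: the bottom bands $(a_{i+1},a_i]$ each receive a single constant $b_i$, while the top segment $\{v>a_1\}$ is never touched and retains its original $\attB$ values. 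This immediately yields the feasibility conditions—every $\attA$-group that is touched must have a single common target $\attB$ value, and every group left in the untouched top segment must already satisfy target $=$ original for all of its tuples—all checkable in one linear scan after grouping.

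Next I would reduce the optimization to a run-counting problem. Given a choice of cutoff $a_1$ (equivalently, of the untouched top segment), the touched groups, read in increasing $\attA$ order, carry their common target values, and I claim the minimum number of operations realizing this portion equals the number of maximal constant-value runs of that sequence. The construction is direct: process runs from the top of the touched region downward, emitting one operation per run whose threshold is the largest $\attA$ still in the run and whose assignment is the run's value; since later operations overwrite earlier ones only on smaller prefixes, this reproduces the step function exactly. For the matching lower bound, note that any decreasing-threshold diff with $m$ operations produces at most $m$ nonempty bands, hence a step function with at most $m$ runs over its touched region; thus $m$ is at least the number of runs of that region.

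It remains to optimize over the cutoff. Since every valid canonical diff has at least as many operations as the run count of its own touched region, the global optimum equals the minimum run count over all feasible cutoffs, and each diff's $m$ is at least this minimum. Leaving more correct groups untouched at the top can only delete whole top runs from the touched region, so I would take the untouched segment to be the maximal top segment of groups that are all correct (target $=$ original), which must in any case contain every non-uniform group; a short argument shows this choice is simultaneously feasible (every touched group is uniform) and attains the minimum run count, hence is globally optimal. Grouping, the consistency and correctness checks, locating the optimal cutoff, and counting runs are each single linear passes, while the initial sort of the tuples by $\attA$ costs $O(N\log N)$, giving the claimed bound.

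The main obstacle is the joint feasibility-and-optimality bookkeeping around the untouched top segment: I must argue carefully that a non-uniform group can be satisfied only by being left untouched, that the untouched region is forced to be a top segment by the nesting of at-most conditions, and that pushing the cutoff as high as correctness permits is never suboptimal. The combinatorial core—the ``$m$ bands give at most $m$ runs'' lower bound together with the matching top-down construction—is clean; the delicate part is the corner-case analysis pinning down exactly when a diff exists and which cutoff attains the minimum cost.
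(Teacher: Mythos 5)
Your proposal is correct and takes essentially the same route as the paper: the paper establishes the same decreasing-threshold canonical form (via a short removal/exchange argument) and then greedily emits one assignment operation per needed value change while scanning tuples in decreasing $\attA$ order, which is exactly your one-operation-per-run construction after sorting. Your run-counting lower bound and the feasibility/cutoff bookkeeping simply make explicit what the paper compresses into ``create an appropriate operation to modify the $\attB$ value to the right value, if possible.''
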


\begin{theorem}
    The $\bdI(\famop{\leq}{+})$ problem
    can be solved in $O(N\log N)$ time.
\label{leqadd}
\end{theorem}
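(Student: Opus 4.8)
The plan is to exploit the fact that increment modifiers commute, which collapses the ordering issue entirely for this family. Write each operation as $f_i=(\attA\le a_i,\ \attB\gets\attB+b_i)$. Since addition is commutative and $\attA$ is read-only, the net effect of a diff $F=(f_1,\dots,f_m)$ on a tuple with read value $\attA=v$ is simply to add $\sum_{i:\,a_i\ge v}b_i$ to its $\attB$, independent of the order of the $f_i$. In particular, any two tuples sharing the same $\attA$ value are incremented identically, so a diff can exist only if, within each group of tuples having a common $\attA$ value, the required change $\delta(v)$ (the $\attB$-value in $\Rend$ minus that in $\Rbeg$) is the same. My first step is therefore to group tuples by $\attA$, verify this consistency condition, and report that no diff exists otherwise.

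Assuming consistency, let the distinct read values be $v_1<\dots<v_k$ with required changes $\delta(v_1),\dots,\delta(v_k)$. An operation with condition $\attA\le a$ affects exactly the prefix $\{v_1,\dots,v_\tau\}$, where $v_\tau$ is the largest read value not exceeding $a$; hence, up to its magnitude $b$, every operation is a \emph{prefix increment}. Finding a best diff thus reduces to writing the target vector $(\delta(v_1),\dots,\delta(v_k))$ as a sum of as few prefix increments as possible. Introducing the boundary quantities
\[B_t=\delta(v_t)-\delta(v_{t+1}),\qquad \delta(v_j)=\sum_{t\ge j}B_t,\]
with the convention $\delta(v_{k+1})=0$, I would emit, for each index $t$ with $B_t\neq0$, the single operation $(\attA\le v_t,\ \attB\gets\attB+B_t)$. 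This reconstructs the target exactly and has cost equal to the number of nonzero $B_t$.

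For optimality I would argue a matching lower bound. The key structural observation is that an operation whose prefix ends at index $\tau$ adds its magnitude to $\delta(v_j)$ for every $j\le\tau$ and so contributes to a consecutive difference $\delta(v_j)-\delta(v_{j+1})$ only when $j=\tau$; that is, each at-most operation touches exactly one boundary between adjacent distinct read values. Consequently, every index $t$ with $B_t\neq0$ forces at least one operation whose prefix ends precisely at $t$, and operations associated with distinct boundaries are distinct. This yields a lower bound equal to the number of nonzero $B_t$, matching the construction, so the emitted diff is a best diff.

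The dominant cost is sorting and grouping the $\attA$ values, giving $O(N\log N)$; the consistency check, the computation of the $B_t$, and the emission of operations are all linear. I expect the main obstacle to be the optimality argument rather than the algorithm itself: one must isolate the threshold-to-boundary correspondence cleanly and rule out the possibility that a cleverly chosen set of operations could settle several boundaries with fewer operations. Once that structural fact is in hand, both the lower bound and the correctness of the greedy construction follow by routine bookkeeping.
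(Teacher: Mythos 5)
Your proof is correct and follows essentially the same route as the paper's: both exploit commutativity of increment operations to collapse the ordering issue, sort tuples by $\attA$, and emit one operation per boundary between adjacent distinct $\attA$ values where the required change differs, giving $O(N\log N)$. Your explicit matching lower bound (each at-most operation touches exactly one boundary, so the number of operations is at least the number of nonzero $B_t$) is a nice addition that makes rigorous the optimality claim the paper's terse proof leaves implicit.
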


\begin{theorem}
    The $\bdI(\famop{\leq}{\getsadd})$ problem
    can be solved in $O(N\log N)$ time.
\label{leqgetsadd}
\end{theorem}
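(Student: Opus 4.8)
The plan is to exploit the reordering property stated above for at-most conditions: it suffices to search for a best diff $F=(f_1,\dots,f_m)$ whose thresholds are strictly decreasing, $a_1>a_2>\cdots>a_m$. Under such an $F$, a tuple with $\attA$-value $v$ is affected by exactly the prefix $f_1,\dots,f_{k(v)}$, where $k(v)=|\{i:a_i\ge v\}|$; thus smaller $\attA$-values undergo more operations. First I would observe that any prefix of $\getsadd$ modifiers composes to one of two kinds of net map on $\attB$: if the prefix contains an assignment, the net effect is ``set $\attB$ to a constant $C$'' (the value written by the last assignment plus the increments after it), and otherwise it is ``add a constant $d$ to $\attB$''. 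Since the first (lowest-index, hence largest-threshold) assignment $f_p$ is applied to a tuple exactly when $v\le a_p$, the values split cleanly: every $\attA$-value $\le a_p$ is in ``set'' mode, denoted $S(C)$, and every value $>a_p$ is in ``add'' mode, denoted $A(d)$. Ordering the distinct $\attA$-values as $v_1<\cdots<v_t$, the set-mode values therefore form a prefix $v_1,\dots,v_s$ and the add-mode values the complementary suffix.

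Next I would group the tuples by $\attA$-value. All tuples sharing a value $v_j$ receive the same net map, so the group $G_j$ is realizable in set mode iff its tuples share a common final value $\phi_j$, and in add mode iff they share a common delta $\delta_j$ (final minus initial); if neither holds, no diff exists. The minimum number of operations then equals the number of boundaries at which the required net map changes as $v$ decreases, plus a charge for the outermost map, because (i) passing from $A(d)$ to $A(d')$ costs one increment and passing from $A(d)$ or $S(C)$ to $S(C')$ costs one assignment (or one increment), so each differing adjacent pair costs exactly one operation; (ii) equal consecutive maps cost nothing; and (iii) a set-mode map can never revert to add-mode, since an assignment once present in the prefix persists, which both forbids a set-to-add boundary and matches the prefix/suffix split above. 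I would then find the cheapest mode assignment, subject to feasibility and to set-mode occupying a prefix of $v_1,\dots,v_t$, via a left-to-right dynamic program with the two states $S$ and $A$: $\mathrm{dpSet}[j]=\mathrm{dpSet}[j-1]+[\phi_j\ne\phi_{j-1}]$ and $\mathrm{dpAdd}[j]=\min(\mathrm{dpAdd}[j-1]+[\delta_j\ne\delta_{j-1}],\ \mathrm{dpSet}[j-1]+1)$, with $\mathrm{dpSet}[1]=\mathrm{dpAdd}[1]=0$ and entries set to $\infty$ where the required uniformity fails. The answer is $\min(\mathrm{dpSet}[t]+1,\ \mathrm{dpAdd}[t]+[\delta_t\ne0])$, the trailing terms charging for the single outermost operation unless that map is the identity $A(0)$.

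Sorting and grouping the tuples by $\attA$ costs $O(N\log N)$, and both the dynamic program and the reconstruction of the explicit operation sequence from the DP choices take $O(N)$, yielding the claimed $O(N\log N)$ bound for $\bdI(\famop{\leq}{\getsadd})$. The main obstacle is the structural analysis of the first paragraph: because $\getsadd$ freely mixes assignment and increment, reordering operations is not effect-preserving in general, so I must carefully justify, via the stated reordering property together with the ``last-assignment'' normal form for a composed prefix, that restricting attention to strictly decreasing thresholds and to the clean set-prefix/add-suffix structure loses no generality, and that every required boundary is genuinely realizable by a single $\getsadd$ operation.
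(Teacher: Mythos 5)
Your second and third paragraphs---the part the paper does \emph{not} spell out---are correct and actually more detailed than the paper's own treatment: the paper's proof ends with the one-line remark that one should sort by $\attA$ and ``compute the smallest number of operations required using dynamic programming,'' whereas you work out the composed-prefix normal form (set mode $S(C)$ versus add mode $A(d)$), the fact that set mode must occupy a prefix of the sorted values, the boundary-counting lower bound, the matching one-operation-per-boundary construction, and the explicit recurrence with the $+1$ versus $+[\delta_t\neq 0]$ outermost charge. I checked these against the structure of a decreasing-threshold diff and they are sound.

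The genuine gap is exactly the one you flag in your last paragraph: you take the decreasing-threshold reordering property as given, but it is not an established lemma you may cite---in the paper it is only a forward-looking description of what the proofs must establish, and for $\famop{\leq}{\getsadd}$ proving it \emph{is} essentially the entire content of the paper's proof. It is nontrivial for precisely the reason you name: assignments and increments do not commute, so an inversion (a pair $i<j$ with thresholds $a_i\leq a_j$) cannot always be repaired by swapping. The paper's argument takes a best diff with the fewest inversions and does a case analysis on the modifier types of the inverted pair. When $f_j$ is an assignment, $f_i$ is redundant (everything it touches is later overwritten) and can be deleted, as in the pure-assignment case. When both are increments, an adjacent inverted pair can be swapped (adjacency is forced by minimality of $j-i$, since an intervening operation would itself create an inversion of another type). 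The essential mixed case is $f_i=(\attA\leq a_i,\attB\gets b_i)$ followed by $f_j=(\attA\leq a_j,\attB\gets\attB+b_j)$ with $a_i\leq a_j$: here the pair must be \emph{rewritten}, replacing it by $f_j$ followed by $g=(\attA\leq a_i,\attB\gets b_i+b_j)$, which preserves the result and the cost while reducing the inversion count. Without carrying out this exchange argument (or an equivalent one built on your ``last-assignment'' idea), the restriction of the search space to strictly decreasing thresholds---and hence the soundness of your DP---is unsupported. Everything downstream of that lemma in your write-up is fine, but the missing piece is the piece the paper's proof is actually about.
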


\begin{theorem}
    The $\bdI(\famop{\leq}{\affine})$ problem
    can be solved in $O(N\log N)$ time.
\label{leqaffine}
\end{theorem}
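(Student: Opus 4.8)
The plan is to exploit the fact that $\attA$ is read-only and that every condition has the form $\attA \le a$, so an operation's effect on a tuple depends only on that tuple's $\attA$ value. First I would group the tuples of $\Rbeg$ (and correspondingly $\Rend$) by their $\attA$ value and sort the distinct values $\alpha_1 < \cdots < \alpha_k$; this costs $O(N\log N)$. Within a group, any diff applies the \emph{same} composite affine map to every tuple, so each group constrains that composite: writing the map as $x \mapsto bx+c$, every (before, after) pair $(\beta,\beta')$ demands $b\beta + c = \beta'$. Hence a group admits either a unique feasible map (when it contains at least two distinct before-values, which pin down $b$ and $c$), a one-parameter family of maps (when all before-values coincide, giving a single line in $(b,c)$-space), or no map at all (inconsistent pairs), in which case I immediately report that no diff exists. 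Write $g_j$ for the composite map chosen for group $j$.

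Next I would invoke the normal-form property stated for the at-most families: there is a best diff whose thresholds strictly decrease along the sequence. Under this ordering the operations affecting a group of value $\alpha_j$ form a prefix of the sequence, and these prefixes are nested as $\alpha_j$ decreases; consequently $g_j = h_j \circ g_{j+1}$ (with $g_{k+1}=\mathrm{id}$), where $h_j$ is the composite of the operations whose threshold lies in $[\alpha_j,\alpha_{j+1})$. Because the \emph{affine} modifier can realize an \emph{arbitrary} affine map in a single operation, each transition with $g_j \ne g_{j+1}$ is implementable by exactly one operation, whose threshold I place at any integer in $[\alpha_j,\alpha_{j+1})$ --- a nonempty set since the $\alpha$'s are distinct integers --- while transitions with $g_j = g_{j+1}$ cost nothing. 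The matching lower bound follows from the normal form, since a nonempty threshold band $[\alpha_j,\alpha_{j+1})$ is needed to change the composite from $g_{j+1}$ to a different $g_j$. Thus $\cost$ equals the number of indices $j \in [k]$ with $g_j \ne g_{j+1}$, minimized over the feasible choices of the $g_j$.

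The remaining task is to choose each $g_j$ from its feasible set (a point or a line in $(b,c)$-space) so as to minimize the number of adjacent disagreements in the sequence $g_1,\ldots,g_k,g_{k+1}=\mathrm{id}$. This is a minimum-run-partition problem: a maximal block of equal consecutive maps is valid exactly when the feasible sets of its groups share a common point, and the intersection of a line with a point or another line is computed in $O(1)$ and shrinks monotonically as a block grows. A left-to-right greedy scan that extends the current block while this running intersection stays nonempty, and opens a new block (paying one operation) otherwise, yields the minimum number of blocks and hence the minimum $\cost$; the scan is $O(k) = O(N)$, so the overall running time is dominated by the initial sort at $O(N\log N)$.

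The main obstacle I anticipate is realizability when the incoming composite is a \emph{constant} map (slope $b=0$): post-composing further affine operations can never restore a nonzero slope, so once $g_{j+1}$ is constant, every $g_{j'}$ with $j' \le j$ must also be constant. Thus the non-constant composites must occupy an up-closed set of $\attA$-values (the largest ones), and feasibility additionally requires that no forced-non-constant group have a smaller value than a forced-constant group; the greedy scan must respect this propagation constraint when spending a group's one-parameter freedom. Verifying that the greedy stays optimal under this extra constraint --- and that ``one operation per transition'' remains valid precisely because a constant-to-constant transition is still realizable by a single assignment modifier --- is the delicate part of the argument, the rest being the routine bookkeeping of grouping, intersecting, and sorting.
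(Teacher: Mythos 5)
Your overall route is the same as the paper's: sort and group by $\attA$, invoke the decreasing-threshold normal form, and segment the groups into blocks that can share one affine map, treating constant maps ($b=0$) as the special case; your reduction of the optimal cost to the minimum number of adjacent indices with $g_j\neq g_{j+1}$ is correct and is in fact a cleaner statement of what the paper's dynamic program computes. The genuine gap is the algorithm you commit to. The greedy ``extend the current block while the running intersection stays nonempty'' is wrong, and the constant-map propagation you flag at the end is exactly what breaks it --- it is not a verification detail but the crux. Counterexample: let group $1$ (smallest $\attA$) contain before/after $\attB$ pairs $(0,0)$ and $(1,2)$, so its only feasible map is $x\mapsto 2x$; let groups $2$ and $3$ each contain one pair, $(5,7)$ and $(6,7)$ respectively. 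The lines of groups $2$ and $3$ meet only in the constant map $x\mapsto 7$, so your greedy merges them into one block; but then group $1$ would need a non-constant composite lying below a constant one, which is unrealizable, since $h\circ g$ is constant whenever $g$ is. The greedy's two-transition segmentation is infeasible, and the true optimum is three operations. Conversely, a greedy that avoids constant maps as long as possible also fails: with just the two groups $(5,7)$ and $(6,7)$, the single operation $(\attA\le\alpha_2,\ \attB\gets 7)$ has cost $1$, while any choice of non-constant maps costs $2$.

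So neither one-pass greedy is correct; you must optimize over the index $j^*$ at which the composites switch from non-constant (the top region, containing the identity anchor) to constant (the bottom region). This still fits in $O(N\log N)$: run the greedy from the top where a block is feasible iff its intersection contains a \emph{non-constant} map (this property is hereditary, so greedy is optimal for every top segment simultaneously, giving a count $G(j)$ for covering groups $k+1$ down to $j$); in the constant region there is no freedom at all --- the only constant map available to group $j$ is $x\mapsto\beta'_j$, its common after-value --- so its block count $C(j)$ is just the number of runs of equal after-values among groups $j$ down to $1$, computed by one scan, with feasibility requiring that no group whose feasible set is a single non-constant point lies in that region; finally return $\min_{j^*}\bigl(G(j^*+1)+C(j^*)\bigr)$. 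This two-mode optimization (equivalently, the ``taking extra care of constant transformations'' state in the paper's dynamic program) is the missing idea; without it your algorithm returns infeasible or suboptimal segmentations on the instances above.
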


\subsection{With At-most/At-least Conditions}
With \emph{at-most/at-least} ($\attA\leq a$ \textbf{or} $\attA\geq a$) conditions, the arguments
from the previous section cannot be directly reused.
In fact, the introduction of this new condition type is where
we first encounter the hardness boundary for most families
of operations.

\subsubsection{With Assignment Modifiers}
There is still a polynomial time algorithm for the family of operations
with the \emph{assignment} modifier, following the reasoning that
it is possible to avoid having a tuple selected by both an \emph{at-most}
condition and an \emph{at-least} condition.

\begin{theorem}
    The $\bdI(\famop{\lgeq}{\gets})$ problem
    can be solved in $O(N^2)$ time.
\label{lgeqgets}
\end{theorem}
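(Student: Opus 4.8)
The plan is to reduce $\bdI(\famop{\lgeq}{\gets})$ to a one-dimensional interval-painting problem, establish a structural normal form for best diffs, and then solve the normalized problem by dynamic programming.

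First I would sort the tuples by their (read-only) $\attA$ value. Since every condition depends only on $\attA$, any two tuples sharing an $\attA$ value are selected by exactly the same operations, so they can never be driven to different $\attB$ values once any operation touches them. Hence a diff exists only if, within each group of equal $\attA$, all target $\attB$ values agree (or the group is already correct and can be left untouched); this is an $O(N\log N)$ check that either fails outright or collapses the instance to a linear array of $n\le N$ positions $a_1<\dots<a_n$, each carrying an initial and a target value. In this view an $\famop{\lgeq}{\gets}$-operation is exactly a \emph{stroke} that overwrites a prefix ($\attA\le a$) or a suffix ($\attA\ge a$) of the array with a constant, later strokes overwriting earlier ones, and the cost of a diff is just its number of strokes.

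The crux is a normalization lemma: I would show that some best diff selects no position by \emph{both} an at-most and an at-least condition over the whole sequence. The argument is an exchange argument on an arbitrary best diff. Two sub-claims drive it: (i) among strokes of the same type, a longer prefix applied after a shorter one makes the shorter redundant, so we may assume same-type strokes occur in decreasing-length (hence nested) time order; and (ii) if some position is overwritten by strokes of both types, the globally-last stroke covering it fixes its final value, and one can reassign or delete the now-superfluous opposite-type coverage without increasing the stroke count. Pushing this through shows the positions touched by at-most strokes form a prefix $[1,p]$ and those touched by at-least strokes form a disjoint suffix $[p+1,n]$. Verifying that the exchanges never increase cost while preserving the final relation, in the face of overwriting and interleaving in time, is the main obstacle and the step that demands the most care.

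Once the normal form is in hand, the two stroke families have disjoint support---$[1,p]$ versus $[p+1,n]$---so they commute and the total cost splits additively into the cost of painting the prefix region with at-most strokes plus the cost of painting the suffix region with at-least strokes. Each one-sided subproblem I would solve by dynamic programming: using the nested structure from sub-claim (i), the optimal prefix painting corresponds to covering the target by maximal monochromatic runs while optionally leaving an already-correct top suffix unpainted, which yields the minimum stroke count for every prefix length; the suffix case is symmetric. Finally I would minimize over the $n$ split points $p$. Computing each one-sided subproblem cost and sweeping the split point costs $O(N)$ per split, for $O(N^2)$ overall, dominating the $O(N\log N)$ preprocessing and matching the stated bound.
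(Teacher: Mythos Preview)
Your proposal is correct and follows essentially the same route as the paper: prove a non-overlap normal form (at-most operations touch a prefix, at-least operations touch a disjoint suffix), then enumerate the breakpoint and solve each side with the one-sided $\famop{\leq}{\gets}$ algorithm, giving $O(N^2)$. The only notable stylistic difference is in how the normal form is argued: the paper picks a best diff of \emph{smallest total length} and observes that any overlapping pair can be shrunk on the earlier side without changing the result (since assignments make the final value depend only on the last covering operation, and the later member of the pair still covers the overlap), contradicting minimality of total length; your exchange argument via sub-claims (i) and (ii) reaches the same conclusion but is a bit more ad hoc. One small point to tighten: a group of equal-$\attA$ tuples with disagreeing targets is not merely ``left untouched''---it \emph{cannot} be touched by any operation, which forces the breakpoint and may render some splits infeasible; this is handled implicitly once you let each one-sided subproblem report infeasibility, but it is worth stating.
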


\subsubsection{With Increment Modifiers}
This is the first time we encounter the hardness boundary.
Despite the fact that the operations are commutative, we cannot
utilize the same techniques as we did for other families of operations.

\begin{theorem}
    The $\bdI(\famop{\lgeq}{+})$ problem
    is $\NP$-hard.
\label{lgeqadd}
\end{theorem}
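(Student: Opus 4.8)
The plan is to reduce from \textsc{Subset-Sum}. First I would normalize the instance so that the problem becomes purely combinatorial. Since \attA is read-only and every modifier is an increment $\attB\gets\attB+b$, the operations commute, and the only thing a diff determines is, for each tuple, the \emph{total} amount added to \attB. Grouping tuples by their (fixed) \attA value and sorting the distinct values $a_1<\cdots<a_n$, an operation with condition $\attA\le a$ adds a constant to a \emph{prefix} of the groups and one with $\attA\ge a$ adds a constant to a \emph{suffix}; tuples sharing an \attA value must receive equal increments, so after checking this consistency the instance collapses to a single target vector $d=(d_1,\ldots,d_n)$ of required increments. Thus a best diff for $\famop{\lgeq}{+}$ is exactly a way to write $d$ as the fewest possible terms $b\cdot\mathbf 1_{[1,k]}$ and $b\cdot\mathbf 1_{[k,n]}$ (with real $b$).

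Next I would pass to the first-difference representation $g_i=d_i-d_{i-1}$ (with $d_0=d_{n+1}=0$), an integer vector with $\sum_i g_i=0$. A prefix term changes $g$ only at position $1$ and one position in $\{2,\ldots,n+1\}$, while a suffix term changes $g$ only at position $n+1$ and one middle position; reading each term as an edge, the vectors achievable from an edge set are precisely those supported on the covered positions and summing to zero on each connected component. Hence a minimum solution may be taken to be a forest, of size (covered positions) $-$ (number of trees). The crucial structural observation is that positions $2,\ldots,n$ are pairwise non-adjacent and attach only to the two ``hubs'' $1$ and $n+1$, so any forest has at most two non-trivial trees. Letting $q$ be the number of nonzero middle differences, each such position needs its own incident edge (a lower bound of $q$), while a single tree through both hubs always attains $q+1$; therefore the optimum is $q$ or $q+1$, and it equals $q$ exactly when the nonzero middle differences can be split into two groups whose sums equal the prescribed hub demands $-g_1$ and $-g_{n+1}$.

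Finally I would build the reduction. Given a \textsc{Subset-Sum} instance with positive weights $w_1,\ldots,w_q$ and target $T$ (with $0<T<W:=\sum_j w_j$), create $n=q+1$ single-tuple groups, set every \attB value in \Rbeg to $0$, and choose \Rend so that the middle differences are $g_{j+1}=w_j$ while the left-hub demand is $g_1=-T$ (concretely $d_1=-T$ and $d_i=d_1+\sum_{j<i}w_j$), which forces $-g_{n+1}=W-T$. By the characterization above, the best diff has cost $q$ iff some subset of the $w_j$ sums to $T$, and cost $q+1$ otherwise, so computing the optimum decides \textsc{Subset-Sum}. The main obstacle is the structural lemma of the second paragraph---especially the two-hub argument capping the number of non-trivial components at two and pinning the optimum to $\{q,q+1\}$; the rest (commutativity, the grouping step, integrality of the constructed relation, and ruling out the degenerate $T\in\{0,W\}$) is routine. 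As a byproduct this same characterization yields the companion $+1$-approximation of Theorem~\ref{lgeqaddapprox}, since the cost-$(q+1)$ single-tree diff is trivial to compute.
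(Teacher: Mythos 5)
Your proposal is correct, and the reduction instance you construct is essentially the paper's own: the paper also reduces from \textsc{SubsetSum} by laying out one tuple per element plus one extra, with $\attB$ targets given by prefix sums of $(-t, s_1,\ldots,s_n)$, so that the optimum is $n$ exactly when a subset sums to $t$ and $n+1$ otherwise. Where you genuinely diverge is in the correctness argument. The paper works directly on diffs with exchange arguments: it defines a ``gap operation'' at each index $k$ (condition $\attA\leq k-1$ or $\attA\geq k$), shows by merging and splitting operations that some best diff has exactly one gap operation per gap with a forced increment ($\pm s_k$), and then reads off the subset from which side each gap operation uses, looking at the total increment received by the leftmost tuple. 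You instead pass to the first-difference vector and recast each operation as an edge joining one of two hubs (positions $1$ and $n+1$) to another position, so that feasibility becomes the standard ``sum zero on each connected component'' condition for flows; minimality then forces a forest with at most two non-trivial trees, pinning the optimum to $\{q, q+1\}$ with cost $q$ attained precisely by the two-star split matching the hub demands $T$ and $W-T$. The two arguments encode the same combinatorics (your ``edge incident to middle position $k$'' is the paper's ``gap operation at $k$''), but your packaging buys a clean closed-form characterization of the optimum and, as you note, yields the additive $1$-approximation of Theorem~\ref{lgeqaddapprox} as an immediate corollary (the single spanning tree), which the paper proves separately by folding $\attA\geq a$ operations into $\attA\leq a-1$ operations plus one global increment. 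The paper's exchange-style lemmas, on the other hand, are reused almost verbatim for the $\famop{\lgeq}{\getsadd}$ and range-condition hardness proofs, a modularity your graph formulation would need extra work to replicate since assignment modifiers break the linear-algebraic structure.
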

In order to prove Theorem~\ref{lgeqadd},
we provide a polynomial-time reduction from \textsc{SubsetSum},
which is a known $\NP$-hard problem, defined as follows~\cite{gareyjohnson}.
\begin{definition}[SubsetSum]
    The \textsc{SubsetSum} decision problem is, given a set $S=\{s_1,\ldots,s_n\}$ of
    positive integers, and a positive integer $t$, determine whether
    there exists a subset $T\subseteq S$ such that the sum of all elements in $T$
    equals $t$.
\end{definition}
Consider an instance of the \textsc{SubsetSum} problem with a set
$S=\{s_1,\ldots,s_n\}$ of positive integers and a positive integer $t$.
The reduction is as follows: let $s_0=-t$ and
\begin{align*}
    \Rbeg &= \left\{(\attK=k, \attA=k, \attB=0)  \mid k\in\{0,\ldots,n\}\right\}\\
    \Rend &= \left\{(\attK=k, \attA=k, \attB=b_k)\mid k\in\{0,\ldots,n\}\right\}
\end{align*}
where $b_k=\sum_{\ell=0}^k s_\ell$.
This reduction takes polynomial time. The claim is that it is
a positive instance of \textsc{SubsetSum} if and only if the
best diff between $\Rbeg$ and $\Rend$ under $\famop{\lgeq}{+}$
has cost $n$.
We show the correctness of this reduction via a series of lemmas.
Throughout this subsection, $\Rbeg$ and $\Rend$ refer to the sets
of tuples from the reduction as described here.

\begin{lemma}
    Operations in $\famop{\lgeq}{+}$ are commutative.
\label{lgeqcommute}
\end{lemma}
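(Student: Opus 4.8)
The plan is to exploit two structural features of the family $\famop{\lgeq}{+}$: every condition reads only the attribute $\attA$, which no operation in this family ever writes (in the $\bdI$ setting $\attA$ is read-only), and every modifier increments $\attB$ by a fixed constant. Together these reduce the claim to the commutativity of integer addition applied tuple-by-tuple. It suffices to show $f_1(f_2(R))=f_2(f_1(R))$ for any two operations $f_1,f_2\in\famop{\lgeq}{+}$ and any relation $R$ with schema $\{\attK\}\cup\calA\cup\calB$, since a general relation is treated independently per tuple.

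First I would fix an arbitrary tuple $T$ with $\attA$-value $\alpha$ and $\attB$-value $\beta$, and write $f_1=(p_1,\ \attB\gets\attB+b_1)$ and $f_2=(p_2,\ \attB\gets\attB+b_2)$. The key observation is that whether $T$ satisfies a condition $p_i$ depends only on $\alpha$, and since $\alpha$ is never altered by any operation in the family, this truth value is unaffected by the prior application of the other operation. Hence the membership of $T$ in the selection set of each operation is determined independently of the order of application.

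Next I would compute the $\attB$-value of $T$ after applying $f_1$ then $f_2$: it equals $\beta$, plus $b_1$ if $\alpha$ satisfies $p_1$, plus $b_2$ if $\alpha$ satisfies $p_2$. Applying $f_2$ then $f_1$ yields the same three summands in the opposite order, so the two totals agree by commutativity of integer addition. Since the $\attK$- and $\attA$-values of $T$ are untouched throughout, the resulting tuples are identical; as $T$ was arbitrary, $f_1(f_2(R))=f_2(f_1(R))$, establishing the lemma.

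There is no genuine obstacle here; the only point requiring care is the justification that each condition's truth value is stable across applications, which rests entirely on $\attA$ being read-only. I would flag this explicitly, since the analogous claim fails once $\attA$ is read-write, and it is precisely this commutativity that lets the subsequent \textsc{SubsetSum} reduction succeed: because order is irrelevant, only the multiset of increments applied to each tuple matters, which is exactly what makes the encoding of subset sums faithful.
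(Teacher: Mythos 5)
Your proof is correct and takes the same route as the paper, whose one-line argument appeals to exactly the two facts you isolate: commutativity of addition and the fact that attributes in $\calA$ are never modified by any $(\calA,\calB)$-operation. Your tuple-by-tuple expansion is simply a fully written-out version of that same argument.
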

\begin{proof}
    This follows immediately from commutativity of addition and the fact that
    attributes in $\calA$ never change as a result of an $(\calA,\calB)$-operation.
\end{proof}

\begin{lemma}
    $\diff(\Rbeg,\Rend,\famop{\lgeq}{+})$ is nonempty,
    and if $F$ is its best diff,
    then $\cost(F)\leq n+1$.
\label{lgequpper}
\end{lemma}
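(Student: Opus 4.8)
The plan is to exhibit a single explicit diff of cost $n+1$; since any diff simultaneously witnesses nonemptiness of $\diff(\Rbeg,\Rend,\famop{\lgeq}{+})$ and upper-bounds the cost of a best diff, both claims follow at once. The construction exploits the prefix-sum structure of the targets $b_k=\sum_{\ell=0}^{k}s_\ell$: consecutive targets differ by $b_k-b_{k-1}=s_k$ (taking $b_{-1}=0$), so the increments we need to realize are exactly the $s_k$.

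Concretely, I would define, for each $k\in\{0,\ldots,n\}$, the operation
\[
f_k=\bigl(\attA\geq k,\ \attB\gets\attB+s_k\bigr),
\]
which lies in $\famop{\lgeq}{+}$ because its condition is an at-least clause and its modifier is an increment; each such operation has cost $1$. The candidate diff is then $F=(f_0,\ldots,f_n)$, of cost $n+1$.

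The key step is to verify that $F$ is indeed a diff between $\Rbeg$ and $\Rend$. Since every tuple carries $\attA=k$ with $k\geq 0$, tuple $j$ is selected by $f_k$ precisely when $j\geq k$, i.e.\ for $k\in\{0,\ldots,j\}$. Each such selection adds $s_k$ to that tuple's $\attB$, and by Lemma~\ref{lgeqcommute} the order of application is immaterial, so the net change to tuple $j$ is $\sum_{k=0}^{j}s_k=b_j$. Starting from $\attB=0$ in $\Rbeg$, tuple $j$ therefore ends at value $b_j$, matching $\Rend$ exactly.

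This establishes that $\diff(\Rbeg,\Rend,\famop{\lgeq}{+})$ is nonempty and contains a diff of cost $n+1$, whence any best diff $F$ satisfies $\cost(F)\leq n+1$. I do not expect a genuine obstacle here: the content reduces to the telescoping/prefix-sum identity $\sum_{k\le j}s_k=b_j$ and the observation that an at-least condition $\attA\geq k$ selects exactly the suffix $\{k,\ldots,n\}$ of tuples, both of which are routine. The only mild care needed is to confirm that the $k=0$ operation, adding $s_0=-t$ to all tuples, is legitimate and correctly accounts for $b_0=-t$.
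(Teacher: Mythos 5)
Your proposal is correct and uses exactly the same construction as the paper: the diff $F=(f_0,\ldots,f_n)$ with $f_k=(\attA\geq k,\ \attB\gets\attB+s_k)$, whose cost is $n+1$. The only difference is that you spell out the telescoping verification that tuple $j$ accumulates $\sum_{k\leq j}s_k=b_j$, which the paper leaves implicit.
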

\begin{proof}
    A sequence of operations $F=(f_0,\ldots,f_n)$ where
    $f_k=(\attA\geq k, \attB\gets\attB+s_k)$ for $k\in\{0,\ldots,n\}$
    is a diff between $\Rbeg$ and $\Rend$, and $\cost(F)=n+1$.
\end{proof}

Next, we establish a few lemmas claiming that there must be best diffs
between $\Rbeg$ and $\Rend$ satisfying certain properties.

\begin{lemma}
    $\diff(\Rbeg,\Rend,\famop{\lgeq}{+})$
    contains a \emph{bounded} best diff $F'=(f'_1,\ldots,f'_m)$,
    in which for all $i\in[m]$,
    \begin{align*}
    f'_i &=(\attA\leq a'_i,\attB\gets\attB+b'_i)\text{ or}\\
    f'_i &=(\attA\geq a'_i,\attB\gets\attB+b'_i)
    \end{align*}
    where $a'_i$ is an integer in $\{0,\ldots,n\}$.
\label{lgeqinteger}
\end{lemma}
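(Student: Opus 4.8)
The plan is to take an arbitrary best diff and rewrite each operation's threshold into an integer in $\{0,\ldots,n\}$ in a way that leaves the operation's effect unchanged, so that the diff property and the cost are preserved while the desired ``bounded'' form is achieved. A best diff exists because $\diff(\Rbeg,\Rend,\famop{\lgeq}{+})$ is nonempty by Lemma~\ref{lgequpper} and costs are non-negative integers, so the well-ordering principle applies. So I would fix any best diff $F=(f_1,\ldots,f_m)$ witnessed by relations $R_0=\Rbeg,\ldots,R_m=\Rend$. The single fact that drives everything is that $\attA$ is read-only: no $(\calA,\calB)$-operation modifies it, so every intermediate $R_i$ agrees with $\Rbeg$ on $\attA$, and the $\attA$ values are exactly $\{0,1,\ldots,n\}$ throughout. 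Consequently the set of tuples matched by a condition $\attA\le a$ (or $\attA\ge a$) at any step is a function of the threshold $a$ and this fixed value set alone, independent of the step $i$.

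Next I would discard degenerate operations. If some $f_i$ matches the empty set---which for $\attA\le a$ means $a<0$ and for $\attA\ge a$ means $a>n$, since the extremal $\attA$ values are $0$ and $n$---then $f_i$ is the identity on $R_{i-1}$, so deleting it produces a strictly cheaper diff (each operation costs $1$), contradicting that $F$ is a best diff. Hence every $\attA\le a$ condition has $a\ge 0$ and every $\attA\ge a$ condition has $a\le n$, which is what makes the snapping land in range.

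The core step is the snapping itself. For a surviving condition $\attA\le a$ set $a'=\min(\lfloor a\rfloor,n)$, and for $\attA\ge a$ set $a'=\max(\lceil a\rceil,0)$; in both cases $a'$ is an integer in $\{0,\ldots,n\}$ and the matched subset of $\{0,\ldots,n\}$ is unchanged. Replacing each threshold by its snapped value while keeping the increment $b_i$ untouched yields operations $f'_i$, still in $\famop{\lgeq}{+}$, of the stated form. By the $\attA$-invariance observation the matched set at each step is unaffected, and since the modifiers are identical, a straightforward induction on $i$ shows the new sequence produces the very same relations $R_0,\ldots,R_m$; it is therefore a diff between $\Rbeg$ and $\Rend$ of identical cost, hence a best diff, and it is the bounded best diff $F'$ required. (Note the lemma constrains only the thresholds $a'_i$, so the increments $b'_i=b_i$ need no adjustment.)

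The main obstacle I anticipate is not the floor/ceiling snapping but arguing that the rewrite is behavior-preserving \emph{simultaneously at every step} rather than merely on $\Rbeg$; this is exactly why I would isolate the $\attA$-invariance fact first, so that each matched set reduces to a function of the threshold alone and the step-by-step induction becomes immediate.
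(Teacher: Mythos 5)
Your proof is correct and follows essentially the same route as the paper's: clamp each threshold into $[0,n]$ and round ($\lfloor\cdot\rfloor$ for $\leq$, $\lceil\cdot\rceil$ for $\geq$), observe that since $\attA$ is read-only and its values are exactly the integers $\{0,\ldots,n\}$ at every step, each operation's matched set is unchanged, and conclude the rewritten sequence is a bounded best diff of the same cost. Your preliminary step of deleting empty-match operations is a point of care the paper's proof skips (its claimed equivalence ``$\attA\leq a_i$ iff $\attA\leq\lfloor\mathrm{bnd}(a_i)\rfloor$'' silently fails when $a_i<0$, though such no-op operations cannot occur in a best diff), and it is precisely what makes the clamping legitimate.
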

\begin{proof}
    By Lemma~\ref{lgequpper},
    $\diff(\Rbeg,\Rend,\famop{\lgeq}{+})$
    contains a best diff $F=(f_1,\ldots,f_m)$.
    Entries in the $\attA$ attribute in $\Rbeg$ and $\Rend$, by construction,
    are integers in $\{0,\ldots,n\}$. Define
    \[\mathrm{bnd}(a)=\max\{0,\min\{n,a\}\}\]
    We construct $F'=(f'_1,\ldots,f'_m)$ from $F$:
    for each $i\in[m]$,
    \begin{itemize}
    \item if $f_i=(\attA\leq a_i,\attB\gets\attB+b_i)$,
    then we construct $f'_i=(\attA\leq \lfloor\mathrm{bnd}(a_i)\rfloor,\attB\gets\attB+b_i)$,
    since $\attA\leq a_i$ if and only if $\attA\leq\lfloor\mathrm{bnd}(a_i)\rfloor$.
    \item if $f_i=(\attA\geq a_i,\attB\gets\attB+b_i)$,
    then we construct $f'_i=(\attA\geq \lceil\mathrm{bnd}(a_i)\rceil,\attB\gets\attB+b_i)$,
    since $\attA\geq a_i$ if and only if $\attA\geq\lceil\mathrm{bnd}(a_i)\rceil$.
    \end{itemize}
    Thus, $F'$ is a bounded best diff in
    $\diff(\Rbeg,\Rend,\famop{\lgeq}{+})$.
\end{proof}

\begin{definition}[Gap operation]
    A \emph{gap operation at $k$}, where $k\in[n]$, is an
    operation with the condition $\attA\leq k-1$
    or the condition $\attA\geq k$.
\end{definition}

\begin{lemma}
    $\diff(\Rbeg,\Rend,\famop{\lgeq}{+})$
    contains a \emph{canonical} best diff $F'$
    where $F'$ contains
    exactly one gap operation at $k$, which must
    be either
    \begin{align*}
        f&=(\attA\leq k-1, \attB\gets \attB-s_k)\text{ or}\\
        f&=(\attA\geq k, \attB\gets \attB+s_k)
    \end{align*}
    for every $k\in [n]$,
\label{lgeqseparation}
\label{lgeqcanonical}
\end{lemma}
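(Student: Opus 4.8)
The plan is to start from a bounded best diff (which exists by Lemma~\ref{lgeqinteger}) and normalize it into canonical form without changing its cost. The engine of the argument is a \emph{separation} observation about adjacent tuples. Since operations in $\famop{\lgeq}{+}$ commute (Lemma~\ref{lgeqcommute}) and only add to $\attB$, the final $\attB$ value of the tuple with $\attA=j$ is the sum of the increments of all operations whose condition selects $j$, independent of order. For consecutive indices $k-1$ and $k$, the only operations selecting one but not the other are exactly the gap operations at $k$: an at-most clause $\attA\leq a$ that selects $k$ necessarily selects $k-1$, and symmetrically for at-least. Hence the difference between the final $\attB$ values of tuples $k$ and $k-1$ equals the sum of increments of the at-least gap operations at $k$ minus the sum of increments of the at-most gap operations at $k$. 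In $\Rend$ this difference is $b_k-b_{k-1}=s_k$, so the \emph{net} gap contribution at each $k$ is forced to equal $s_k$.

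First I would record structural facts about any bounded best diff $F$. By Lemma~\ref{lgeqinteger} every condition is $\attA\leq a$ or $\attA\geq a$ with $a\in\{0,\ldots,n\}$, so each operation is either a \emph{select-all} operation (condition $\attA\leq n$ or $\attA\geq 0$) or a gap operation at a unique $k\in[n]$. Two operations sharing a condition could be merged into one by adding their increments, strictly lowering the cost; hence in a best diff all conditions are distinct, giving at most one at-most and one at-least gap operation at each $k$, and at most one operation of each select-all condition. Since the two select-all conditions have identical effect, if both appeared we could merge them and lower the cost, so there is at most one select-all operation. Finally, because $s_k>0$ for every $k\in[n]$, the forced nonzero net contribution $s_k$ means at least one gap operation at $k$ must be present.

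With this structure I would do the cost accounting. Writing $E$ for the number of indices carrying \emph{both} an at-most and an at-least gap operation and $S\in\{0,1\}$ for the number of select-all operations, the total cost is exactly $n+E+S$. Lemma~\ref{lgequpper} bounds the cost by $n+1$, so $E+S\leq 1$. Where a $k$ carries a single gap operation, the forced net contribution immediately pins it to the claimed form: an at-least operation must be $(\attA\geq k,\attB\gets\attB+s_k)$ and an at-most operation must be $(\attA\leq k-1,\attB\gets\attB-s_k)$. The only remaining case is $E=1$ (hence $S=0$), where some $k^{*}$ carries both $(\attA\leq k^{*}-1,\attB\gets\attB+\beta^{-})$ and $(\attA\geq k^{*},\attB\gets\attB+\beta^{+})$ with $\beta^{+}-\beta^{-}=s_{k^{*}}$. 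These partition the tuples, so their combined effect equals that of the single gap operation $(\attA\geq k^{*},\attB\gets\attB+s_{k^{*}})$ together with one select-all operation $\attB\gets\attB+\beta^{-}$; replacing the former pair by the latter preserves both $\Rend$ and the cost, and leaves exactly one canonical gap operation at each $k$ plus a single global operation, as the statement requires.

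I expect the main obstacle to be the cost bookkeeping in the last step: arguing that normalization never needs more than one extra (select-all) operation, which is precisely what the $n+1$ bound of Lemma~\ref{lgequpper} buys, and verifying that trading a pair of opposing gap operations for one gap operation plus a uniform shift is increment-exact on every tuple. Everything else---distinct conditions, at most one select-all, and at least one gap operation per $k$---follows from the separation observation together with the simple merging arguments above.
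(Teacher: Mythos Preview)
Your argument is correct, and the core ingredients---the separation observation for adjacent tuples, merging operations with identical conditions, and trading a pair of opposing gap operations at one index for a single gap operation plus a select-all shift---are the same as in the paper. The organizational route differs, however. The paper picks a bounded best diff with the \emph{fewest gap operations} and argues directly that such an $F$ cannot carry two gap operations at any $k$ (otherwise the replacement trick would produce a best diff with strictly fewer gap operations), so no normalization step or appeal to the $n+1$ upper bound is needed. You instead start from an arbitrary bounded best diff, use distinctness of conditions to bound each $k$ to at most two gap operations, and then invoke Lemma~\ref{lgequpper} to force $E+S\leq 1$ via explicit cost accounting before normalizing the single possible double index. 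Your approach is a bit more quantitative and leans on the prior upper bound; the paper's extremal choice is slightly more self-contained. Both arrive at the same canonical form, and the replacement step is essentially identical in the two proofs.
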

\begin{proof}
    Let $F=(f_1,\ldots,f_m)$ be the bounded best diff with the fewest gap operations.

    First, we prove that $F$ has at most $n$ gap operations, one at every $k\in[n]$.
    The proof follows. If $F$ contains two gap operations with the same condition, by Lemma~\ref{lgeqcommute},
    they can be reordered and combined, reducing the number of gap operations, a contradiction.
    If $F$ contains both
    \begin{align*}
        f_i&=(\attA\leq k-1, \attB\gets \attB+b_i)\text{ and}\\
        f_j&=(\attA\geq k, \attB\gets \attB+b_j)
    \end{align*}
    for some $k\in[n]$, then we
    can replace them with
    \begin{align*}
        g_1&=(\attA\leq n, \attB\gets \attB+b_i)\text{ and}\\
        g_2&=(\attA\geq k, \attB\gets \attB+(b_j-b_i))
    \end{align*}
    to obtain a best diff
    with one fewer gap operation ($g_1$ is not a gap operation), a contradiction.
    \begin{center}
    \begin{tikzpicture}
        \begin{scope}[xshift=0.0cm]
            \draw[thin] (-0.2,-1.5) rectangle (3.7,0.3);
            \draw[<->,thick] (-0.1,0)--(3.1,0) node[anchor=west] {$\attA$};
            \draw[<-|] (0,-0.3)--node[below=-0.5pt]{$\attB\gets \attB+b_i$} (1.3,-0.3) node[anchor=west] {$k-1$};
            \draw[|->] (1.6,-1.0) node[anchor=east] {$k$} --node[below=-0.5pt]{$\attB\gets \attB+b_j$} (3.0,-1.0);
        \end{scope}
        \node () at (4.0,-0.6) {$=$};
        \begin{scope}[xshift=4.5cm]
            \draw[thin] (-0.2,-1.5) rectangle (3.7,0.3);
            \draw[<->,thick] (-0.1,0)--(3.1,0) node[anchor=west] {$\attA$};
            \draw[<->] (0,-0.3)--node[below=-1pt]{$\attB\gets \attB+b_i$} (3,-0.3);
            \draw[|->] (1.6,-1.0) node[anchor=east] {$k$} --node[below=-1pt]{$\attB\gets \attB+(b_j-b_i)$} (3.0,-1.0);
        \end{scope}
    \end{tikzpicture}
    \end{center}

    Second, we prove that $F$ has at least $n$ gap operations, one at every $k\in[n]$.
    The proof follows. Assume that there is a value $k\in[n]$ such that
    $F$ has no gap operation at $k$; that is,
    all operations in $F$ has neither the condition $\attA\leq k-1$ nor the condition
    $\attA\geq k$. One can show by induction on the number of operations
    performed on $\Rbeg$ that the tuple with $\attK=\attA=k-1$
    and the tuple with $\attK=\attA=k$ will always
    have the same value in the $\attB$ attribute.
    More precisely, for any $i\in\{0,\ldots,m\}$
    and $F_i=(f_1,\ldots,f_i)$, in the relation $F_i(\Rbeg)$,
    the tuple with $\attK=\attA=k-1$
    and the tuple with $\attK=\attA=k$ have the same value in the $\attB$ attribute.
    However, in $\Rend$ those values in the $\attB$ attribute
    differ by $s_k\neq 0$ by construction, a contradiction.

    Through a similar argument, for each $k\in[n]$, the gap operation
    at $k$ in $F$ must be
    either
    \begin{align*}
        f&=(\attA\leq k-1, \attB\gets \attB-s_k)\text{ or}\\
        f&=(\attA\geq k, \attB\gets \attB+s_k)
    \end{align*}
    for otherwise the difference between the $\attB$ values of
    the tuple with $\attK=\attA=k-1$
    and the tuple with $\attK=\attA=k$ in $F(\Rbeg)$ will not be $s_k$,
    which implies that $F(\Rbeg)\neq \Rend$.
\end{proof}

\begin{lemma}
    If $F$ is a best diff in
    $\diff(\Rbeg,\Rend,\famop{\lgeq}{+})$,
    then $\cost(F)\geq n$.
\label{lgeqlower}
\end{lemma}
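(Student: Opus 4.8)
The plan is to obtain this lower bound as an almost immediate consequence of the structural characterization in Lemma~\ref{lgeqcanonical}. First I would apply that lemma to produce a \emph{canonical} best diff $F'\in\diff(\Rbeg,\Rend,\famop{\lgeq}{+})$ that contains exactly one gap operation at $k$ for each $k\in[n]$. The key observation is that these gap operations are genuinely distinct operations: an operation with condition $\attA\leq a$ is a gap operation only at $k=a+1$, and an operation with condition $\attA\geq a$ only at $k=a$, so the threshold of an operation's condition pins down the single value of $k$ at which it can serve as a gap operation. Hence the $n$ gap operations guaranteed across $k=1,\ldots,n$ occupy $n$ distinct positions in $F'$.

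Next I would turn the count of operations into a cost bound. Every operation in $\famop{\lgeq}{+}$ uses an at-most/at-least condition, which costs $1$ per operation by the cost table; the increment modifier does not affect the cost. Therefore each of the $n$ distinct gap operations contributes exactly $1$ to $\cost(F')$, and any remaining (non-gap) operations in $F'$ can only increase the total, so $\cost(F')\geq n$.

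Finally I would transfer the bound to the arbitrary best diff $F$ in the statement. Since $F'$ and $F$ both lie in $\diff(\Rbeg,\Rend,\famop{\lgeq}{+})$ and both are best diffs, they have equal cost by the definition of best diff, giving $\cost(F)=\cost(F')\geq n$.

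The main obstacle is not in this lemma itself---the real work has already been done in establishing Lemma~\ref{lgeqcanonical}. What I would be most careful about here is the bookkeeping that prevents double-counting: I must confirm that ``exactly one gap operation at $k$ for each $k$'' really yields $n$ \emph{separate} unit-cost operations, rather than a single operation being reused for multiple values of $k$, which is precisely what the threshold argument in the first paragraph secures.
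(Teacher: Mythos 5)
Your proof is correct and follows exactly the paper's route: the paper also derives this lemma as a direct corollary of Lemma~\ref{lgeqseparation}, and your threshold argument (that the condition of an operation pins down the unique $k$ at which it can be a gap operation) is precisely the bookkeeping detail the paper leaves implicit. Nothing is missing.
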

\begin{proof}
    This is a corollary of Lemma~\ref{lgeqseparation}.
\end{proof}

\begin{lemma}
    Best diffs in $\diff(\Rbeg,\Rend,\famop{\lgeq}{+})$
    have cost $n$ if and only if
    there is a subset $T\subseteq S$ such that the sum of all elements in $T$ equals $t$.
\label{lgeqreduction}
\end{lemma}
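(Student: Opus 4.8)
The plan is to combine the structural results already established---Lemma~\ref{lgeqcanonical} (existence of a canonical best diff) and Lemma~\ref{lgeqlower} (every best diff costs at least $n$)---to pin down exactly what a cost-$n$ diff must look like, and then to read off a \textsc{SubsetSum} solution from it. Concretely, I would first observe that since each operation costs $1$, a best diff of cost $n$ has exactly $n$ operations; by Lemma~\ref{lgeqcanonical} a canonical best diff already contains one gap operation at each $k\in[n]$, so if its cost is $n$ there is no room for any non-gap operation. Hence, in canonical form, every cost-$n$ best diff amounts to a choice, for each $k\in[n]$, of exactly one of
\[
f_k^- = (\attA\leq k-1,\ \attB\gets\attB-s_k)
\quad\text{or}\quad
f_k^+ = (\attA\geq k,\ \attB\gets\attB+s_k).
\]
I would encode such a choice by the subset $T=\{k\in[n] : f_k^- \text{ is chosen}\}$, setting up a bijection between cost-$n$ canonical diffs and subsets of $[n]$.

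For the easy direction, given a subset of $S$ summing to $t$---say indexed by $T\subseteq[n]$---I would exhibit the diff $F$ that uses $f_k^-$ for $k\in T$ and $f_k^+$ otherwise. This has cost $n$, and I would verify $F(\Rbeg)=\Rend$ directly; by Lemma~\ref{lgeqlower} it is then a best diff, so best diffs have cost $n$.

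The crux---and the main obstacle---is showing that the \emph{single} equation $\sum_{k\in T}s_k=t$ is exactly the condition for $F(\Rbeg)=\Rend$, i.e.\ that all $n+1$ per-tuple constraints collapse to one. The key observation (implicit in the proof of Lemma~\ref{lgeqcanonical}) is a telescoping argument: since the operations commute (Lemma~\ref{lgeqcommute}) and $\attA$ is never modified, the net change applied to the tuple with $\attA=j$ is $\sum_{k\le j} x_k s_k - \sum_{k>j}(1-x_k)s_k$, where $x_k=1$ iff $f_k^+$ is chosen. A short computation shows that, regardless of the choices, the difference between the final $\attB$-values of the tuples with $\attA=j$ and $\attA=j-1$ is always exactly $s_j=b_j-b_{j-1}$. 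Thus the final $\attB$-values automatically match $\Rend$ up to a global additive shift, and the shift vanishes precisely when the tuple with $\attA=0$ attains $b_0=s_0=-t$. Since that tuple is touched only by the chosen $f_k^-$ operations, its final value is $-\sum_{k\in T}s_k$, so $F(\Rbeg)=\Rend$ iff $\sum_{k\in T}s_k=t$. This establishes both directions simultaneously: a cost-$n$ diff exists iff some subset of $S$ sums to $t$, which together with Lemma~\ref{lgeqlower} gives that the best diff has cost $n$ iff such a subset exists, as claimed.
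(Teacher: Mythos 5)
Your proposal is correct and follows essentially the same route as the paper: both use the canonical-best-diff structure (Lemma~\ref{lgeqcanonical}) to reduce a cost-$n$ diff to a choice of $f_k^-$ or $f_k^+$ at each $k\in[n]$, and both read off the subset sum from the final $\attB$ value of the tuple with $\attA=0$. Your telescoping computation, showing that consecutive $\attB$-differences are automatically $s_j$ so that only the tuple-$0$ constraint remains, is a nice explicit verification of the ($\Leftarrow$) direction, which the paper merely asserts.
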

\begin{proof}
    ($\Leftarrow$) Let $T$ be a subset of $S$ such that
    the sum of elements in $T$ equals $t$, then $F=(f_1,\ldots,f_n)$ where,
    for each $k\in[n]$,
    \[f_k=\begin{cases}
    (\attA\leq k-1, \attB\gets\attB-s_k)&\text{if $s_k\in T$}\\
    (\attA\geq k, \attB\gets\attB+s_k)&\text{otherwise}
    \end{cases}\]
    is a best diff in
    $\diff(\Rbeg,\Rend,\famop{\lgeq}{+})$
     with cost $n$.

    ($\Rightarrow$) By Lemma~\ref{lgeqinteger} and Lemma~\ref{lgeqcanonical},
    $\diff(\Rbeg,\Rend,\famop{\lgeq}{+})$ contains
    a canonical best diff $F=(f_1,\ldots,f_n)$
    with cost $n$.

    Because the cost is $n$, $F$ contains exactly one gap operation at $k$
    for each $k\in[n]$, as described in Lemma~\ref{lgeqcanonical} and nothing else.
    By Lemma~\ref{lgeqcommute} let $f_k$ be the gap operation at $k$
    for each $k\in[n]$.
    If $f_k$ has the condition $\attA\geq k$, it does not affect the tuple
    with $\attK=\attA=0$. Let $\{f_{i_1},\ldots,f_{i_m}\}$ be the subset of
    $\{f_1,\ldots,f_n\}$ of operations whose conditions are of the form
    $\attA\leq i_\ell-1$. Thus, in the relation $F(\Rbeg)=\Rend$, the tuple with $\attK=\attA=0$
    has the value in attribute $\attB$ equal to $-\sum_{\ell=1}^m{s_{i_\ell}}=-t$. Thus,
    $T=\{s_{i_1},\ldots,s_{i_m}\}$ is a subset of $S$ whose sum of elements is equal
    to $t$.
\end{proof}

This proves the correctness of the polynomial-time reduction
from \textsc{SubsetSum}, which concludes the $\NP$-hardness proof
for Theorem~\ref{lgeqadd}.

\begin{figure}[!hbt]
\centering
\begin{tikzpicture}
    \draw[<->,thick] (-0.1,0)--(7.1,0) node[anchor=west] {$\attA$};
    \foreach \i/\j in {0/-93,1/-92,2/-89,3/-80,4/-53,5/28} {
        \node () at (1.5+\i*0.8,1.2) {$0$};
        \node () at (1.5+\i*0.8,0.8) {$\j$};
        \node () at (1.5+\i*0.8,0.3) {$\i$};
        \fill (1.5+\i*0.8,0.0) circle (0.08cm);
    }
    \node () at (6.7,1.2) {$\attB$ in $\Rbeg$};
    \node () at (6.7,0.8) {$\attB$ in $\Rend$};
    \draw[<-|,dotted] (0,-0.3)--node[below=-0.5pt]{$\attB\gets \attB-1$} (1.5,-0.3);
    \draw[|->] (2.3,-0.3)--node[below=-0.5pt]{$\attB\gets \attB+1$} (7.0,-0.3);
    \draw[<-|] (0,-1.0)--node[below=-0.5pt]{$\attB\gets \attB-3$} (2.3,-1.0);
    \draw[|->,dotted] (3.1,-1.0)--node[below=-0.5pt]{$\attB\gets \attB+3$} (7.0,-1.0);
    \draw[<-|] (0,-1.7)--node[below=-0.5pt]{$\attB\gets \attB-9$} (3.1,-1.7);
    \draw[|->,dotted] (3.9,-1.7)--node[below=-0.5pt]{$\attB\gets \attB+9$} (7.0,-1.7);
    \draw[<-|,dotted] (0,-2.4)--node[below=-0.5pt]{$\attB\gets \attB-27$} (3.9,-2.4);
    \draw[|->] (4.7,-2.4)--node[below=-0.5pt]{$\attB\gets \attB+27$} (7.0,-2.4);
    \draw[<-|] (0,-3.1)--node[below=-0.5pt]{$\attB\gets \attB-81$} (4.7,-3.1);
    \draw[|->,dotted] (5.5,-3.1)--node[below=-0.5pt]{$\attB\gets \attB+81$} (7.0,-3.1);
\end{tikzpicture}
\caption{Illustration of Example \ref{subsetsumexample}}
\label{fig:subsetsumillus}
\end{figure}
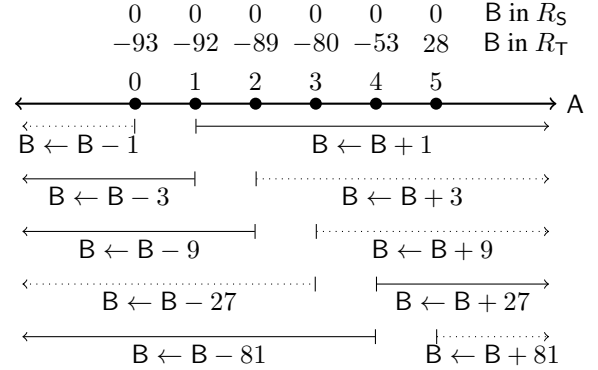

\begin{example}
\label{subsetsumexample}
Consider the \textsc{SubsetSum} instance with
$S=\{1,3,9,27,81\}$ and $t=93$.
The subset $T=\{3,9,81\}$ of $S$
has the sum of its elements equal to $t$. The reduction
gives the following instance of the
$\bdI(\famop{\lgeq}{+})$ problem.
\begin{center}
\begin{tabular}{c c c c c}
$\attK$&$\attA$&$\attB$\\
\hline
$0$&$0$&$0$\\
$1$&$1$&$0$\\
$2$&$2$&$0$\\
$3$&$3$&$0$\\
$4$&$4$&$0$\\
$5$&$5$&$0$
\end{tabular}
\begin{tikzpicture}
\path[draw=black,solid,line width=1mm,fill=black,
    preaction={-triangle 90,thin,draw,shorten >=-1mm}
] (0, -0.3) -- (0.3, -0.3);
\end{tikzpicture}
\begin{tabular}{c c c c c}
$\attK$&$\attA$&$\attB$\\
\hline
$0$&$0$&$-93$\\
$1$&$1$&$-92$\\
$2$&$2$&$-89$\\
$3$&$3$&$-80$\\
$4$&$4$&$-53$\\
$5$&$5$&$28$
\end{tabular}
\end{center}
Figure~\ref{fig:subsetsumillus} shows the two possible gap operations at $k$
for each $k\in[n]$ in their own row. In the third row, for example, one
of the two gap operations modifying $\attB$ by $9$ must be used to ensure
that the $\attB$ values of the middle tuples differ by $9$
(in the final relation, between $-89$ and $-80$).
In this case, $F=(f_1,f_2,f_3,f_4,f_5)$ where
\begin{align*}
f_1 &= (\attA\geq 1,\attB\gets\attB+1)\\
f_2 &= (\attA\leq 1,\attB\gets\attB-3)\\
f_3 &= (\attA\leq 2,\attB\gets\attB-9)\\
f_4 &= (\attA\geq 4,\attB\gets\attB+27)\\
f_5 &= (\attA\leq 4,\attB\gets\attB-81)
\end{align*}
is a best diff with cost $5$. The corresponding chosen gap operations
are shown in solid lines, while the ones not chosen are shown in dotted lines.
\end{example}

\subsubsection{With Assignment/Increment or Affine Modifiers}
With \emph{assignment/increment} or \emph{affine} modifiers,
the problem is still $\NP$-hard.

For the \emph{assignment/increment} modifiers, this can be shown via an extension
of the proof above for the version with only \emph{increment} modifiers.
Essentially, the proof is to show that the assignment modifier does not
provide additional expressivity in the reduction given.

\begin{theorem}
    The $\bdI(\famop{\lgeq}{\getsadd})$ problem
    is $\NP$-hard.
\label{lgeqgetsadd}
\end{theorem}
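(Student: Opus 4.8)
The plan is to reuse verbatim the reduction from \textsc{SubsetSum} built for Theorem~\ref{lgeqadd}: given $S=\{s_1,\ldots,s_n\}$ and $t$, set $s_0=-t$ and take the same $\Rbeg$ and $\Rend$. Since every increment operation is also an assignment/increment operation, $\famop{\lgeq}{+}\subseteq\famop{\lgeq}{\getsadd}$, so every diff from the previous proof is still valid here. In particular the upper bound of Lemma~\ref{lgequpper} (cost at most $n+1$) carries over, and the explicit cost-$n$ diff built in the ($\Leftarrow$) direction of Lemma~\ref{lgeqreduction} is increment-only, hence a legal $\famop{\lgeq}{\getsadd}$ diff. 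Thus a positive \textsc{SubsetSum} instance still yields a best diff of cost exactly $n$, and the only thing left to prove is the converse: a cost-$n$ diff under $\famop{\lgeq}{\getsadd}$ forces a subset of $S$ summing to $t$.

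First I would re-establish the structural scaffolding that does not depend on the modifier type. The boundedness reduction of Lemma~\ref{lgeqinteger} uses only the shape of the conditions, so it applies unchanged. The separation argument also survives: if no operation ever separates the tuples with $\attA=k-1$ and $\attA=k$, then every operation selects both or neither, and whether it is an assignment or an increment it keeps their $\attB$ values equal, contradicting the gap $s_k\neq 0$ in $\Rend$. Hence every diff needs a \emph{gap operation} at each $k\in[n]$, so the cost is at least $n$, and a cost-$n$ diff consists of exactly one gap operation $g_k$ per $k\in[n]$ and nothing else, each with condition $\attA\le k-1$ (``at-most'') or $\attA\ge k$ (``at-least'').

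The heart of the argument, and the main obstacle, is that Lemma~\ref{lgeqcommute} fails for $\famop{\lgeq}{\getsadd}$: assignments neither commute nor preserve history, so I cannot simply reorder and combine operations as before. Instead I would prove that in any cost-$n$ diff each assignment gap operation can be replaced by the corresponding increment \emph{without changing the resulting relation}; iterating this yields an all-increment cost-$n$ diff, to which the ($\Rightarrow$) direction of Lemma~\ref{lgeqreduction} applies and produces the subset. To set this up I track, for each boundary $k$, the difference $D_k$ between the $\attB$ values of the tuples with $\attA=k$ and $\attA=k-1$. Only $g_k$ can make $D_k$ nonzero, while any assignment selecting both of these tuples (a ``both-assignment'') resets $D_k$ to $0$; since the final $D_k=s_k\neq 0$, no both-assignment may occur after $g_k$, and consequently $D_k=0$ immediately before $g_k$ fires.

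The key structural consequence I would then extract is an ordering constraint: an at-least assignment $g_k$, selecting $\{k,\ldots,n\}$, is a both-assignment for every boundary to its right, so it must fire before $g_{k+1},\ldots,g_n$; symmetrically, an at-most assignment must fire before $g_1,\ldots,g_{k-1}$. This forces the tuples selected by an assignment gap to be all equal at the instant it fires---before an at-least assignment $g_k$ acts, the tuples $\{k,\ldots,n\}$ have been touched only by at-least gaps at smaller indices (which modify them uniformly) and started from $0$, since the ordering constraint pushes the non-uniform at-most gaps on the far side after $g_k$. An assignment onto equal values is exactly an increment by the common offset, which here must be $\pm s_k$ because $D_k=0$ before and $D_k=s_k$ after. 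Replacing the assignment by that increment leaves the relation after the operation---and hence the entire subsequent execution and the final relation---unchanged, giving a cost-$n$ diff with one fewer assignment. I expect the delicate points to be verifying this uniformity claim and confirming that the replacement preserves validity for the remaining assignments, so that the induction on the number of assignments goes through and reduces the problem to the already-proven increment-only case.
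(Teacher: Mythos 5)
Your proof is correct, but it takes a genuinely different route from the paper's. The paper proves the key claim (Lemma~\ref{lgeqaddandgetsadd}: best diffs under $\famop{\lgeq}{\getsadd}$ have cost $n$ iff best diffs under $\famop{\lgeq}{+}$ do) by an extremal argument: take a cost-$n$ best diff with the fewest assignment modifiers and, among those, smallest total length, then examine the \emph{first} assignment $f_i=(\attA\leq a_i,\attB\gets b_i)$ and split into three local cases---an earlier overlapping at-least increment (shrink it, contradicting minimal length), an earlier contained at-most increment (delete it, contradicting minimal cost), or neither (uniformity of $\attB$ on $\attA\leq a_i$ holds, so the assignment is an increment in disguise, contradicting fewest assignments). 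You instead re-derive the canonical gap structure for $\famop{\lgeq}{\getsadd}$ directly (one gap operation per boundary, nothing else, which also makes the cost-$\geq n$ lower bound explicit---something the paper's terse ``$\Leftarrow$'' direction leaves implicit), then obtain a global \emph{ordering} constraint by tracking the boundary differences $D_k$: a both-assignment for boundary $j$ cannot fire after $g_j$, so an assignment gap must precede every gap on the side it covers, which forces the selected tuples to be uniform at firing time; you then replace assignments by increments one at a time, preserving every intermediate relation. The shared punchline is identical---an assignment applied to uniformly-valued tuples is exactly an increment---but the scaffolding differs: the paper's extremal surgery is shorter and never needs the gap structure for the mixed family, while your argument is constructive on an arbitrary cost-$n$ diff (no extremal choices needed), and the $D_k$/ordering analysis you introduce is self-contained enough that it would also transfer to variants of the reduction where reordering tricks are unavailable. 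The delicate points you flagged (uniformity at firing time, and that each replacement preserves validity so the induction terminates) do go through exactly as you sketched.
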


We prove the aforementioned theorem via the following lemma.
\begin{lemma}
    Best diffs in $\diff(\Rbeg,\Rend,\famop{\lgeq}{\getsadd})$
    have cost $n$ if and only if best diffs in
    $\diff(\Rbeg,\Rend,\famop{\lgeq}{+})$
    have cost $n$.
\label{lgeqaddandgetsadd}
\end{lemma}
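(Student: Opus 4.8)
The plan is to exploit the inclusion $\famop{\lgeq}{+}\subseteq\famop{\lgeq}{\getsadd}$, so that $\diff(\Rbeg,\Rend,\famop{\lgeq}{+})\subseteq\diff(\Rbeg,\Rend,\famop{\lgeq}{\getsadd})$ and the best $\getsadd$-diff costs at most the best $+$-diff. Since Lemma~\ref{lgeqlower} already shows every $+$-diff costs at least $n$, I would reduce the statement to the existence of a cost-$n$ diff in each family and then argue that a cost-$n$ diff exists in one family iff it exists in the other. Concretely, I would establish (a) the lower bound that every $\getsadd$-diff also costs at least $n$, and (b) that any cost-$n$ $\getsadd$-diff can be turned into a cost-$n$ $+$-diff. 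Given (a) and (b), the chain ``best $\getsadd$-diff has cost $n$ $\iff$ a cost-$n$ $\getsadd$-diff exists $\iff$ a cost-$n$ $+$-diff exists $\iff$ best $+$-diff has cost $n$'' closes the lemma, where the middle equivalence uses (b) in the forward direction and the inclusion in the reverse, and the outer equivalences use the two lower bounds.

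For (a), I would re-run the bounding and separation arguments of Lemmas~\ref{lgeqinteger} and~\ref{lgeqcanonical}, observing that they are modifier-agnostic. The bounding step only rewrites conditions, so it carries over unchanged, after which every operation is a gap operation at exactly one position $k\in[n]$. The separation step uses only that two tuples with equal $\attB$ that are always selected together (or never selected) by an operation remain equal afterward; this holds for assignment exactly as for increment, since assigning a common constant to both, or incrementing both by the same amount, preserves equality. Hence at least one gap operation at each $k\in[n]$ is required, so every $\getsadd$-diff costs at least $n$, and any cost-$n$ $\getsadd$-diff consists of exactly one gap operation at each $k\in[n]$ and nothing else.

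The crux is (b): formalizing that the assignment modifier buys nothing on this instance. I would prove the sublemma that, in a cost-$n$ $\getsadd$-diff, immediately before any assignment operation fires, all tuples it selects already share a common $\attB$ value $v$; then the assignment $\attB\gets b$ has the same effect as the increment $\attB\gets\attB+(b-v)$, so it may be replaced without altering any intermediate relation or the cost, and I would iterate over all assignments. To prove the sublemma I would track the consecutive differences $d_j=x_j-x_{j-1}$ of the $\attB$ values: the gap operation at $j$ is the only increment that changes $d_j$, whereas an assignment selecting the prefix or suffix through position $k$ zeroes every difference strictly interior to its range while altering only the boundary difference $d_k$. Since $\Rend$ requires $d_j=s_j\neq 0$ and each position has a unique gap operation, any interior difference that an assignment zeroes but that must end nonzero forces the gap operation at that position to occur strictly later; consequently, before the assignment fires all its interior differences are still zero, i.e.\ its selected tuples are equal. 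The main obstacle is precisely this ordering argument---ruling out that an assignment's global reset destroys an already-established nonzero difference that can no longer be repaired---since that is exactly where non-commutativity could, a priori, let assignment beat the increment-only canonical form. Once (b) holds, the resulting cost-$n$ $+$-diff together with Lemma~\ref{lgeqlower} pins the best $+$-diff cost to $n$ (and, via Lemma~\ref{lgeqreduction}, exhibits the desired subset).
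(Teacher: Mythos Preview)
Your proposal is correct but follows a genuinely different route from the paper's own proof. The paper does not first re-derive a canonical form for $\famop{\lgeq}{\getsadd}$; instead it picks, among all cost-$n$ best diffs in $\famop{\lgeq}{\getsadd}$, one with the fewest assignment modifiers and, subordinately, the smallest total length, and then argues by contradiction that the first assignment $f_i=(\attA\le a_i,\attB\gets b_i)$ cannot exist: either some earlier increment with an overlapping condition can be shrunk (reducing total length) or dropped (reducing cost), or else no earlier operation touches the selected tuples nontrivially, in which case they all share a common value $\beta$ and $f_i$ can be rewritten as the increment $\attB\gets\attB+(b_i-\beta)$ (reducing the assignment count). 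Your approach instead re-establishes the lower bound and the exactly-one-gap-operation-per-$k$ structure for $\famop{\lgeq}{\getsadd}$ directly (correctly noting that the bounding and separation steps of Lemmas~\ref{lgeqinteger} and~\ref{lgeqcanonical} are modifier-agnostic), and then uses the consecutive-difference bookkeeping $d_j$ to force the ordering constraint that makes every assignment replaceable by an increment. The paper's minimality-plus-case-analysis is shorter and avoids re-proving the canonical structure; your difference-tracking argument is more global and arguably more transparent about \emph{why} assignment cannot help on this instance, and it yields the $\getsadd$ lower bound (your item~(a)) as an explicit byproduct rather than leaving it implicit.
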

\begin{proof}
    ($\Leftarrow$) Any diff in $\diff(\Rbeg,\Rend,\famop{\lgeq}{+})$
    is also a diff in $\diff(\Rbeg,\Rend,\famop{\lgeq}{\getsadd})$.

    ($\Rightarrow$)
    Let $F=(f_1,\ldots,f_n)$ be a best diff in
    $\diff(\Rbeg,\Rend,\famop{\lgeq}{\getsadd})$
    of cost $n$ that has the smallest number of assignment modifiers
    and, among the best diffs with the smallest number of assignment modifiers,
    has the smallest total length. We show that $F$ has no
    assignment modifiers.

    The proof follows.
    Assume to the contrary, and let $i$ be the smallest index in $[n]$ such that
    $f_i$ has an assignment modifier. Suppose $f_i=(\attA\leq a_i, \attB\gets b_i)$.
    (The proof for when $f_i$ has condition $\attA\geq a_i$ is similar.)

    Case 1: There is an operation $f_j=(\attA\geq a_j,\attB\gets\attB+b_j)$
    where $j<i$ and $a_j\leq a_i$. Then, let
    $f'_j=(\attA\geq a_i+1,\attB\gets\attB+b_j)$. If $F'$ is defined as
    $F$ where $f_j$ is replaced with $f'_j$, then $F'$ would still yield
    $F'(\Rbeg)=\Rend$, but the total length of $F'$ is smaller than that of $F$.

    Case 2: There is an operation $f_j=(\attA\leq a_j,\attB\gets\attB+b_j)$
    where $j<i$ and $a_j\leq a_i$. If $F'$ is defined as
    $F$ where $f_j$ is removed, then $F'$ would still yield
    $F'(\Rbeg)=\Rend$, but the cost of $F'$ is smaller than that of $F$.

    Case 3: None of the above. Then, all tuples matching $\attA\leq a_i$
    still have the same value in the $\attB$ attribute, say $\beta$, in
    $F''(\Rbeg)$ where $F''=(f_1,\ldots,f_{i-1})$. Then, let
    $f'_i=(\attA\leq a_i,\attB\gets\attB+(b_i-\beta))$. If $F'$ is defined as
    $F$ where $f_i$ is replaced with $f'_i$, then $F'$ would still yield
    $F'(\Rbeg)=\Rend$, but $F'$ has fewer assignment modifiers than $F$.

    Therefore, $F$ has no assignment modifiers. Thus, $F$ is also
    a best diff in $\diff(\Rbeg,\Rend,\famop{\lgeq}{+})$.
\end{proof}

With the \emph{affine} modifier, we again show $\NP$-hardness
via a polynomial-time reduction from \textsc{SubsetSum}, but
the reduction is slightly different from the \emph{increment} case.

\begin{theorem}
    The $\bdI(\famop{\lgeq}{\affine})$ problem
    is $\NP$-hard.
\label{lgeqaffine}
\end{theorem}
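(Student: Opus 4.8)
The plan is to once again reduce from \textsc{SubsetSum}, reusing as much of the machinery behind Theorem~\ref{lgeqadd} as possible, but augmenting the gadget so that the extra expressive power of the affine modifier---namely the multiplicative coefficient $b$ in $\attB\gets b\attB+c$---is neutralized. Concretely, I would keep the $n+1$ ``anchor'' positions $\attA\in\{0,\ldots,n\}$ that drive the separation argument, but at each position $j$ I would place a \emph{pair} of tuples: a ``low'' tuple with $\Rbeg.\attB=0$ and target $\sum_{\ell\le j}s_\ell$ (with $s_0=-t$, exactly as in the increment reduction), and a ``high'' tuple whose $\attB$ value is offset by a fixed nonzero constant $\delta$ in both $\Rbeg$ and $\Rend$. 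The point of the offset is that the two tuples of a pair share the same $\attA$ value, so no operation can ever separate them; consequently their difference can only be rescaled, never reshuffled.

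The lower bound $\cost(F)\ge n$ then carries over almost verbatim from the separation argument of Lemma~\ref{lgeqseparation}: restricting attention to the low tuples, any two at consecutive positions $k-1$ and $k$ start out equal (both $0$) and stay equal under any affine operation selecting both or neither, so the nonzero target gap $s_k$ forces an operation whose threshold cuts exactly between positions $k-1$ and $k$. Since each at-most/at-least condition induces exactly one such cut among the integer positions, at least $n$ operations are needed, and a cost-$n$ diff must consist of exactly one ``cut'' per gap, with no wasted operations.

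The crux---and the step I expect to be the main obstacle---is to show that any cost-$n$ diff must in fact use $b=1$ in \emph{every} operation, so that the affine case collapses onto the increment case. This is genuinely needed rather than cosmetic: without the paired offsets, a direct check at $n=2$ shows that a no-instance can still admit a cost-$n$ diff by choosing $b=-\frac{1}{2}$ on an overlapping operation, so the naive reduction would fail. The offsets are what rule this out. For each position $j$, the two tuples of its pair are acted on by exactly the operations that select $j$, and their difference is multiplied by the product of those operations' coefficients; since the offset must be the same $\delta\neq 0$ in $\Rbeg$ and in $\Rend$, that product must equal $1$ for every $j$. Comparing this product at consecutive positions $j-1$ and $j$, every operation cancels except the unique cut at gap $j$, whence that operation's coefficient is forced to equal $1$; ranging over all gaps forces all $n$ coefficients to be $1$.

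With every coefficient equal to $1$, a cost-$n$ affine diff is exactly a cost-$n$ diff in $\famop{\lgeq}{+}$ on the low tuples---the high tuples are carried along automatically, since the same increments preserve the offset---so by Lemma~\ref{lgeqreduction} it exists if and only if the \textsc{SubsetSum} instance is positive. In the no-instance case the increment solution of Lemma~\ref{lgequpper}, which likewise respects the offsets, still furnishes a diff of cost $n+1$. Hence the best diff has cost $n$ if and only if the \textsc{SubsetSum} instance is a yes-instance, which establishes the polynomial-time reduction and completes the $\NP$-hardness proof of Theorem~\ref{lgeqaffine}.
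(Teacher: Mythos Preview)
Your proposal is correct and reaches the same $\NP$-hardness conclusion via a \textsc{SubsetSum} reduction, but the gadget you use to neutralize the multiplicative coefficient is genuinely different from the paper's. The paper places a \emph{block} of $99$ tuples with identical $\attB$ values at each of the $n+1$ anchor positions (spread over $99$ consecutive $\attA$ values), and argues in sketch form that in a cost-$n$ diff the cuts must fall at block boundaries, after which within-block uniformity constrains the modifiers so that non-increment slopes are unnecessary. Your construction instead places a \emph{pair} of tuples sharing a single $\attA$ value but differing by a fixed nonzero offset $\delta$ in $\attB$; the invariant ``offset equals $\delta$'' translates directly into the algebraic identity $\prod b=1$ at every position, and comparing adjacent positions isolates each operation's coefficient and pins it to $1$. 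Your argument is more explicit and more economical (two tuples per position rather than $99$), and your observation that the naive one-tuple-per-position reduction already fails at $n=2$ is a useful sanity check that some gadget is genuinely required. Both routes reduce the affine case to the increment case handled by Lemma~\ref{lgeqreduction}; they simply enforce ``slope $=1$'' through different invariants.
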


Consider an instance of the \textsc{SubsetSum} problem with a set
$S=\{s_1,\ldots,s_n\}$ of positive integers and a positive integer $t$.
The reduction is as follows: let $s_0=-t$ and
\begin{align*}
    \Rbeg &= \bigcup_{k\in\{0,\ldots,n\}}\left\{(\attK=\attA=99k+i, \attB=0)  \mid i\in[99]\right\}\\
    \Rend &= \bigcup_{k\in\{0,\ldots,n\}}\left\{(\attK=\attA=99k+i, \attB=b_k)\mid i\in[99]\right\}
\end{align*}
where $b_k=\sum_{\ell=0}^k s_\ell$.
This reduction takes polynomial time. The claim is that it is
a positive instance of \textsc{SubsetSum} if and only if the
best diff between $\Rbeg$ and $\Rend$ under $\famop{\lgeq}{\affine}$
has cost $n$.

The proof is similar to that given for \emph{increment}
and \emph{assignment/increment}, and thus only the differences are sketched here.
In the reduction, instead of one tuple for
each integer in $S$, a \emph{block} of $99$ tuples with the same $\attB$
value is created.
Intuitively, if an operation has a modifier with nonzero slope
($\attB\gets b\attB+c$ with $b\neq 0$)
and it matches multiple tuples in the same block,
then it can break the ``same $\attB$ value''
requirement within that block. It can take a few operations or one
operation with zero slope to fix the block. It can be shown that modifiers
with nonzero slope are unnecessary in the best diff in this instance.

\subsection{With Range Conditions}
With \emph{range} ($\attA\in [a,z]$) conditions, the problem is $\NP$-hard
for all families of operations of interest, except the one with
the \emph{assignment} modifier, similar to the previous case
with \emph{at-most/at-least} conditions. The arguments
utilize the same core ideas, but are somewhat more complicated.

\subsubsection{With Assignment Modifiers}
As in cases previously discussed, there
is a polynomial time algorithm for the family of operations
with the \emph{assignment} modifier. The reasoning is slightly
different although the main idea is similar: it is possible to avoid having a tuple
selected by two ranges that partially overlap. That is, there is a diff
for which any two ranges are either completely disjoint or are
such that one is completely contained within the other.

\begin{theorem}
    The $\bdI(\famop{\range}{\gets})$ problem
    can be solved in $O(N^4)$ time.
\label{rangegets}
\end{theorem}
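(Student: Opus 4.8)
The plan is to recast $\bdI(\famop{\range}{\gets})$ as an interval-painting problem and then solve it with an interval dynamic program justified by a laminarity (nesting) structure theorem.

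First I would reduce to painting. Since the modifier is pure assignment, the final $\attB$ value of a tuple is exactly the color $b$ of the \emph{last} operation whose range $[a,z]$ contains its (unchanging) $\attA$ value, or its original $\attB$ value if no operation covers it. Collapsing all tuples that share an $\attA$ value into one point sorted by $\attA$ (rejecting the instance if two such tuples disagree in $\Rend$), we obtain points $x_1<\cdots<x_p$, each with a target color $t_i$ (its $\attB$ value in $\Rend$) and a flag marking it \emph{free} iff every tuple at $x_i$ already has $t_i$ in $\Rbeg$. Because interval endpoints may be placed at real values between consecutive distinct $\attA$ values, a diff is then a sequence of ``paint an interval with a color, later paints overriding earlier'' operations, and we want the fewest of them so that each point ends at $t_i$, where free points may be left untouched. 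This is a variant of the classic minimum-strokes problem, complicated by preexisting colors and skippable points.

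The crux is a structural lemma: there is a best diff whose ranges form a \emph{laminar} family (any two ranges are disjoint or nested), matching the informal claim preceding the theorem. I would prove it by an exchange argument on a best diff $F=(f_1,\ldots,f_m)$ of minimum total length $\ell(F)$ (possible since all best diffs share the minimum cost $m$ and $\ell$ is a well-ordered integer quantity). Suppose $f_i$ and $f_j$ with $i<j$ properly cross. Their overlap is contained in both ranges, hence in the later range of $f_j$; so every point in the overlap is covered by $f_j$ at time $j>i$ and therefore never has $f_i$ as its last covering operation. Shrinking $f_i$ to delete exactly the overlap—which sits at one end of $f_i$ in a proper crossing, leaving a nonempty interval—thus changes no tuple's final value and keeps the cost at $m$, yet strictly decreases $\ell(F)$, contradicting minimality. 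Hence a minimum-length best diff is laminar. This exchange is the main obstacle, mostly in checking that the deleted overlap is always a proper end-segment of the earlier range and that the redundancy of $f_i$ there is unconditional.

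Given laminarity, I would turn the nesting forest into an interval DP. In a laminar optimal diff, each connected component of the union of ranges has a unique outermost range covering the whole component, and (since a nested inner range must come \emph{after} its container to survive) this outermost range is applied first, recoloring the entire component to one uniform background color before the nested ranges fix the deviations; this yields a uniform-background subproblem. I would define $B(i,j,c)$, the minimum number of operations to drive points $x_i,\ldots,x_j$ to their targets starting from a uniform background color $c$, with a strange-printer-style recurrence that either skips a point already equal to $c$ or paints the leftmost deviating point and merges its stroke with a later same-target point at a split index $k$; the parameter $c$ ranges only over the $O(N)$ target colors, since any useful stroke can be taken to match some target it finalizes. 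A top-level DP $T(i,j)$ then partitions $[i,j]$ into already-correct free points (requiring initial $=$ target) and painted components, each costing $1+\min_c B(\cdot,\cdot,c)$, the leading $1$ paying for the outermost stroke. The dominant term is computing $B$: $O(N^3)$ states times an $O(N)$ transition gives the claimed $O(N^4)$. The remaining care is verifying that the decomposition is exhaustive—that overpainting an already-correct point is never strictly beneficial, so uncovered points may be assumed free—which follows from the same shaving argument used for laminarity.
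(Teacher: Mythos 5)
Your proposal follows essentially the same route as the paper's proof: a laminar-structure lemma established by an exchange argument on a best diff of minimum total length (the paper's statement additionally records the ordering fact you prove in passing, namely that for $i<j$ the ranges are either disjoint or $f_j$'s range is nested inside $f_i$'s), followed by an interval dynamic program indexed by a segment of points and a uniform background value. Your two tables $B(i,j,c)$ and $T(i,j)$ are together exactly the paper's single table $f(m_1,m_2,\delta)$, where $\delta$ ranges over the $O(N)$ colors plus a $\textsc{Null}$ value meaning ``original $\Rbeg$ values''; both give $O(N^3)$ states with $O(N)$-time transitions, hence $O(N^4)$.

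One genuine slip in your preprocessing: the rule ``reject the instance if two tuples sharing an $\attA$ value disagree in $\Rend$'' is too strong. Such an instance can still be feasible, provided every tuple at that $\attA$ value is unchanged from $\Rbeg$ to $\Rend$: any operation covers all tuples at that value or none, so the correct condition is that this point must never be covered by any operation, i.e., it is a barrier, not a witness of infeasibility. For example, if $\Rbeg=\Rend$ contains two tuples with $\attA=5$ and $\attB$ values $3$ and $7$, your algorithm rejects, while the empty diff (cost $0$) is optimal. The fix is local and does not disturb the rest of your argument: mark such a point ``uncoverable,'' reject only if some tuple at it needs its $\attB$ value to change, and in the DP forbid any painted segment (and any stroke in the recurrence for $B$) from containing an uncoverable point. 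With that correction your proof is sound and matches the paper's.
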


\subsubsection{With Increment Modifiers}
With \emph{increment} modifiers, like before, the problem is $\NP$-hard.
This follows from the same reduction from \textsc{SubsetSum} given
in the proof of Theorem~\ref{lgeqadd}. The proof of the reduction's correctness,
however, is somewhat different.

\begin{theorem}
    The $\bdI(\famop{\range}{+})$ problem
    is $\NP$-hard.
\label{rangeadd}
\end{theorem}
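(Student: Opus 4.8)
The plan is to reuse the \textsc{SubsetSum} reduction from Theorem~\ref{lgeqadd} verbatim --- the same $\Rbeg,\Rend$ with tuples at $\attA=0,\ldots,n$, with $s_0=-t$ and target $\attB$-values $b_k=\sum_{\ell=0}^k s_\ell$ --- and to prove the same characterization: a best diff under $\famop{\range}{+}$ has cost $n$ iff the \textsc{SubsetSum} instance is positive (and cost $n+1$ otherwise). The easy scaffolding carries over almost unchanged. Operations in $\famop{\range}{+}$ are commutative, since increments commute and $\attA$ is never written (as in Lemma~\ref{lgeqcommute}). A diff of cost $n+1$ exists, because each operation $(\attA\ge k,\attB\gets\attB+s_k)$ from Lemma~\ref{lgequpper} has a condition equal, over the value domain $\{0,\ldots,n\}$, to the range condition $\attA\in[k,n]$. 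And, exactly as in Lemma~\ref{lgeqinteger}, I may assume a bounded best diff whose range endpoints are integers in $\{0,\ldots,n\}$, since which tuples a range selects depends only on which integers it contains.

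The core of the argument --- and the part that genuinely differs from the at-most/at-least case --- is a reformulation in terms of a difference array. Since $\attA$ is fixed and every modifier is an increment, all that matters about a diff is the total increment each tuple receives; order is irrelevant. I would encode the target by $D_0=b_0=-t$, by $D_i=b_i-b_{i-1}=s_i$ for $i\in[n]$, and by a virtual boundary term $D_{n+1}=-b_n=t-\sum_{i=1}^n s_i$, so that $\sum_{i=0}^{n+1}D_i=0$. A single operation with condition $\attA\in[a,z]$ and modifier $\attB\gets\attB+b$ adds exactly $+b$ to $D_a$ and $-b$ to $D_{z+1}$ and changes nothing else. Thus each operation corresponds to an \emph{edge} $\{a,z+1\}$ on the node set $\{0,\ldots,n+1\}$ carrying a real flow $b$; conversely every edge $\{i,j\}$ with $i<j$ is realized by the valid range $[i,j-1]$; and $F$ is a diff iff its edge-flows produce net supply $D_v$ at each node $v$. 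So the minimum cost equals the minimum number of flow-carrying edges needed to realize the supply vector $D$.

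This turns the problem into a standard flow-support counting question. I would argue that in any realization every connected component of the used edges has zero total supply (no flow crosses a component boundary along an unused edge), and a connected component on $c$ nodes uses at least $c-1$ edges; hence the number of operations is at least (number of nonzero $D_v$) minus (number of zero-sum groups into which the nonzero nodes are partitioned), a bound achieved by realizing each group as a tree. Here exactly two nodes carry negative supply, namely $D_0=-t$ and $D_{n+1}=t-\sum s_i$ (the generic case $t\neq\sum s_i$; vanishing-$D_v$ boundary cases are checked directly and remain consistent), while $D_1,\ldots,D_n$ are positive. Since every zero-sum group needs a negative node, there are at most two groups, and two groups exist iff the positives split into a subset summing to $t$ and its complement summing to $\sum s_i-t$ --- precisely a \textsc{SubsetSum} solution. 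This gives minimum cost $(n+2)-2=n$ when the instance is positive and $(n+2)-1=n+1$ otherwise, matching the claim.

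The main obstacle is the lower bound. In the at-most/at-least proof each operation separated exactly one ``gap'' between consecutive tuples, so simple counting gave $\cost\ge n$; here a single range has two endpoints and can separate two gaps at once, so naive counting yields only $\cost\ge n/2$. Recovering the sharp bound of $n$ requires the global connectivity/zero-sum argument above, whose binding constraint is the number of negative supplies --- exactly two --- which is where \textsc{SubsetSum} re-enters. I expect the careful bookkeeping to lie in justifying that range endpoints may be taken in $\{0,\ldots,n+1\}$ (so the edge model is exact) and in the zero-sum component lemma; the remaining steps then parallel Lemmas~\ref{lgeqlower} and~\ref{lgeqreduction}.
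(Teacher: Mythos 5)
Your proof is correct, but it takes a genuinely different route from the paper's. The paper reuses the same \textsc{SubsetSum} instance and then argues in stages: a ``jump'' induction (Lemma~\ref{rangelower}) showing each range operation can raise by at most one the number of consecutive tuple pairs whose $\attB$ values ascend, hence $\cost\ge n$; a collision-elimination lemma (Lemma~\ref{rangenocollision}) showing one may assume no two ranges share a left or a right endpoint; and a path-decomposition argument (Lemma~\ref{lgeqandrange}, proved in Appendix~\ref{sec:additionalproofs}) converting a cost-$n$ range diff into a cost-$n$ at-most/at-least diff, after which the existing gap-operation characterization (Lemma~\ref{lgeqreduction}) finishes the job. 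You bypass the reduction to $\famop{\lgeq}{+}$ entirely: your difference-array model characterizes the optimum directly as (number of nonzero supplies) minus (maximum number of zero-sum groups of nonzero nodes), and the binding constraint---at most two negative supplies---is exactly where \textsc{SubsetSum} enters. Notably, the paper's appendix proof uses the same combinatorial skeleton you do (ranges as edges $(a_i,z_i+1)$ on the node set $\{0,\ldots,n+1\}$), and the paper even remarks on the edge-cost-flow connection after Theorem~\ref{rangeaddapprox}, but never exploits it for hardness; your argument does, and in exchange a single lemma yields both the sharp lower bound and the iff-characterization, valid for an arbitrary target supply vector rather than just this instance, while the paper's route gets to amortize the $\famop{\lgeq}{+}$ analysis it has already paid for. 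One small repair: your claim that exactly two nodes carry negative supply fails when $t>\sum_i s_i$ (then $D_{n+1}>0$); in that case only one zero-sum group can exist, the optimum is $n+1$, and the instance is trivially negative, so the conclusion is unaffected---but this case deserves an explicit sentence rather than being folded into the ``boundary cases'' parenthetical, which as written covers only $t=\sum_i s_i$.
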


We prove Theorem~\ref{rangeadd} via a series of lemmas.
Throughout this subsection, $\Rbeg$ and $\Rend$ refer to the sets
of tuples from the reduction.

\begin{lemma}
Operations in $\famop{\range}{+}$ are commutative.
\label{rangecommute}
\end{lemma}
\begin{proof}
    same as in \ref{lgeqcommute}
\end{proof}

\begin{lemma}
    $\diff(\Rbeg,\Rend,\famop{\range}{+})$ is nonempty,
    and if $F$ is its best diff,
    then $\cost(F)\leq n+1$.
\label{rangeupper}
\end{lemma}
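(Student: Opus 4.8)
The plan is to prove this upper bound exactly as in Lemma~\ref{lgequpper}, by exhibiting a single explicit diff of cost $n+1$; the only change is to replace each \emph{at-least} condition with an equivalent \emph{range} condition. Since all $\attA$ values occurring in $\Rbeg$ and $\Rend$ lie in $\{0,\ldots,n\}$ by construction, the range clause $\attA\in[k,n]$ selects precisely the tuples with $\attA\geq k$, so a range condition can faithfully simulate the at-least condition used in the earlier construction.

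Concretely, I would define $F=(f_0,\ldots,f_n)$ where, for each $k\in\{0,\ldots,n\}$,
\[
f_k=\left(\attA\in[k,n],\ \attB\gets\attB+s_k\right).
\]
First I would verify that $F$ is a diff, i.e.\ that $F(\Rbeg)=\Rend$. Starting from $\Rbeg$, where every tuple has $\attB=0$, the tuple with $\attK=\attA=k$ is incremented by $s_\ell$ exactly when $k\in[\ell,n]$, i.e.\ when $\ell\leq k$; summing these contributions gives that tuple a final $\attB$ value of $\sum_{\ell=0}^{k}s_\ell=b_k$, which matches $\Rend$. Hence $F$ is indeed a diff between $\Rbeg$ and $\Rend$ under $\famop{\range}{+}$, and since every range operation has cost $1$ and there are $n+1$ of them, $\cost(F)=n+1$.

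This already yields both claims. Because $F$ is a diff, the set $\diff(\Rbeg,\Rend,\famop{\range}{+})$ is nonempty, and by the well-ordering principle it therefore contains a best diff; since a best diff has cost no larger than $\cost(F)$, any best diff $F$ satisfies $\cost(F)\leq n+1$. I do not anticipate any real obstacle here: this is the \emph{easy direction} of the reduction, and the only point requiring care is confirming that $[k,n]$ exactly captures the at-least semantics under the bounded $\attA$-domain $\{0,\ldots,n\}$. The genuine difficulty in establishing Theorem~\ref{rangeadd} lies instead in the matching lower-bound lemmas (the range analogues of the canonical/separation arguments of Lemmas~\ref{lgeqinteger}--\ref{lgeqreduction}), which are stated and proved separately.
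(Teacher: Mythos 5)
Your proof is correct and matches the paper's own argument: the paper's proof of this lemma exhibits exactly the same diff $F=(f_0,\ldots,f_n)$ with $f_k=(\attA\in[k,n],\attB\gets\attB+s_k)$, and you have simply spelled out the verification that each tuple with $\attA=k$ ends at $b_k=\sum_{\ell=0}^{k}s_\ell$. No differences worth noting.
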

\begin{proof}
    A sequence of operations $F=(f_0,\ldots,f_n)$ where
    $f_k=(\attA\in[k,n], \attB\gets\attB+s_k)$ for $k\in\{0,\ldots,n\}$
    is a diff between $\Rbeg$ and $\Rend$, and $\cost(F)=n+1$.
\end{proof}

\begin{lemma}
    $\diff(\Rbeg,\Rend,\famop{\range}{+})$
    contains a
    \emph{bounded} best diff $F'=(f'_1,\ldots,f'_m)$
    in which for all $i\in[m]$,
    $f'_i=(\attA\in [a'_i,z'_i],\attB\gets\attB+b'_i)$ where
    $a'_i$ and $z'_i$ are integers in $\{0,\ldots,n\}$.
\label{rangeinteger}
\end{lemma}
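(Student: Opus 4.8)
The plan is to mirror the proof of Lemma~\ref{lgeqinteger} almost verbatim, replacing the one-sided clamping of an at-most/at-least bound with an \emph{inward} rounding of both endpoints of each range. First I would invoke Lemma~\ref{rangeupper}: since $\diff(\Rbeg,\Rend,\famop{\range}{+})$ is nonempty and every diff has positive integer cost, the well-ordering principle yields a best diff $F=(f_1,\ldots,f_m)$, each of whose operations has the form $f_i=(\attA\in[a_i,z_i],\attB\gets\attB+b_i)$. The goal is to reshape each condition so its endpoints become integers in $\{0,\ldots,n\}$ without changing the action of the operation.

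The key structural fact I would establish up front is that in a best diff every operation selects at least one tuple. By construction of the reduction, all $\attA$-values occurring in $\Rbeg$ and $\Rend$ are integers in $\{0,\ldots,n\}$, and under the $\bdI$ assumption $\attA$ is read-only, so the $\attA$-column is unchanged throughout the run. If some $f_i$ matched no tuple it would act as the identity, and deleting it would produce a diff of strictly smaller cost (each operation costs $1$), contradicting optimality of $F$. Hence each range $[a_i,z_i]$ contains at least one integer of $\{0,\ldots,n\}$.

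Then I would define the rounded, clamped endpoints $a'_i=\max\{0,\lceil a_i\rceil\}$ and $z'_i=\min\{n,\lfloor z_i\rfloor\}$, and set $f'_i=(\attA\in[a'_i,z'_i],\attB\gets\attB+b_i)$. Because every tuple's $\attA$-value is an integer $v\in\{0,\ldots,n\}$, we have $v\in[a_i,z_i]$ iff $v\in[a'_i,z'_i]$; and since $[a_i,z_i]$ contains at least one such integer, $a'_i\leq z'_i$, so $[a'_i,z'_i]$ is a legitimate range with integer endpoints in $\{0,\ldots,n\}$. Consequently $f'_i$ acts identically to $f_i$ on every relation arising along the run, so $F'=(f'_1,\ldots,f'_m)$ satisfies $F'(\Rbeg)=\Rend$ with $\cost(F')=\cost(F)$, making $F'$ a bounded best diff of the required form.

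The main obstacle is the degenerate-range bookkeeping: after clamping I must guarantee $a'_i\leq z'_i$ so that the condition is well-formed, and I must be sure the original and rounded conditions select the \emph{same tuples} rather than merely the same integer lattice points. Both issues are dispatched by the two facts above---no operation in a best diff is a no-op, and the $\attA$-column is integral and confined to $\{0,\ldots,n\}$ throughout---after which what remains is only the routine case analysis of $\lceil\cdot\rceil$ and $\lfloor\cdot\rfloor$ under the $\max\{0,\cdot\}$ and $\min\{n,\cdot\}$ clamps.
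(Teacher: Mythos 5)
Your proof is correct and follows essentially the same route as the paper's: invoke Lemma~\ref{rangeupper} to obtain a best diff, then clamp and round each endpoint into $\{0,\ldots,n\}$, observing that since every tuple's $\attA$-value is an integer in $\{0,\ldots,n\}$ (and $\attA$ is never modified), the rounded condition matches exactly the same tuples. Your additional observation that a best diff contains no no-op operations---used to rule out degenerate ranges after rounding---is a small refinement the paper's proof glosses over, but it does not change the substance of the argument.
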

\begin{proof}
    By Lemma~\ref{rangeupper},
    $\diff(\Rbeg,\Rend,\famop{\range}{+})$
    contains a best diff $F=(f_1,\ldots,f_m)$.
    Entries in the $\attA$ attribute in $\Rbeg$ and $\Rend$, by construction,
    are integers in $\{0,\ldots,n\}$. Define
    \[\mathrm{bnd}(a)=\max\{0,\min\{n,a\}\}\]
    We construct $F'=(f'_1,\ldots,f'_m)$ from $F$:
    for each $i\in[m]$,
    \begin{itemize}
    \item if $f_i=(\attA\in[a_i,z_i],\attB\gets\attB+b_i)$,
    then we construct $f'_i=(\attA\in [\lceil\mathrm{bnd}(a_i)\rceil,\lfloor\mathrm{bnd}(z_i)\rfloor],\attB\gets\attB+b_i)$,
    since $\attA\in[a_i,z_i]$ if and only if $\attA\in[\lceil\mathrm{bnd}(a_i)\rceil,\lfloor\mathrm{bnd}(z_i)\rfloor]$.
    \end{itemize}
    Thus, $F'$ is a bounded best diff in
    $\diff(\Rbeg,\Rend,\famop{\range}{+})$.
\end{proof}

\begin{lemma}
    If $F$ is a best diff in
    $\diff(\Rbeg,\Rend,\famop{\range}{+})$,
    then $\cost(F)\geq n$.
\label{rangelower}
\end{lemma}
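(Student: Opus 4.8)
The plan is to reformulate the requirement $F(\Rbeg)=\Rend$ in terms of the \emph{consecutive differences} of the target $\attB$ values, and then run a covering argument over these differences. Since all best diffs share the same cost, by Lemma~\ref{rangeinteger} I may assume $F=(f_1,\ldots,f_m)$ is bounded, with $f_i=(\attA\in[a_i,z_i],\,\attB\gets\attB+c_i)$ and integer endpoints $a_i,z_i\in\{0,\ldots,n\}$; because every range operation costs $1$, we have $\cost(F)=m$, so the goal reduces to showing $m\ge n$. As $\Rbeg$ has all $\attB=0$, the final $\attB$ value of the tuple with $\attA=j$ is $\sum_{i:\,a_i\le j\le z_i} c_i$, which must equal $b_j=\sum_{\ell=0}^{j}s_\ell$. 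For each $k\in[n]$ I would then define the gap value $\delta_k=b_k-b_{k-1}=s_k$, which is strictly positive since $S$ consists of positive integers.

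The key step is to pin down, for each gap $k$, exactly which operations affect $\delta_k$. Writing $\delta_k$ as the difference of the two prefix sums above, the coefficient of $c_i$ is $\mathbf{1}[a_i\le k\le z_i]-\mathbf{1}[a_i\le k-1\le z_i]$, which a short case check shows equals $+1$ precisely when $k=a_i$, equals $-1$ precisely when $k=z_i+1$, and is $0$ otherwise. Hence a single operation $f_i$ contributes $+c_i$ to the gap at its left endpoint ($k=a_i$, when $a_i\ge 1$) and $-c_i$ to the gap just past its right endpoint ($k=z_i+1$, when $z_i\le n-1$), and to no other gap. The crucial observation is that these two contributions are negatives of one another: for any fixed $c_i\neq 0$ at most one of $+c_i$ and $-c_i$ is positive, so each operation makes a strictly positive contribution to \emph{at most one} gap.

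With this in hand the lower bound follows by a covering/injectivity argument. Since $\delta_k=s_k>0$, the sum of all contributions to gap $k$ must be positive, so at least one operation contributes positively to it; I would pick one such operation $\sigma(k)$ for each $k\in[n]$. By the observation above, no operation can be picked for two distinct gaps, so $\sigma\colon[n]\to\{f_1,\ldots,f_m\}$ is injective, giving $m\ge n$ and hence $\cost(F)\ge n$. I expect the main obstacle to be precisely the sign bookkeeping in the second step: the naive count---``each range has two endpoints, so roughly $\lceil n/2\rceil$ operations might cover all $n$ gaps''---is defeated only by noticing that the two endpoint contributions carry opposite signs while every required difference $s_k$ is positive, so only one endpoint per operation is ever ``useful.'' This is exactly the feature that distinguishes the range case from the at-most/at-least lower bound of Lemma~\ref{lgeqlower}, where each operation carried a single boundary to begin with.
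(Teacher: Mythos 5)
Your proof is correct, but it takes a different route from the paper's. The paper proves the bound dynamically: it defines the \emph{jump} of a relation as the number of indices $i\in[n]$ whose consecutive tuples satisfy $b_{i-1}<b_i$, and shows by induction over the sequence of operations that each operation in $\famop{\range}{+}$ raises the jump by at most $1$ --- the case analysis being that a positive increment can create a new ascent only where the range's left endpoint falls, and a negative increment only just past its right endpoint. Since $\Rbeg$ has jump $0$ and $\Rend$ has jump $n$, any diff (best or not, bounded or not) needs at least $n$ operations. Your argument is the static counterpart of the same endpoint-sign insight: you use commutativity of increments to express each final $\attB$ value as a sum, take finite differences to see that $f_i$ contributes $+c_i$ to the gap at $k=a_i$ and $-c_i$ to the gap at $k=z_i+1$ and nothing elsewhere, and conclude that each operation can contribute positively to at most one of the $n$ strictly positive gaps $s_k$, whence injectivity gives $m\ge n$. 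What the paper's potential-function induction buys is independence from both commutativity and the integrality normalization: it applies verbatim to an arbitrary sequence of operations, with real endpoints, without invoking Lemma~\ref{rangeinteger} or Lemma~\ref{rangecommute}. What your version buys is a sharper accounting --- an explicit injection from gaps to operations, which makes transparent that every gap needs a dedicated operation, mirroring the gap-operation structure of Lemma~\ref{lgeqseparation} in the $\lgeq$ case. (Two small notes: your appeal to Lemma~\ref{rangeinteger} is legitimate because all best diffs share the same cost, but it is also avoidable, since with real endpoints the $+1$ coefficient occurs only at $k=\lceil a_i\rceil$ and the $-1$ only at $k=\lfloor z_i\rfloor+1$; and your unstated use of order-independence of the final values is justified here exactly by Lemma~\ref{rangecommute}.)
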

\begin{proof}
    Define the \emph{jump} of a relation as the number of values $i\in[n]$ such that for tuples
    $(\attK=i-1, \attA=i-1, \attB=b_{i-1})$
    and $(\attK=i, \attA=i, \attB=b_{i})$, we have $b_{i-1}<b_i$.
    Note that the jumps in $\Rbeg$ and $\Rend$ are $0$ and $n$, respectively.
    We prove the following statement by induction: after applying
    $m$ operations from $\famop{\range}{+}$ to $\Rbeg$, the jump
    of the resulting relation is at most $m$. This implies that at least
    $n$ operations are required to transform $\Rbeg$ into $\Rend$.

    The proof follows.
    The base case $m=0$ is trivial. Assume, as an induction hypothesis,
     that for $m'<m$,
    applying $m'$ operations to $\Rbeg$ resulting in jump that is at most $m'$.
    Let $R'$ be the result of applying $m-1$ operations
    on $\Rbeg$, and thus its jump is at most $m-1$.
    Consider applying $f=(\attA\in[a, z], \attB\gets\attB+b)$ to $R'$
    and let $f(R')=R''$.
    Consider tuples $(\attK=i-1, \attA=i-1, \attB=b_{i-1})$
    and $(\attK=i, \attA=i, \attB=b_{i})$ in $R'$ where $b_{i-1}\geq b_i$.
    \begin{itemize}
    \item If $i-1$ and $i$ are both not in $[a, z]$, then the $\attB$ values
    remain $b_{i-1}$ and $b_i$ respectively, and $b_{i-1}\geq b_i$.
    This does not contribute to increase in jump.
    \item If $i-1$ and $i$ are both in $[a, z]$, then the $\attB$ values
    become $b_{i-1}+b$ and $b_i+b$ respectively, and $b_{i-1}+b\geq b_i+b$.
    This does not contribute to increase in jump.
    \item If $i-1 < a \leq i$, then the $\attB$ values
    become $b_{i-1}$ and $b_i+b$ respectively, and if $b_{i-1}< b_i+b$,
    then $b>0$.
    \item If $i-1 \leq z < i$, then the $\attB$ values
    become $b_{i-1}+b$ and $b_i$ respectively, and if $b_{i-1}+b< b_i$,
    then $b<0$.
    \end{itemize}
    Thus, jump can only increase by at most $1$ depending on the value of $b$:
    if $b>0$, then jump can only increase because of $i$ where $i-1<a\leq i$,
    and if $b<0$, then jump can only increase because of $i$ where $i-1\leq z<i$.
\end{proof}

\begin{lemma}
    $\diff(\Rbeg,\Rend,\famop{\range}{+})$
    contains a bounded best diff
    $F=(f_1,\ldots,f_m)$ in which there are no two
    operations
    \begin{align*}
        f_i&=(\attA\in[a_i,z_i],\attB\gets\attB+b_i)\text{ and}\\
        f_j&=(\attA\in[a_j,z_j],\attB\gets\attB+b_j)
    \end{align*}
    such that $a_i=a_j$ or $z_i=z_j$.
\label{rangenocollision}
\end{lemma}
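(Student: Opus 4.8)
The plan is to prove Lemma~\ref{rangenocollision} by an exchange argument: start from the bounded best diff $F$ guaranteed by Lemma~\ref{rangeinteger}, and among all bounded best diffs of minimum cost, choose one that is extremal with respect to some auxiliary quantity (such as total length $\ell(F)$, or the number of pairs of operations sharing a left endpoint or a right endpoint). I would then argue that if $F$ contained two operations whose conditions share a left endpoint (say $a_i=a_j$) or a right endpoint (say $z_i=z_j$), we could transform $F$ into another diff of the same cost that is strictly smaller in the chosen auxiliary measure, contradicting extremality.

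First I would set up the commutativity we are allowed to use: by Lemma~\ref{rangecommute}, operations in $\famop{\range}{+}$ commute, so I can freely reorder the $f_i$ and focus purely on the multiset of (interval, increment) pairs, ignoring order. This is the key simplification that makes the exchange local. Next, consider two operations sharing a left endpoint, $f_i=(\attA\in[a,z_i],\attB\gets\attB+b_i)$ and $f_j=(\attA\in[a,z_j],\attB\gets\attB+b_j)$ with $z_i\le z_j$ (without loss of generality). Because they commute and both start at $a$, their combined effect is: add $b_i+b_j$ on $[a,z_i]$ and add $b_j$ on $(z_i,z_j]$. I would replace the pair by $f'_i=(\attA\in[a,z_i],\attB\gets\attB+(b_i+b_j))$ and $f'_j=(\attA\in[z_i+1,z_j],\attB\gets\attB+b_j)$, which reproduces exactly the same net change to every tuple (here I use that $\attA$ values are integers in $\{0,\dots,n\}$, so the interval $[z_i+1,z_j]$ selects precisely the tuples in $(z_i,z_j]$), preserves the cost of two operations, and eliminates the shared left endpoint. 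The symmetric replacement handles a shared right endpoint $z_i=z_j$: there I would split off the portion $[a_i,a_j-1]$ (assuming $a_i\le a_j$) with increment $b_i$ and keep $[a_j,z]$ with increment $b_i+b_j$. In each case I must check the auxiliary measure strictly decreases; using total length $\ell$ is natural, since $\ell(f'_i)+\ell(f'_j)=\ell(f_i)+\ell(f_j)$ only when intervals nest perfectly, so I would instead count the number of endpoint-collisions and show the replacement strictly reduces it while never increasing cost.

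The main obstacle I anticipate is bookkeeping to ensure the replacement genuinely reduces the extremal quantity rather than merely shuffling collisions around — for instance, the new operation on $[z_i+1,z_j]$ might share an endpoint with some third operation $f_k$, so a naive ``number of collisions'' potential need not drop. To handle this cleanly I would choose the potential carefully, e.g. lexicographically minimize over all minimum-cost bounded diffs, first the number of distinct left endpoints used \emph{summed with} the number of distinct right endpoints, or more robustly minimize $\sum_i \ell(f_i)$ (total length) as the tiebreaker: the replacements above never increase total length, and at least one of them strictly decreases it whenever a collision is present, because splitting a nested pair replaces two copies of an overlapping region by a single pass over the non-overlapping tail. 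I would verify that every replacement keeps all interval endpoints within $\{0,\dots,n\}$ so that boundedness is preserved, and conclude that the extremal best diff has no two operations sharing a left or right endpoint, establishing the lemma.
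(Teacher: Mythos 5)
Your proposal is correct and takes essentially the same route as the paper's proof: the paper likewise fixes a bounded best diff of minimum total length (via Lemma~\ref{rangeinteger} and commutativity from Lemma~\ref{rangecommute}) and removes each shared-endpoint collision with exactly your split replacement $\bigl([a,z_i],b_i+b_j\bigr)$, $\bigl([z_i+1,z_j],b_j\bigr)$, which preserves correctness and cost while strictly decreasing total length, contradicting extremality. The only cosmetic difference is that in the fully coincident case $z_i=z_j$ the paper merges the two operations into one (a cheaper diff, contradicting optimality directly) instead of leaving your degenerate interval $[z_i+1,z_i]$, a detail your total-length argument absorbs anyway.
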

\begin{proof}
    Define a \emph{collision} of $F$ as a pair $(i,j)$
    where $i,j\in[m]$ and $i<j$ such that $a_i=a_j$ or $z_i=z_j$.

    By Lemma~\ref{rangeinteger}, let $F=(f_1,\ldots,f_m)$,
    where $f_i=(\attA\in[a_i,z_i],\attB\gets\attB+b_i)$ for all $i\in[m]$,
    be a bounded best diff with the smallest total length.
    We show that $F$ contains no collisions.

    The proof follows.
    Assume to the contrary that $F$ has a collision $(i,j)$.
    Suppose $a_i=a_j$. (The argument for when $z_i=z_j$ is symmetrical.)
    By commutativity,
    \[F'=(f_1,\ldots,f_{i-1},f_i,f_j,f_{i+1},\ldots,f_{j-1},f_{j+1},\ldots,f_m)\]
    is also a bounded best diff.

    Case 1: if $z_i = z_j$ then let
    \[    g  = (\attA\in[a_i,z_i],\attB\gets\attB+(b_i+b_j)) \]
    then $F''$ defined as follows is also a bounded best diff:
    \[F''=(f_1,\ldots,f_{i-1},g,f_{i+1},\ldots,f_{j-1},f_{j+1},\ldots,f_m)\]
    \begin{center}
    \begin{tikzpicture}
        \begin{scope}[xshift=0.0cm]
            \draw[thin] (-0.2,-1.5) rectangle (3.7,0.3);
            \draw[<->,thick] (-0.1,0)--(3.1,0) node[anchor=west] {$\attA$};
            \draw[|-|] (0.5,-0.3) node[anchor=east] {$a_i$} --node[below=-1.5pt]{$\attB\gets \attB+b_i$} (2.5,-0.3) node[anchor=west] {$z_i$};
            \draw[|-|] (0.5,-1.0) node[anchor=east] {$a_j$} --node[below=-1.5pt]{$\attB\gets \attB+b_j$} (2.5,-1.0) node[anchor=west] {$z_j$};
        \end{scope}
        \node () at (4.0,-0.6) {$=$};
        \begin{scope}[xshift=4.5cm]
            \draw[thin] (-0.2,-1.5) rectangle (3.7,0.3);
            \draw[<->,thick] (-0.1,0)--(3.1,0) node[anchor=west] {$\attA$};
            \draw[|-|] (0.5,-0.65) node[anchor=east] {$a_i$} --node[below=-0.5pt]{$\attB\gets \attB+(b_i+b_j)$} (2.5,-0.65) node[anchor=west] {$z_i$};
        \end{scope}
    \end{tikzpicture}
    \end{center}
    However, $F''$ has smaller cost than $F$,
    contradicting the fact that $F$ is a best diff.

    Case 2: if $z_i < z_j$ then let
    \begin{align*}
        g_1 &= (\attA\in[a_i,z_i],\attB\gets\attB+(b_i+b_j))\\
        g_2 &= (\attA\in[z_i+1,z_j],\attB\gets\attB+b_j)
    \end{align*}
    then $F''$ defined as follows is also a bounded best diff:
    \[F''=(f_1,\ldots,f_{i-1},g_1,g_2,f_{i+1},\ldots,f_{j-1},f_{j+1},\ldots,f_m)\]
    \begin{center}
    \begin{tikzpicture}
        \begin{scope}[xshift=0.0cm]
            \draw[thin] (-0.2,-1.5) rectangle (3.7,0.3);
            \draw[<->,thick] (-0.1,0)--(3.1,0) node[anchor=west] {$\attA$};
            \draw[|-|] (0.3,-0.3) node[anchor=east] {$a_i$} --node[below=-1.5pt]{$\attB\gets \attB+b_i$} (1.7,-0.3) node[anchor=west] {$z_i$};
            \draw[|-|] (0.3,-1.0) node[anchor=east] {$a_j$} --node[below=-1.5pt]{$\attB\gets \attB+b_j$} (3.0,-1.0) node[anchor=west] {$z_j$};
        \end{scope}
        \node () at (4.0,-0.6) {$=$};
        \begin{scope}[xshift=4.5cm]
            \draw[thin] (-0.2,-1.5) rectangle (3.7,0.3);
            \draw[<->,thick] (-0.1,0)--(3.1,0) node[anchor=west] {$\attA$};
            \draw[|-|] (0.3,-0.3) node[anchor=east] {$a_i$} --node[below=-1.5pt,xshift=5pt]{$\attB\gets \attB+(b_i+b_j)$} (1.7,-0.3) node[anchor=west] {$z_i$};
            \draw[|-|] (1.8,-1.0) node[anchor=east] {$z_i+1$} --node[below=-0.5pt,xshift=2pt]{$\attB\gets \attB+b_j$} (3.0,-1.0) node[anchor=west] {$z_j$};
        \end{scope}
    \end{tikzpicture}
    \end{center}
    However, $\ell(F'')=\ell(F')-(z_i-a_i+1)<\ell(F')=\ell(F)$,
    contradicting the fact that $F'$ has the smallest total length.

    Case 3: if $z_i > z_j$, the proof is similar to Case 2.

    Therefore, $F$ has no collisions,
    and thus $\diff(\Rbeg,\Rend,\famop{\range}{+})$
    contains a bounded best diff
    $F=(f_1,\ldots,f_m)$ in which there are no two
    operations
    \begin{align*}
        f_i&=(\attA\in[a_i,z_i],\attB\gets\attB+b_i)\text{ and}\\
        f_j&=(\attA\in[a_j,z_j],\attB\gets\attB+b_j)
    \end{align*}
    such that $a_i=a_j$ or $z_i=z_j$.
\end{proof}

\begin{lemma}
    Best diffs in $\diff(\Rbeg,\Rend,\famop{\range}{+})$
    have cost $n$ if and only if best diffs in
    $\diff(\Rbeg,\Rend,\famop{\lgeq}{+})$
    have cost $n$.
\label{lgeqandrange}
\end{lemma}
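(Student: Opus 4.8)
The plan is to prove the two implications separately, using that by Lemmas~\ref{lgequpper}, \ref{lgeqlower}, \ref{rangeupper}, and~\ref{rangelower} the best-diff cost in each family lies in $\{n,n+1\}$; hence it suffices to decide, on each side, whether the threshold cost $n$ is attained, and to show the two sides attain it simultaneously.

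For the ($\Leftarrow$) direction I would use a direct embedding of operations. After bounding via Lemma~\ref{lgeqinteger}, an \emph{at-most} clause $\attA\leq a$ with $a\in\{0,\ldots,n\}$ selects exactly the same tuples as the range clause $\attA\in[0,a]$, and an \emph{at-least} clause $\attA\geq a$ selects the same tuples as $\attA\in[a,n]$, while the modifier is untouched and each operation costs $1$ in both families. Thus a cost-$n$ diff in $\famop{\lgeq}{+}$ maps term-by-term to a cost-$n$ diff in $\famop{\range}{+}$, and combined with the lower bound of Lemma~\ref{rangelower} this forces the range best diff to have cost exactly $n$.

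For the ($\Rightarrow$) direction, assume the range best diff has cost $n$ and invoke Lemmas~\ref{rangeinteger} and~\ref{rangenocollision} to obtain a bounded, collision-free best diff $F=(f_1,\ldots,f_n)$ with $f_i=(\attA\in[a_i,z_i],\attB\gets\attB+b_i)$, where the $a_i$ are pairwise distinct, the $z_i$ are pairwise distinct, and every $b_i\neq 0$ (a zero-increment operation could be deleted, yielding a cheaper diff and contradicting Lemma~\ref{rangelower}). Because left and right endpoints are distinct, summing over all operations, the total change to the difference between the tuples $\attA=k-1$ and $\attA=k$ equals $\alpha_k-\zeta_{k-1}$, where $\alpha_k$ (resp.\ $\zeta_{k-1}$) is the increment of the unique operation with left endpoint $k$ (resp.\ right endpoint $k-1$), or $0$ if none exists. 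Matching $\Rbeg$ to $\Rend$ then gives the system $\alpha_k-\zeta_{k-1}=s_k$ for $k\in[n]$, and the value of the extreme tuple $\attK=\attA=0$ forces the unique operation with $a_i=0$ to have increment $-t$.

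The main obstacle, and the reason this proof is genuinely different from the at-most/at-least case, is that a collision-free cost-$n$ range diff may legitimately contain operations whose intervals are anchored to neither end of the domain, so one cannot simply reinterpret each operation as an at-most or at-least operation. My plan to overcome this is a telescoping/chain argument: starting from the operation $g_0$ with $a_{g_0}=0$ and increment $-t$, repeatedly follow the operation whose left endpoint is one past the current right endpoint, using the relation $\alpha_{z+1}-\zeta_{z}=s_{z+1}$ to accumulate $b_{g_{j+1}}=b_{g_j}+s_{z_{g_j}+1}$. Since left endpoints strictly increase along the chain, it is simple and must terminate; a counting argument (a chain reaching a right endpoint of $n$ would require $n+1$ distinct operations, exceeding the $n$ available, since all $s_k>0$ preclude a proper sub-collection summing to the full total) rules out termination at the top of the domain, so the chain must terminate at the single unused left slot $p$, where the system gives $\zeta_{p-1}=-s_p$. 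Equating the two expressions for $b_{g_r}=\zeta_{p-1}$ yields $\sum_{j<r} s_{z_{g_j}+1}+s_p=t$ over distinct indices, exhibiting a \textsc{SubsetSum} witness; by Lemma~\ref{lgeqreduction} this makes the $\famop{\lgeq}{+}$ best diff cost $n$, completing the direction.
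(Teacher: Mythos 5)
Your proposal is correct, and while your ($\Leftarrow$) direction coincides with the paper's (termwise reinterpretation of bounded $\leq/\geq$ conditions as ranges $[0,a]$ and $[a,n]$, combined with the lower bound of Lemma~\ref{rangelower}), your ($\Rightarrow$) direction takes a genuinely different route. The paper builds a graph on vertices $\{0,\ldots,n+1\}$ with an edge $(a_i,z_i+1)$ per operation, decomposes it into vertex-disjoint paths, proves by the telescoping/positivity argument that \emph{every} path must be anchored at $0$ or at $n+1$, and then rewrites each anchored path as an equal number of at-most (anchored at $0$) or at-least (anchored at $n+1$) operations, thereby producing an $\famop{\lgeq}{+}$ diff of cost $n$ outright. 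You instead analyze only the single chain starting at the operation with left endpoint $0$, rule out its reaching right endpoint $n$, and at the terminating slot read off a \textsc{SubsetSum} witness, transferring back to $\famop{\lgeq}{+}$ via the already-proven Lemma~\ref{lgeqreduction} rather than by converting diffs directly. Your exclusion of termination at the top is sound once made explicit: after bounding and decollision, the unique operation with right endpoint $n$ is the last chain operation, so the value of the tuple $\attA=n$ forces $\sum_{j<r}s_{z_{g_j}+1}=\sum_{k\in[n]}s_k$, and positivity of the $s_k$ then forces the chain to use all $n$ gap indices, i.e., $n+1$ operations---exceeding the $n$ available. That intermediate equation (where ``the full total'' enters) is the crux and should be spelled out, since the bare count alone is false: a chain such as $[0,1],[2,n]$ reaches the top with only two operations, and it is only the endpoint-value constraint plus positivity that rules such configurations out. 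What each approach buys: the paper's is self-contained and constructive, exhibiting an explicit cost-preserving translation between the two operation families, at the price of having to account for all paths, including those anchored at $n+1$; yours is shorter because the operations off your chain never need to be examined at all, at the price of leaning on the \textsc{SubsetSum} characterization as a bridge.
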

The idea of the proof is that a diff from one set can be translated
into a diff from the other set with the same cost. The full proof is
given in Appendix~\ref{sec:additionalproofs}.

By Lemma~\ref{lgeqreduction} and Lemma~\ref{lgeqandrange},
the reduction is correct, implying Theorem~\ref{rangeadd}.

\subsubsection{With Assignment/Increment or Affine Modifiers}
With \emph{assignment/increment} or \emph{affine} modifiers,
once again, the problem is still $\NP$-hard.

\begin{theorem}
    The $\bdI(\famop{\range}{\getsadd})$ problem
    is $\NP$-hard.
\label{rangegetsadd}
\end{theorem}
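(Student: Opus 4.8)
The plan is to mirror the treatment of Theorem~\ref{lgeqgetsadd} and prove Theorem~\ref{rangegetsadd} through a single ``collapse'' lemma asserting that, on the \textsc{SubsetSum} instance, the assignment modifier buys nothing over pure increments. Reusing the very same relations $\Rbeg,\Rend$ from the reduction behind Theorem~\ref{rangeadd}, I would state: best diffs in $\diff(\Rbeg,\Rend,\famop{\range}{\getsadd})$ have cost $n$ if and only if best diffs in $\diff(\Rbeg,\Rend,\famop{\range}{+})$ have cost $n$. Granting this, $\NP$-hardness follows exactly as before by chaining it with Lemma~\ref{lgeqandrange} and Lemma~\ref{lgeqreduction}, so that the existence of a cost-$n$ best diff under $\famop{\range}{\getsadd}$ becomes equivalent to the input being a positive \textsc{SubsetSum} instance.

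The easy direction is containment: every diff in $\diff(\Rbeg,\Rend,\famop{\range}{+})$ is already a diff in $\diff(\Rbeg,\Rend,\famop{\range}{\getsadd})$, so the optimum is at most $n$. The substance is the converse, for which I would adapt Lemma~\ref{lgeqaddandgetsadd}: among all best diffs of cost $n$ under $\famop{\range}{\getsadd}$, select one, $F=(f_1,\ldots,f_n)$, using the fewest assignment modifiers and, subject to that, of smallest total length. Assuming $F$ still contains an assignment, let $i$ be the least index with $f_i=(\attA\in[a_i,z_i],\attB\gets b_i)$; since $i$ is least, every earlier operation is an increment $f_j=(\attA\in[a_j,z_j],\attB\gets\attB+b_j)$, and I would classify each by how $[a_j,z_j]$ meets $[a_i,z_i]$. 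If the ranges are disjoint or if $[a_j,z_j]\supseteq[a_i,z_i]$, then $f_j$ acts on the soon-to-be-overwritten tuples either not at all or uniformly, so it cannot spoil uniformity inside $[a_i,z_i]$. The interesting cases are ones the at-most/at-least setting did not meet: a $f_j$ whose range is a proper subset of $[a_i,z_i]$ is entirely overwritten by $f_i$ and may be deleted, contradicting optimality; and a one-sided partial overlap can be trimmed to its portion lying outside $[a_i,z_i]$, which leaves $F(\Rbeg)$ unchanged (the trimmed-off tuples are reset by $f_i$, and every operation strictly between $f_j$ and $f_i$ is an increment whose effect is independent of the current $\attB$ values) while strictly shrinking total length, again a contradiction. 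Ruling out all three forces the tuples in $[a_i,z_i]$ to share a common value $\beta$ in $(f_1,\ldots,f_{i-1})(\Rbeg)$, whereupon $f_i$ rewrites as the increment $\attB\gets\attB+(b_i-\beta)$, lowering the assignment count---the final contradiction. Hence $F$ is assignment-free, i.e.\ a diff under $\famop{\range}{+}$ of cost $n$, and Lemma~\ref{rangelower} pins the optimum there.

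I expect the partial-overlap bookkeeping to be the crux, precisely because a range has two endpoints whereas an at-most/at-least clause has only one: the single half-line argument of Lemma~\ref{lgeqaddandgetsadd} must here be split into the proper-subrange, left-overlap, and right-overlap situations, and in each I must check that the deletion or trimming preserves the full relation $F(\Rbeg)=\Rend$ rather than merely the value at one tuple. A secondary point worth stating explicitly is that these deletion, trimming, and rewriting moves never increase cost, so applying them to an \emph{arbitrary} best diff shows that the optima under $\famop{\range}{\getsadd}$ and $\famop{\range}{+}$ in fact coincide; this equality is exactly what supplies the lower bound making the ``if'' direction of the lemma an exact cost of $n$ and not merely an upper bound.
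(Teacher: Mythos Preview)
Your proposal is correct and follows essentially the same approach as the paper. The paper's proof of Lemma~\ref{rangeaddandgetsadd} likewise selects a cost-$n$ best diff minimizing first the number of assignment modifiers and then the total length, takes the earliest assignment $f_i=(\attA\in[a_i,z_i],\attB\gets b_i)$, and performs exactly the case split you describe: proper containment $[a_j,z_j]\subseteq[a_i,z_i]$ allows deletion of $f_j$ (contradicting optimality), one-sided overlap allows trimming $f_j$ to its portion outside $[a_i,z_i]$ (contradicting minimal total length), and absent either of these the $\attB$ values in $[a_i,z_i]$ are uniform just before $f_i$, permitting its replacement by an increment.
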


The proof of the theorem is still based on the same
reduction from \textsc{SubsetSum}, and follows from the following lemma,
the proof of which is given in Appendix~\ref{sec:additionalproofs}.
\begin{lemma}
    Best diffs in $\diff(\Rbeg,\Rend,\famop{\range}{\getsadd})$
    have cost $n$ if and only if best diffs in
    $\diff(\Rbeg,\Rend,\famop{\range}{+})$
    have cost $n$.
\label{rangeaddandgetsadd}
\end{lemma}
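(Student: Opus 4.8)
The plan is to mirror the proof of Lemma~\ref{lgeqaddandgetsadd} and show that, for this instance, the assignment modifier provides no additional power: every best diff under $\famop{\range}{\getsadd}$ can be rewritten as a diff under $\famop{\range}{+}$ of the same cost. Write $c_{+}$ and $c_{\getsadd}$ for the best-diff costs under $\famop{\range}{+}$ and $\famop{\range}{\getsadd}$, respectively. Since $\famop{\range}{+}\subseteq\famop{\range}{\getsadd}$, every $\famop{\range}{+}$-diff is also a $\famop{\range}{\getsadd}$-diff, so $c_{\getsadd}\le c_{+}$; the work is to prove $c_{+}\le c_{\getsadd}$, after which $c_{\getsadd}=c_{+}$ and the stated biconditional on cost $n$ is immediate (using $c_{+}\ge n$ from Lemma~\ref{rangelower}).

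To prove $c_{+}\le c_{\getsadd}$, I would first invoke the analogue of Lemma~\ref{rangeinteger} to restrict attention to bounded diffs, with all range endpoints integers in $\{0,\ldots,n\}$. Among all best diffs under $\famop{\range}{\getsadd}$, choose $F=(f_1,\ldots,f_m)$ that minimizes, lexicographically, first the number of assignment modifiers and then the total length $\ell(F)$, and argue that $F$ contains no assignment. Suppose otherwise and let $f_i=(\attA\in[a_i,z_i],\attB\gets b_i)$ be the first assignment, so $f_1,\ldots,f_{i-1}$ are all increments; set $R^{-}=(f_1,\ldots,f_{i-1})(\Rbeg)$. The crucial observation is that conditions test only $\attA$ and never $\attB$, while $f_i$ overwrites the whole block $[a_i,z_i]$ with the constant $b_i$. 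Consequently the action of any earlier operation on tuples inside $[a_i,z_i]$ is invisible in $F(\Rbeg)$, so deleting an earlier increment supported inside $[a_i,z_i]$, or shrinking away the part of an earlier increment that lies inside $[a_i,z_i]$, leaves $F(\Rbeg)=\Rend$ intact.

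I would then classify each earlier increment $f_j$ ($j<i$) by how $[a_j,z_j]$ meets $[a_i,z_i]$. If $[a_j,z_j]$ is contained in $[a_i,z_i]$, its effect is entirely overwritten, so deleting $f_j$ yields a valid diff of strictly smaller cost, contradicting that $F$ is a best diff. If $[a_j,z_j]$ crosses exactly one endpoint of $[a_i,z_i]$, trim it to the single contiguous piece lying outside $[a_i,z_i]$; this is a one-sided trim, so $f_j$ stays a single range increment, $F(\Rbeg)=\Rend$ is preserved, and $\ell(F)$ strictly decreases, contradicting minimality of total length. The surviving earlier increments are therefore either disjoint from $[a_i,z_i]$ or contain it, and each acts uniformly on the entire block. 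Since every tuple of $[a_i,z_i]$ has $\attB=0$ in $\Rbeg$ and thereafter receives only these uniform increments, they all share one value $\beta$ in $R^{-}$; hence $f_i$ can be replaced by the increment $(\attA\in[a_i,z_i],\attB\gets\attB+(b_i-\beta))$, with identical effect and cost but one fewer assignment, contradicting minimality of the assignment count. Thus $F$ has no assignment, so it is a $\famop{\range}{+}$-diff of cost $c_{\getsadd}$, giving $c_{+}\le c_{\getsadd}$ as required.

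The main obstacle is the two-endpoint geometry of range conditions: unlike the single-threshold clauses of Lemma~\ref{lgeqaddandgetsadd}, an earlier increment can meet $[a_i,z_i]$ in several configurations, and the apparently dangerous one is an increment whose range strictly contains $[a_i,z_i]$ in its interior, for which cutting out the overwritten middle would split the operation into two and raise the cost. The escape is the observation that such an increment already acts uniformly on $[a_i,z_i]$ and so needs no editing at all; only one-sided overlaps are trimmed and strictly-interior increments are deleted, so a split never arises. The one place demanding care is checking that each of these local edits (delete, trim, replace) preserves $F(\Rbeg)=\Rend$, which relies squarely on conditions being $\attB$-independent and on $f_i$ being an absolute assignment over the full block---here the reduction's structure, in particular that $\Rbeg$ is identically $0$ on $\attB$, is used to conclude the common value $\beta$ exists.
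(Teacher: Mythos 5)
Your proposal is correct and follows essentially the same route as the paper's proof: a lexicographic extremal choice (fewest assignment modifiers, then smallest total length) of a best diff under $\famop{\range}{\getsadd}$, followed by the same case analysis on how an earlier increment's range meets the first assignment's range $[a_i,z_i]$ (delete if contained, trim if one-sided overlap, and otherwise replace the assignment by an increment using the common value $\beta$). Your explicit treatment of increments whose range strictly contains $[a_i,z_i]$ (no edit needed, since they act uniformly on the block) is in fact slightly cleaner than the paper's implicit handling of that configuration in its final case.
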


With the \emph{affine} modifier, $\NP$-hardness can be shown
using the same polynomial-time reduction from \textsc{SubsetSum}
as given for Theorem~\ref{lgeqaffine}.

\begin{theorem}
    The $\bdI(\famop{\range}{\affine})$ problem
    is $\NP$-hard.
\label{rangeaffine}
\end{theorem}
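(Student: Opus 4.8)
The plan is to reuse the $99$-tuples-per-block reduction from \textsc{SubsetSum} constructed for Theorem~\ref{lgeqaffine}, now reading every single-attribute clause as a range $\attA\in[a,z]$ instead of an at-most/at-least clause. Recall it sets $s_0=-t$, $b_k=\sum_{\ell=0}^k s_\ell$, and gives each $k\in\{0,\ldots,n\}$ a block of $99$ tuples sharing one $\attB$ value ($0$ in $\Rbeg$, $b_k$ in $\Rend$). I will prove the same dichotomy as in the increment case: the best diff between $\Rbeg$ and $\Rend$ under $\famop{\range}{\affine}$ has cost exactly $n$ iff the \textsc{SubsetSum} instance is positive. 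The upper bound is immediate and identical to before: a subset $T$ summing to $t$ yields $n$ range-increment operations, one per block boundary, each of which is an affine modifier of slope $1$, so the best cost is at most $n$.

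The proof then factors into two steps. The first, which is routine, is the increment dichotomy on this block instance: the best diff under $\famop{\range}{+}$ has cost $n$ iff the instance is positive. Every lemma from the $\famop{\range}{+}$ development carries over with only cosmetic changes, because within a block all source and target values coincide: commutativity (Lemma~\ref{rangecommute}), boundedness and integrality (Lemmas~\ref{rangeupper},~\ref{rangeinteger}), collision removal (Lemma~\ref{rangenocollision}), and the translation to the at-most/at-least analysis (Lemmas~\ref{lgeqandrange},~\ref{lgeqreduction}) are unaffected, while the jump lower bound (Lemma~\ref{rangelower}) is re-run counting discrepancies at the $n$ block boundaries, which supply exactly the $n$ required jumps. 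Thus the best increment cost on the block instance is $n$ iff a subset sums to $t$.

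The second step, which I expect to be the main obstacle, is an analog of Lemma~\ref{rangeaddandgetsadd}: the best diff under $\famop{\range}{\affine}$ has cost $n$ iff the best diff under $\famop{\range}{+}$ does. One direction is trivial, since every increment is an affine modifier of slope $1$, so the affine optimum is at most the increment optimum. For the converse I would take a best affine diff that minimizes, lexicographically, the number of modifiers with slope $\ne 1$ and then the total length, and show in the style of Lemma~\ref{lgeqaddandgetsadd} that it uses no slope-$\ne 1$ modifier at all. The difficulty is that affine operations do not commute, so the reordering arguments available for $\famop{\lgeq}{+}$ and $\famop{\range}{+}$ are unavailable; the leverage must come entirely from the block width. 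The quantitative point is that a modifier of slope $\ne 1$ whose range covers an internally uniform block only partially splits that block into two distinct $\attB$ values, and realigning all $99$ tuples of the block to a single final value forces additional operations whose total length exceeds what the offending operation could have saved, contradicting minimality. Making this counting precise—so that width $99$ provably leaves no room for a slope-$\ne 1$ modifier in a cost-$n$ diff—is the crux.

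Once the second step is established, a best affine diff of cost $n$ is, after normalization, a best increment diff of cost $n$ on the same block instance, which by the first step encodes a subset $T\subseteq S$ summing to $t$; the converse is the upper bound of the first paragraph. Combining the two steps yields the dichotomy, so $\bdI(\famop{\range}{\affine})$ is $\NP$-hard.
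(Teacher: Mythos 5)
Your plan follows the same route the paper itself takes---reuse the 99-tuple-block \textsc{SubsetSum} reduction of Theorem~\ref{lgeqaffine} and then argue that slope-$\neq 1$ modifiers can be eliminated from any cost-$n$ diff---and you correctly isolate that elimination step as the crux. The gap is that this step is not merely hard to make precise: it is false, so no refinement of your lexicographic-minimization/exchange argument can close it. The width-$99$ blocks only penalize an operation whose range boundary falls \emph{strictly inside} a block; an affine operation whose endpoints are aligned with block boundaries never breaks intra-block uniformity, yet its slope acts on the \emph{distinct values that different blocks carry at intermediate stages}, which is exactly the expressive power that increments lack and that your counting never sees. Concretely, take the negative instance $S=\{1,2\}$, $t=4$ (so $n=2$, targets $b_0=-4$, $b_1=-3$, $b_2=-1$, blocks $\attA\in[1,99]$, $[100,198]$, $[199,297]$). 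The diff $F=(f_1,f_2)$ with $f_1=(\attA\in[100,297],\ \attB\gets\attB-1)$ and $f_2=(\attA\in[1,198],\ \attB\gets-\attB-4)$ is valid: block $0$ goes $0\mapsto-4$, block $1$ goes $0\mapsto-1\mapsto-3$, block $2$ goes $0\mapsto-1$. Since a single operation applied to the all-zero relation leaves at most two distinct $\attB$ values while $\Rend$ has three, the optimum under $\famop{\range}{\affine}$ is exactly $n=2$ here, even though the \textsc{SubsetSum} instance is negative (and no cost-$2$ increment diff exists). So the equivalence you need in Step 2---affine optimum $=n$ iff increment optimum $=n$---fails, and with it the whole dichotomy.

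This is not a small-instance artifact. For any instance with $n\geq 2$ and $b_n\neq 0$ (if $b_n=0$ the full set sums to $t$, a positive instance), set $p=(b_{n-1}-b_0)/b_n$ and $v_j=(b_j-b_0)/p$; for $j=1,\ldots,n-1$ apply $(\attA\in[99j+1,99(n+1)],\ \attB\gets\attB+(v_j-v_{j-1}))$ with $v_0=0$, then finish with $(\attA\in[1,99n],\ \attB\gets p\attB+b_0)$. One checks block $0$ lands on $b_0$, block $j$ on $pv_j+b_0=b_j$ for $j<n$, and block $n$ on $v_{n-1}=b_n$; this is always a cost-$n$ diff, so the reduction never distinguishes positive from negative instances. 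You should also know this defect is inherited rather than introduced: the paper's proof of Theorem~\ref{rangeaffine} is a one-line deferral to the reduction of Theorem~\ref{lgeqaffine}, whose own justification is a sketch asserting precisely the elimination claim you were attempting (``modifiers with nonzero slope are unnecessary''), and it is refuted by the same example. A correct proof needs a genuinely different construction---for instance, target values engineered so that the induced system of affine constraints admits no spurious solution---not sharper counting over this one.
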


\subsection{With Union-of-Ranges Conditions}
With the \emph{union-of-ranges} conditions, the problem becomes
$\NP$-hard even with the \emph{assignment} modifier.

\begin{theorem}
    The $\bdI(\famop{\rangeunion}{\gets})$ problem
    is $\NP$-hard.
\label{uniongets}
\end{theorem}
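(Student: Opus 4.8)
The plan is to first reduce $\bdI(\famop{\rangeunion}{\gets})$ to a clean \emph{layered painting} problem. Since $\attA$ is read-only and the only modifier is assignment, the final $\attB$ value of each tuple is whatever the \emph{last} operation selecting it assigns, and any two tuples sharing an $\attA$ value are always selected together; hence if two such tuples have different $\attB$ values in $\Rend$ there is no diff, and otherwise each distinct $\attA$ value can be treated as a single \emph{position} on the line carrying a target \emph{color} (its $\attB$ value in $\Rend$). Taking $\Rbeg$ monochromatic (all $\attB=0$), a diff is exactly a sequence of operations, each choosing one color and a union of intervals of positions, with $\cost=\kappa_0 m+\kappa_1\cdot(\text{total number of ranges})$, such that the topmost interval covering each position carries that position's target color. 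So the goal becomes: realize a given colored sequence by layered monochromatic interval-paints of minimum weighted cost.

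Next I would isolate the combinatorial core. With only two colors layering gives no benefit, but with three or more colors over-painting lets us \emph{merge} several runs of one color into a single range whenever the separating regions are repainted by \emph{later} operations; the savings depend on the chosen layer order and on how far each operation over-paints. This trade-off is what I would exploit in a reduction from a covering/grouping problem such as \textsc{SetCover}: elements of the universe become small gadget blocks on the line whose target coloring is cheap to realize only if some operation paints \emph{through} them, and each set $S_j$ is encoded so that a single operation (one color, few ranges) can cover exactly the gadgets of $S_j$ while the necessary over-paint is hidden by later operations. By tuning $\kappa_0$ (to charge per operation, i.e.\ per chosen set) and $\kappa_1$ (to charge per range, penalizing uncovered gadgets), the instance is arranged so that the optimal diff has cost at most a target threshold if and only if a cover of the prescribed size exists; the reduction is clearly polynomial. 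The forward (completeness) direction is then direct: given a cover, translate each chosen set into one operation, order the operations so the intended over-paints are hidden, and verify by direct computation that the result realizes $\Rend$ at the target cost.

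The hard part is soundness: showing that \emph{any} cheap diff yields a cover of the right size, i.e.\ ruling out clever layerings and over-paintings that beat the cover-based cost without corresponding to a feasible set selection. Here I would prove a \emph{normal-form} lemma in the spirit of Lemma~\ref{rangenocollision} and Lemma~\ref{lgeqcanonical}: using exchange arguments that reorder last-writer operations, split or merge overlapping ranges, and trim over-paint, I would argue that some optimal diff can be assumed to be in a canonical form in which each operation's footprint is confined to a single set-gadget region, so that reading off the operations recovers a cover. Controlling the interaction of layer order with range-merging---guaranteeing these local exchanges never increase cost and collectively terminate in the normal form---is the principal obstacle, exactly as the analogous canonicalization arguments were the crux of the earlier $\NP$-hardness proofs in this section.
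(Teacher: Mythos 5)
Your first paragraph is a correct and useful reformulation: with a read-only $\attA$ and assignment modifiers, the problem is exactly last-writer-wins painting of positions by monochromatic unions of ranges at cost $\kappa_0 m+\kappa_1\cdot(\text{total ranges})$, which is how the paper implicitly views it. But after that the proposal stops being a proof: no gadget, no threshold, and no soundness argument is actually given, and you yourself flag the normal-form lemma as ``the principal obstacle''---that lemma \emph{is} the theorem. Worse, the source problem you pick is the wrong one. In $\famop{\rangeunion}{\gets}$ an operation may use an \emph{arbitrary} union of ranges, so there is no mechanism to force operations to correspond to the prescribed sets of a \textsc{SetCover} instance; the only lever available is the per-range charge $\kappa_1$, and encoding an arbitrary set $S_j$ as a single operation costs $\kappa_1$ times the number of runs $S_j$ has on the line, which destroys the intended correspondence between ``one chosen set'' and ``one unit of cost.'' \textsc{SetCover} also has no ordering structure, whereas the whole difficulty you correctly identified---which runs of a color can be merged into one range depends on which \emph{later} operations over-paint the separators---is inherently about the order of the layers.

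The paper's reduction targets exactly that ordering structure: it reduces from \textsc{2SCS} (shortest common supersequence of length-two strings, via the variant \textsc{2DistinctSCS}). Each string $u_kv_k$ becomes a block of three positions with targets $u_k,v_k,u_k$, and blocks are separated by \emph{barrier} tuples: two tuples sharing the same $\attA$ value but with distinct, unchanged $\attB$ values, so no assignment operation may ever match that $\attA$ value (Lemma~\ref{barrier}); this confines every range to a single block. Setting $\kappa_0=1$ and $\kappa_1=t+99$ forces any diff of cost at most $t+2n(t+99)$ to use exactly $2n$ ranges---two per block---so each block is realized by a first operation assigning $u_k$ to all three tuples and a strictly later operation assigning $v_k$ to the middle one. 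Reading off the assignment values of the operations in order then yields a common supersequence of length $m\leq t$, and conversely a supersequence of length $\leq t$ translates into such a diff. If you want to salvage your plan, the fix is not a better exchange argument but replacing \textsc{SetCover} with a sequencing problem and adding the barrier gadget that pins ranges inside blocks.
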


In order to prove Theorem~\ref{uniongets},
we provide a polynomial-time reduction from \textsc{2SCS}
(shortest common supersequence of strings of length two),
which is a known $\NP$-hard problem, defined as follows~\cite{timkovskii1989complexity}.
\begin{definition}[2SCS]
    The \textsc{2SCS} decision problem is, given a set $S=\{s_1,\ldots,s_n\}$ of
    strings of length two, and a nonnegative integer $t$, determine whether
    $S$ has a common supersequence of length at most $t$; that is,
    whether there exists a string $s$ of length at most $t$
    such that for each string $s_i\in S$, it is possible to remove
    some symbols (possibly none) from $s$ to obtain $s_i$.
\end{definition}

Note that the alphabet size is not necessarily constant:
there can be as many as $2n$ different symbols in a given instance.
Also, we assume that each symbol is given in the input
represented as a positive integer.

In fact, we will provide a polynomial-time reduction from
\textsc{2DistinctSCS}, which is similar to \textsc{2SCS} with
an additional restriction that the two letters in each string in $S$
are not the same. The proof that \textsc{2DistinctSCS} is $\NP$-hard,
via a reduction from \textsc{2SCS}, is given in Appendix~\ref{sec:2scs}.

Consider an instance of the \textsc{2DistinctSCS} problem with a set
$S=\{s_1,\ldots,s_n\}$ of strings of length two
and a nonnegative integer $t$.
For each $k\in[n]$, let $u_k$ and $v_k$ be
(positive integer representations of)
the two symbols of $s_k$ in order.
The reduction is as follows:
for $k\in[n]$, let
\begin{align*}
    t_{5k-3} &= (\attK=5k-3, \attA=4k-3, \attB=0)\\
    t_{5k-2} &= (\attK=5k-2, \attA=4k-2, \attB=0)\\
    t_{5k-1} &= (\attK=5k-1, \attA=4k-1, \attB=0)\\
    t'_{5k-3} &= (\attK=5k-3, \attA=4k-3, \attB=u_k)\\
    t'_{5k-2} &= (\attK=5k-2, \attA=4k-2, \attB=v_k)\\
    t'_{5k-1} &= (\attK=5k-1, \attA=4k-1, \attB=u_k)
\end{align*}
for $k\in\{0,\ldots,n\}$, let
\begin{align*}
    t_{5k+0} = t'_{5k+0} &= (\attK=5k+0, \attA=4k, \attB=-1)\\
    t_{5k+1} = t'_{5k+1} &= (\attK=5k+1, \attA=4k, \attB=-2)
\end{align*}
and let $\kappa_0=1$, $\kappa_1=t+99$, and
\begin{align*}
    \Rbeg &= \{t_\ell \mid\ell\in\{0,\ldots,5n+1\}\}\\
    \Rend &= \{t'_\ell\mid\ell\in\{0,\ldots,5n+1\}\}
\end{align*}
This reduction takes polynomial time. The claim is that it is
a positive instance of \textsc{2DistinctSCS} if and only if the
best diff between $\Rbeg$ and $\Rend$ under $\famop{\rangeunion}{\gets}$
has cost at most $t+2n(t+99)$.
We show the correctness of this reduction via a series of lemmas.
In this subsection, $\Rbeg$ and $\Rend$ refer to the sets
of tuples from the reduction as described here.

First, we define \emph{total range count}, which impacts
the cost.
\begin{definition}[Total Range Count]
    Let $F=(f_1,\ldots,f_m)$ where, for $i\in[m]$, $f_i\in\famop{\rangeunion}{\gets}$ and
    \[f_i=(\attA\in\bigcup_{j=1}^{r_i} [a_{ij},z_{ij}],\attB\gets b_i)\]
    The \emph{total range count} of $F$ is defined as $\sum_{i=1}^m r_i$.
\end{definition}
Next, we establish the special purpose of the tuples of the form
$\attA=4k$ in the construction, the proof of which is given in Appendix~\ref{sec:additionalproofs}.
\begin{lemma}
    A diff between $\Rbeg$ and $\Rend$ contains no operation
    whose condition matches $\attA=4k$ for any $k\in\{0,\ldots,n\}$.
\label{barrier}
\end{lemma}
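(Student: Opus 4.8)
The plan is to exploit the two ``barrier'' tuples placed at each value $\attA=4k$: the tuple with $\attK=5k$ carries $\attB=-1$ and the tuple with $\attK=5k+1$ carries $\attB=-2$. Since $t_{5k+0}=t'_{5k+0}$ and $t_{5k+1}=t'_{5k+1}$ by construction, these two tuples have \emph{identical} $\attA$ values but \emph{distinct} $\attB$ values in both $\Rbeg$ and $\Rend$. This ``equal read attribute, unequal write attribute'' pairing is the entire engine of the argument.

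First I would record the crucial observation that every operation in $\famop{\rangeunion}{\gets}$ has a condition depending only on $\attA$ together with an assignment modifier $\attB\gets b$. Because the two barrier tuples at $\attA=4k$ agree on their $\attA$ value, any such condition matches both of them or neither; and whenever it matches, the assignment overwrites both $\attB$ values with the same constant $b$. Thus no operation matching $\attA=4k$ can separate these two tuples---it can only collapse them to a common $\attB$ value.

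Next I would argue by contradiction using a ``last touch'' argument. Suppose a diff $F=(f_1,\ldots,f_m)$ contained an operation matching $\attA=4k$, and let $f_i$ be the one with the largest such index. After $f_i$ is applied, both barrier tuples at $\attA=4k$ share the constant assigned by $f_i$. By maximality of $i$, none of $f_{i+1},\ldots,f_m$ matches $\attA=4k$, so neither tuple is touched again; hence in $F(\Rbeg)$ the two tuples still agree in $\attB$. This contradicts $F(\Rbeg)=\Rend$, where they must differ ($-1$ versus $-2$). Therefore no operation in any diff can match $\attA=4k$ for any $k\in\{0,\ldots,n\}$.

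I expect no serious obstacle here, as the argument rests entirely on the construction placing two tuples with equal $\attA$ but unequal $\attB$ at each multiple of $4$. The only care required will be in correctly selecting the \emph{last} matching operation so that the collapse of the two $\attB$ values is not subsequently undone, and in noting that a union-of-ranges condition, depending solely on $\attA$, genuinely cannot tell the two barrier tuples apart.
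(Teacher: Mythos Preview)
Your proposal is correct and follows essentially the same approach as the paper: both exploit the fact that the two barrier tuples at $\attA=4k$ share their $\attA$ value but must end with distinct $\attB$ values, so an assignment operation matching $\attA=4k$ would collapse them irreparably. Your explicit ``last touch'' argument (taking the largest index $i$ with $f_i$ matching $\attA=4k$) is a slightly more careful formalization of what the paper states tersely as ``assignment operators cannot assign different $\attB$ values to them,'' but the underlying idea is identical.
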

These $\attA=4k$ tuples provide ``barriers'' over which no range condition
can cross. Thus, they break the possible $\attA$ values into partitions.
Let partition $k$, denoted $P_k$, refers to tuples whose $\attA$ value is
between $4k-4$ and $4k$, exclusive, for $k\in[n]$.
There are exactly three tuples in each partition. We say that an operation
\emph{affects} a partition if some tuple in that partition is matched by
the condition of the operation.
\begin{lemma}
    A diff between $\Rbeg$ and $\Rend$ has total range count at least $2n$,
    and each partition has at least two operations that affects it.
\label{rangecountlowerbound}
\end{lemma}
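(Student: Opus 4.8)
The plan is to leverage the barrier tuples (those with $\attA=4k$) together with the distinctness requirement $u_k\neq v_k$ to force at least two ranges per partition. First I would argue that, in any diff, each individual range $[a,z]$ occurring in some operation's condition is confined to a single partition. Indeed, by Lemma~\ref{barrier} no operation in the diff matches a tuple with $\attA=4k$, so no range $[a,z]$ appearing in the diff may contain any barrier value $4k$ with $k\in\{0,\ldots,n\}$. Since the barrier values $0,4,\ldots,4n$ separate the partitions---partition $P_k$ consisting precisely of the $\attA$ values $4k-3,4k-2,4k-1$ lying strictly between the consecutive barriers $4(k-1)$ and $4k$---any interval meeting two distinct partitions $P_k$ and $P_{k'}$ with $k<k'$ would have to contain the intermediate barrier $4k$, a contradiction. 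Hence each range matches tuples of at most one partition, and so the total range count is bounded below by the sum, over all partitions, of the number of ranges affecting each partition.

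Second, I would show that each partition $P_k$ is affected by at least two operations. The three tuples of $P_k$ start with $\attB=0$ and must reach the targets $u_k,v_k,u_k$, all of which are positive (the symbols are positive integers) and hence nonzero. Because every modifier here is a pure assignment $\attB\gets b$, the final value of a tuple is the value assigned by the last operation matching it, or $0$ if it is never matched. If only one operation affected $P_k$, then either it fails to match some tuple---leaving that tuple at $0$, which differs from its nonzero target---or it matches all three, forcing them to a common value and contradicting $u_k\neq v_k$. Either way we reach a contradiction, so at least two operations affect $P_k$; this is exactly the second assertion of the lemma.

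Finally, I would combine the two steps to obtain the range-count bound. Each operation affecting $P_k$ contributes, by the definition of ``affects,'' at least one range matching a tuple of $P_k$, and by the first step that range is confined to $P_k$; two distinct affecting operations therefore contribute at least two distinct ranges confined to $P_k$. Summing this lower bound of two over the $n$ partitions, together with the decomposition of the total range count established in the first step, yields total range count at least $2n$. The main obstacle I anticipate is the bookkeeping in the first step: making the confinement argument fully rigorous so that the total range count can be cleanly apportioned to partitions, in particular handling the facts that a single operation may contribute several ranges to the same partition and that some ranges may match no tuple at all, so that the per-partition counts genuinely \emph{lower}-bound the total rather than merely approximating it.
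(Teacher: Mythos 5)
Your proof is correct and follows essentially the same route as the paper's: the barrier lemma confines each range to a single partition, and the two distinct nonzero target values $u_k\neq v_k$ force at least two assignment operations per partition, giving the $2n$ bound. Your version simply spells out the bookkeeping (confinement of ranges, the never-matched-stays-at-$0$ case) that the paper's terse proof leaves implicit.
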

\begin{proof}
    Each of the $n$ partitions has two distinct $\attB$ values in $\Rend$,
    neither of which is $0$ as in $\Rbeg$. Thus, two assignment modifiers
    are required. By Lemma~\ref{barrier}, a range cannot go across barriers,
    thus the total range count includes at least two ranges per partition.
\end{proof}
\begin{lemma}
    $S$ has a common supersequence of length at most $t$ iff the
    best diff between $\Rbeg$ and $\Rend$ under $\famop{\rangeunion}{\gets}$
    has cost at most $t+2n(t+99)$.
\end{lemma}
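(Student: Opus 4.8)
The plan is to read the target cost $t+2n(t+99)$ as a pair of structural constraints and then match those constraints to a subsequence embedding. Recall that an operation $f_i$ with $r_i$ ranges costs $\kappa_0+\kappa_1 r_i = 1+(t+99)r_i$, so a diff $F=(f_1,\ldots,f_m)$ with total range count $C=\sum_i r_i$ has $\cost(F)=m+(t+99)C$. By Lemma~\ref{rangecountlowerbound} we always have $C\ge 2n$, with each of the $n$ partitions touched by at least two ranges. First I would observe that $\cost(F)\le t+2n(t+99)$ forces $C=2n$ exactly: a single extra range would add $t+99$ to the cost, and since $m+(t+99)>t$ for every $m\ge 0$, this overshoots the budget. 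Hence each partition carries exactly two ranges, and the residual inequality collapses to $m\le t$; that is, a cheap diff uses at most $t$ operations.

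For the $(\Rightarrow)$ direction (cheap diff yields a short common supersequence) the core is a case analysis inside a single partition $P_k$, whose three tuples at $\attA=4k-3,4k-2,4k-1$ must end at values $u_k,v_k,u_k$ with $u_k\ne v_k$ (this is exactly where \textsc{2DistinctSCS}, rather than \textsc{2SCS}, is needed). Let the two ranges of $P_k$ come from operations at positions $p<q$ with assigned values $b_p,b_q$ and covered sets $R_p,R_q$, each an interval of $\attA$-values, where the later operation overrides the earlier on $R_p\cap R_q$. I would organize the argument around which operation last writes the middle tuple. If neither range covers it, it stays at $0\ne v_k$. If the earlier operation is its last writer, then $b_p=v_k$ and the outer tuples must be captured by the later operation, whose interval, containing both outer positions, would also contain the middle, contradicting that the earlier operation was its last writer. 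Hence the later operation writes the middle, giving $b_q=v_k$; the outer tuples then cannot lie in $R_q$ (else they too would equal $v_k\ne u_k$), so they lie in $R_p$ with $b_p=u_k$, and $R_p$, being an interval through both outer positions, spans all of $P_k$. I would also dismiss the possibility of both ranges belonging to one operation, which realizes at most the two values $b$ and $0$ and so cannot produce two distinct nonzero values. Thus $s_k=u_k v_k$ appears as a subsequence (at positions $p<q$) of the value string $w=b_1\cdots b_m$, and since this holds for every $k$, $w$ is a common supersequence of $S$ of length $m\le t$.

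For $(\Leftarrow)$ (a short supersequence yields a cheap diff) I would invert this correspondence. Given a common supersequence $w=w_1\cdots w_\ell$ with $\ell\le t$, fix for each $s_k=u_k v_k$ an embedding $p_k<q_k$ with $w_{p_k}=u_k$ and $w_{q_k}=v_k$. Build $\ell$ operations, the $j$-th assigning $\attB\gets w_j$ under the union of ranges collecting $[4k-3,4k-1]$ over all $k$ with $p_k=j$ together with the singleton middle range $[4k-2,4k-2]$ over all $k$ with $q_k=j$. Because the barriers at $\attA=4k$ separate partitions (Lemma~\ref{barrier} guarantees we never cross them), these ranges do not interfere; within $P_k$ the operation at $p_k$ writes $u_k$ to all three tuples and the later operation at $q_k$ overwrites the middle with $v_k$, producing exactly $u_k,v_k,u_k$, while the barrier tuples are untouched and already agree in $\Rbeg$ and $\Rend$. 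A position $j$ may simultaneously serve as $p_k$ for one partition and $q_{k'}$ for another; this forces $u_k=v_{k'}=w_j$, which the common assigned value respects. This diff has total range count exactly $2n$ and at most $\ell\le t$ operations, so $\cost(F)=\ell+2n(t+99)\le t+2n(t+99)$, establishing membership in $\diff(\Rbeg,\Rend,\famop{\rangeunion}{\gets})$ at the required budget.

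The main obstacle is the partition case analysis in the $(\Rightarrow)$ direction: I must argue, purely from ``exactly two ranges realize the pattern $u_k v_k u_k$ with $u_k\ne v_k$,'' that the two ranges necessarily come from two distinct operations in the order $u_k$-then-$v_k$, with the first covering the whole partition and the second isolating the middle. Excluding every degenerate interval configuration, leaning on the interval property to force the middle back into a range whenever both outer tuples are captured, and confirming that shared operation positions stay consistent, is the delicate part that makes the subsequence extraction exact rather than merely approximate.
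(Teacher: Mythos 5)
Your proof is correct and takes essentially the same route as the paper's: the identical budget argument forcing total range count exactly $2n$ and $m\le t$, the identical per-partition extraction of two operations assigning $u_k$ then $v_k$ to build the supersequence, and the identical reverse construction of a diff from an embedding of each $s_k$ (the paper canonically uses first/last occurrences where you fix an arbitrary embedding, which works equally well). The only difference is that your case analysis inside a partition spells out details the paper merely asserts.
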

\begin{proof}
    ($\Leftarrow$) Let $F=(f_1,\ldots,f_m)$ be a best diff
    between $\Rbeg$ and $\Rend$ under $\famop{\rangeunion}{\gets}$,
    where \[f_i=(\attA\in\bigcup_{j=1}^{r_i}[a_{ij},z_{ij}],\attB\gets b_i)\]
    with cost at most $t+2n(t+99)$. The total range count of $F$
    cannot exceed $2n$, otherwise its cost must be at least
    $(t+99)(2n+1)>t+2n(t+99)$. Together with Lemma~\ref{rangecountlowerbound},
    the total range count of $F$ must be exactly $2n$.
    The cost implies that $m\leq t$.

    Let $s=b_1\ldots b_m$. Because each partition contains two distinct
    $\attB$ values in $\Rend$, and because the total range count must be $2n$,
    there must be exactly two operations that affects
    each partition. For $k\in[n]$, partition $P_k$ has two operations $f_i$ and $f_j$,
    where $i<j$, that affects it. The operations must be such that $b_i=u_k$ and $b_j=v_k$,
    where $f_i$ sets the $\attB$ value for all tuples in $P_k$ to $u_k$,
    and $f_j$ then sets the $\attB$ value for one tuple to $v_k$.
    Hence, removing symbols from $s$ except at indices $i$ and $j$ would yield
    $b_ib_j=u_kv_k=s_k$.
    Thus, $s$ is a supersequence of $S$ with length $m\leq t$.

    ($\Rightarrow$) Let $s=w_1\ldots w_m$ be a common supersequence of $S$
    of length $m\leq t$. For each symbol $c$, let $\mathrm{first}(c)$ be the
    smallest $i$ such that $w_i=c$; and $\mathrm{last}(c)$, largest.

    Construct $F=(f_1,\ldots,f_m)$ where
    \begin{align*}
    R^{(1)}_i &= \bigcup_{\substack{k\in[n]\\ \mathrm{first}(u_k)=i}} [4k-3,4k-1]\\
    R^{(2)}_i &= \bigcup_{\substack{k\in[n]\\ \mathrm{last}(v_k)=i}} [4k-2,4k-2]\\
    f_i &= (\attA\in R^{(1)}_i\cup R^{(2)}_i,\attB\gets w_i)
    \end{align*}

    For $k\in[n]$, partition $P_k$ has two operations $f_i$ and $f_j$
    that affects it, where $i=\mathrm{first}(u_k)$ and $j=\mathrm{last}(v_k)$.
    Because $s$ is a supersequence of $S$, we have $i=\mathrm{first}(u_k)<\mathrm{last}(v_k)=j$.
    Hence, $f_i$ sets the $\attB$ value for all tuples in $P_k$ to $u_k$,
    and $f_j$ then sets the $\attB$ value for one tuple to $v_k$.
    The total range count of $F$ is $2n$.
    Thus, $F$ is a diff
    between $\Rbeg$ and $\Rend$ under $\famop{\rangeunion}{\gets}$
    whose cost is $m+2n(t+99)\leq t+2n(t+99)$.
\end{proof}

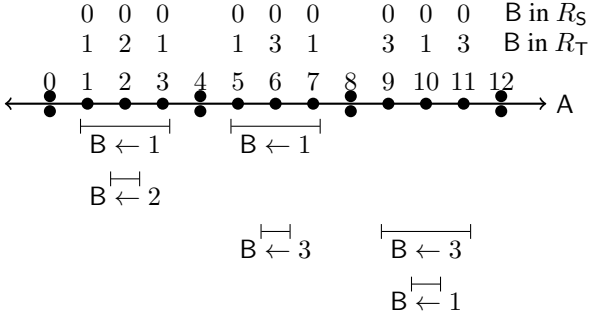
\begin{figure}[!hbt]
\centering
\vspace{-10pt}
\begin{tikzpicture}
    \draw[<->,thick] (-0.1,0)--(7.1,0) node[anchor=west] {$\attA$};
    \foreach \i/\j in {1/1,2/2,3/1,5/1,6/3,7/1,9/3,10/1,11/3} {
        \node () at (0.5+\i*0.5,1.2) {$0$};
        \node () at (0.5+\i*0.5,0.8) {$\j$};
        \node () at (0.5+\i*0.5,0.3) {$\i$};
        \fill (0.5+\i*0.5,0.0) circle (0.08cm);
    }
    \foreach \i in {0,4,8,12} {
        \node () at (0.5+\i*0.5,0.3) {$\i$};
        \fill (0.5+\i*0.5,0.1) circle (0.08cm);
        \fill (0.5+\i*0.5,-0.1) circle (0.08cm);
    }
    \node () at (7.1,1.2) {$\attB$ in $\Rbeg$};
    \node () at (7.1,0.8) {$\attB$ in $\Rend$};
    \draw[|-|] (0.9,-0.3)--node[below=-0.5pt]{$\attB\gets 1$} (2.1,-0.3);
    \draw[|-|] (2.9,-0.3)--node[below=-0.5pt]{$\attB\gets 1$} (4.1,-0.3);
    \draw[|-|] (1.3,-1.0)--node[below=-0.5pt]{$\attB\gets 2$} (1.7,-1.0);
    \draw[|-|] (3.3,-1.7)--node[below=-0.5pt]{$\attB\gets 3$} (3.7,-1.7);
    \draw[|-|] (4.9,-1.7)--node[below=-0.5pt]{$\attB\gets 3$} (6.1,-1.7);
    \draw[|-|] (5.3,-2.4)--node[below=-0.5pt]{$\attB\gets 1$} (5.7,-2.4);
\end{tikzpicture}
\caption{Illustration of Example \ref{scsexample}}
\vspace{-10pt}
\label{fig:scsillus}
\end{figure}

\begin{example}
\label{scsexample}
Consider the \textsc{2DistinctSCS} instance with $S=\{12,13,31\}$ and $t=4$.
The string $1231$ is a supersequence of $S$. The reduction
gives the instance of the
$\bdI(\famop{\rangeunion}{\gets})$ problem shown in Figure~\ref{fig:scsillus}.

In this case, $F=(f_1,f_2,f_3,f_4)$ where
\begin{align*}
f_1 &= (\attA\in [1,3]\cup[5,7],\attB\gets 1)\\
f_2 &= (\attA\in [2,2],\attB\gets 2)\\
f_3 &= (\attA\in [6,6]\cup[9,11],\attB\gets 3)\\
f_4 &= (\attA\in [10,10],\attB\gets 1)
\end{align*}
is a best diff with cost $4+2\cdot 13\cdot 3=82$. Dropping $f_4$ and
changing the condition of $f_1$ to $\attA\in [1,3]\cup[5,7]\cup[10,10]$ yield
lower cost but is not permissible, because the tuple at $\attA=10$
would have an incorrect $\attB$ value.
\end{example}

\begin{theorem}
    The $\bdI(\famop{\rangeunion}{+})$,
    $\bdI(\famop{\rangeunion}{\getsadd})$, and
    $\bdI(\famop{\rangeunion}{\affine})$ problems
    are $\NP$-hard.
\label{unionadd}
\label{uniongetsadd}
\label{unionaffine}
\end{theorem}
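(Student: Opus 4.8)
The plan is to obtain all three hardness results at once by reducing the union-of-ranges problems to their already-settled \emph{range} counterparts (Theorems~\ref{rangeadd}, \ref{rangegetsadd}, and~\ref{rangeaffine}), reusing verbatim the \textsc{SubsetSum} instances built there. The conceptual obstacle is that a richer condition type can in principle make a best diff \emph{shorter}, so hardness does not transfer for free; I must show that, for these instances and a suitable choice of the input cost parameters, union-of-ranges conditions confer no advantage over single ranges. To that end I would keep $\Rbeg$ and $\Rend$ exactly as in the range reductions and set $\kappa_0=0$ and $\kappa_1=1$, so that the cost of any union-of-ranges diff equals its total range count $\sum r$.

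The core is an \emph{unpacking} argument comparing diffs under $\famop{\rangeunion}{\omega}$ with diffs under $\famop{\range}{\omega}$, uniformly for $\omega\in\{+,\getsadd,\affine\}$. First I would normalize each union-of-ranges condition so its intervals are pairwise disjoint: merging two overlapping intervals matches exactly the same tuples and cannot increase the range count, so this is without loss of generality. Then I would replace a single operation $(\attA\in\bigcup_{j=1}^{r}[a_j,z_j],\,u)$ by the block of $r$ single-range operations $(\attA\in[a_j,z_j],\,u)$, retaining the identical modifier $u$ and the block's position in the sequence. Since the intervals are disjoint, these $r$ operations touch disjoint tuple sets, so they commute and compose to precisely the original operation---this is what lets increment, assignment/increment, and affine modifiers be handled by one argument. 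Unpacking every operation converts any union-of-ranges diff of total range count $R$ into a range diff of cost $R$; conversely every range diff of cost $c$ is already a union-of-ranges diff of total range count $c$. Hence the minimum total range count over union-of-ranges diffs equals the best diff cost over range diffs.

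Combining the two ingredients, under $\kappa_0=0,\kappa_1=1$ the best diff under $\famop{\rangeunion}{\omega}$ has cost equal to the best diff under $\famop{\range}{\omega}$. The range analysis already pins this value down: by Lemma~\ref{rangelower} the range best diff always costs at least $n$, and by Lemmas~\ref{lgeqandrange} and~\ref{rangeaddandgetsadd} together with Theorem~\ref{rangeaffine} it costs exactly $n$ precisely when the underlying \textsc{SubsetSum} instance is a yes-instance. Therefore testing whether the best diff under $\famop{\rangeunion}{\omega}$ costs at most $n$ decides \textsc{SubsetSum}, which establishes $\NP$-hardness of $\bdI(\famop{\rangeunion}{+})$, $\bdI(\famop{\rangeunion}{\getsadd})$, and $\bdI(\famop{\rangeunion}{\affine})$ simultaneously.

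I expect the main obstacle to be verifying the unpacking step for the assignment-bearing modifiers. For increment, commutativity is immediate (cf.\ Lemma~\ref{rangecommute}); but for $\getsadd$ and $\affine$ I must confirm that splitting one assignment or affine update over several disjoint intervals into separate operations reproduces $\Rend$ regardless of how they interleave with the rest of the diff, and disjointness of the normalized intervals is exactly what guarantees this. A minor point is that setting $\kappa_0=0$ introduces no degenerate free operations: every condition must be nonempty, so each operation carries at least one range and thus contributes at least $\kappa_1=1$ to the cost.
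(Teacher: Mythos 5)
Your proposal is correct and matches the paper's own argument: the paper likewise reduces from the respective range versions using the identical \textsc{SubsetSum}-derived instances with $\kappa_0=0$ and $\kappa_1=1$, so that union-of-ranges cost equals total range count and confers no advantage over single ranges. Your explicit ``unpacking'' lemma (disjointify the intervals, then split each union-of-ranges operation into commuting single-range operations with the same modifier) is exactly the justification the paper leaves implicit in its one-line proof.
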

\begin{proof}
This follows from polynomial-time reductions from respective range versions,
using the same instance and setting $\kappa_0=0$ and $\kappa_1=1$.
\end{proof}

\section{Relaxation: $\bdII$ problems}
\label{sec:solution-aab-compose}

In this section, we discuss a relaxation to the constraints
of the ``base case'' in the previous section. We allow the
number of read-only attributes to be more than one.
We see in the previous section that even when restricted to
$1$ read-only attribute, the problem becomes $\NP$-hard even
with relatively simple conditions and modifiers. With
more attributes, the problem reaches the
hardness boundary much more quickly.

The $\bdII(\calF)$ problem is similar to the best diff $\bd(\calF)$
problem, but constrained to one write-only attribute and
no read-write attributes;
the number of read-only attributes may vary.
Let $\calA=\{\attA_1,\ldots,\attA_\mu\}$ and $\calB=\{\attB\}$,
where $\attB\not\in\calA$.

\subsection{With Equality Conditions}
Unlike in the previous section with $1$ read-only attribute,
the problem becomes $\NP$-hard even with \emph{equality} conditions.
For the \emph{assignment} case, the problem is closely related to the view synthesis problem,
and we derive the hardness result through it. For the remaining cases,
we show hardness through reductions from the $1$ read-only attribute version
with range conditions.

\begin{theorem}
    The $\bdII(\famop{=}{\gets})$ problem
    is $\NP$-hard.
\label{multeqgets}
\end{theorem}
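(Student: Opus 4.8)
The plan is to reduce from \textsc{SetCover}, exploiting exactly the phenomenon that makes view synthesis NP-hard when the number of attributes is allowed to vary: describing a target set of tuples as a union of conjunctive equality selections is as hard as covering a universe by a given family of sets. I would start from a \textsc{SetCover} instance with universe $U=\{1,\ldots,n\}$ and sets $S_1,\ldots,S_m\subseteq U$ (assuming, WLOG, that every element lies in at least one set, since otherwise no cover exists and the instance is rejected), asking whether $U$ can be covered by at most $k$ of the $S_j$. I build a $\bdII(\famop{=}{\gets})$ instance with read attributes $\calA=\{\attA_1,\ldots,\attA_m\}$, one per set, and the single write attribute $\attB$. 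There are $n$ tuples, one per element, with primary key $\attK=i$; I set $\attA_j=0$ on tuple $i$ when $i\in S_j$ and $\attA_j=i$ otherwise. In $\Rbeg$ every tuple has $\attB=0$, and in $\Rend$ every tuple has $\attB=1$. The construction is polynomial, the values are integers, and $\Rbeg,\Rend$ are $(\attK,\calA,\calB)$-similar.

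The core of the argument is a structural correspondence between selectable tuple sets and the $S_j$. The clause $\attA_j=0$ selects exactly $S_j$, whereas any clause $\attA_j=v$ with $v\ge 1$ selects at most the singleton $\{v\}$; since a condition is a conjunction of such clauses, every nonempty selectable set is contained in some $S_j$ (a conjunction containing an $\attA_j=0$ clause lands inside $S_j$, and a conjunction of only $v\ge1$ clauses lands inside a singleton, which under feasibility also lies inside some $S_j$). Consequently a best diff of cost $c$ yields a cover of size at most $c$, by replacing each selected set with a containing $S_j$; conversely a cover of size $c$ yields a diff of cost $c$ using one operation $\attB\gets 1$ with condition $\attA_j=0$ per chosen set. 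Because equality conditions cost $1$ per operation, $\cost$ of the best diff equals the minimum cover size, so there is a cover of size $\le k$ iff there is a diff of cost $\le k$.

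The step I expect to be the main obstacle is ruling out the extra power that the assignment modifier offers through overwriting and through scratch values: a priori a best diff might assign some tuples to $0$ or to some auxiliary value and overwrite them later, so that its operations do not directly correspond to a union cover. I would dispatch this with a normalization lemma. Since every tuple must end at $\attB=1$, the last operation touching each tuple necessarily assigns $1$; deleting from any diff every operation whose modifier is not $\attB\gets 1$ leaves each tuple still touched by its former last-toucher, which survives, so the shortened sequence is still a diff and its cost does not increase. Hence there is always a best diff built solely from $\attB\gets 1$ operations, and for such a diff the union of its selected sets must equal $U$; together with the ``contained in some $S_j$'' fact this pins the optimum to the minimum cover size. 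I would close by remarking that this is precisely the view-synthesis hardness—expressing a target tuple set as a union of conjunctive selections—specialized to a single write attribute, which is the connection the section preamble alludes to.
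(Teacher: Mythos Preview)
Your reduction is correct, and the normalization lemma is the right tool: since conditions are on read-only attributes, the tuple set matched by each operation is fixed regardless of when it is applied, so deleting all operations with modifier other than $\attB\gets 1$ still leaves every tuple touched by its original last-toucher. The containment claim (every nonempty selectable set lies inside some $S_j$) is also sound given the paper's definition of conditions as conjunctions of single-attribute equality clauses on distinct attributes.

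Your route differs from the paper's in two respects. First, the paper reduces from the view-synthesis problem with unions of conjunctive equality queries (citing its known $\NP$-hardness), whereas you reduce directly from \textsc{SetCover}; your argument is thus more self-contained, while the paper's is shorter because it offloads the combinatorics to the cited result. Second, and more interestingly, the two proofs neutralize the overwriting power of assignment by different mechanisms. The paper lets only a subset of tuples change and gives the remaining tuples \emph{distinct} $\attB$ values, then duplicates each tuple $99$ times so that once a block is collapsed to a single value it can never be restored; this makes any operation touching a ``stay'' tuple fatal. You instead arrange that \emph{every} tuple changes to the same target value, which lets you prune non-target assignments after the fact via your normalization lemma. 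Your trick is cleaner but depends on the target being uniform; the paper's duplication trick is heavier but works even when some tuples must remain untouched, which is why it recurs elsewhere in the paper (e.g., Theorem~\ref{multrangegets}).
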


The proof of this theorem is given in Appendix~\ref{sec:additionalproofs}.

\begin{theorem}
    The $\bdII(\famop{=}{+})$,
    $\bdII(\famop{=}{\getsadd})$, and
    $\bdII(\famop{=}{\affine})$ problems
    are $\NP$-hard.
\label{multeqrest}
\end{theorem}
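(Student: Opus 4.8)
The plan is to reduce from the single read-only attribute, range-condition versions $\bdI(\famop{\range}{+})$, $\bdI(\famop{\range}{\getsadd})$, and $\bdI(\famop{\range}{\affine})$, which are $\NP$-hard by Theorems~\ref{rangeadd}, \ref{rangegetsadd}, and~\ref{rangeaffine} respectively. The construction I would use is identical for all three modifier types $\omega\in\{+,\getsadd,\affine\}$; only the cited source hardness theorem changes. (The assignment case is excluded precisely because $\bdI(\famop{\range}{\gets})$ is polynomial by Theorem~\ref{rangegets}, so no hardness could be inherited from it.) The key idea is that, although a single equality clause over one attribute is far weaker than a range clause, a \emph{conjunction} of equality clauses over many read-only attributes still costs only $1$ (cost is per operation, not per clause), and it can be arranged to select exactly an interval.

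Given a $\bdI(\famop{\range}{\omega})$ instance on read-only attribute $\attA$, I would first relabel the distinct $\attA$-values order-preservingly to $1,\dots,V$ (this preserves which tuples each range selects, and by Lemma~\ref{rangeinteger} only integer endpoints matter). I then build a $\bdII(\famop{=}{\omega})$ instance on read-only attributes $\{P_0,\dots,P_V\}=\calA$ and the same write-only attribute $\attB$, keeping the primary key $\attK$ and the $\attB$-columns of $\Rbeg,\Rend$ unchanged. Each $P_z$ is the binary threshold indicator $P_z=[\,\text{value}>z\,]$, so that the clause $P_z=0$ selects the prefix $[1,z]$ and $P_z=1$ selects the suffix $[z+1,V]$; the boundary attributes $P_0,P_V$ make the full set $[1,V]$ selectable. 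This is polynomial, since $V\le N$ yields $O(N)$ attributes.

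The heart of the argument is a single equivalence lemma: the family of tuple-subsets selectable by one $\famop{=}{\omega}$ condition in the constructed instance is exactly the family of intervals selectable by one $\famop{\range}{\omega}$ condition in the source, and each such operation costs $1$ in both. The easy direction is that every interval $[a,z]$ is realized by the single operation with condition $P_z=0\wedge P_{a-1}=1$ (using just one clause at the boundaries $a=1$ or $z=V$). Since the modifier families $\omega$ coincide and $\attB$ is untouched, this equivalence puts the diffs of the two instances in cost-preserving correspondence, so their best-diff costs are equal; computing the source cost is $\NP$-hard, hence so is $\bdII(\famop{=}{\omega})$.

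The main obstacle is the converse direction of the equivalence: showing that equality-conjunctions grant the solver \emph{no extra power}, i.e.\ that no $\famop{=}{\omega}$ condition can select a non-interval set and thereby produce a strictly cheaper diff. This is where the binary threshold encoding is essential: since each $P_z$ takes only the values $0$ and $1$, every single clause selects either a prefix or a suffix (an interval), and a conjunction selects an intersection of prefixes and suffixes, which is again an interval (possibly empty). I would state and prove this as a lemma, taking care of the degenerate cases (empty conditions are disallowed, and the full range $[1,V]$ must be covered by $P_0$ or $P_V$), and then conclude cost preservation and the theorem uniformly across the three modifiers.
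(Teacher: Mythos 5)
Your proposal is correct and takes essentially the same approach as the paper: the paper likewise reduces from the range-condition, one-read-only-attribute problems by introducing polynomially many auxiliary read-only attributes that encode prefix/suffix (threshold) membership---using a marker value $a^*$ on $2\ell$ copies of $\attA$ rather than your binary indicators $P_z$---and proves the same cost-preserving equivalence between equality-conjunction conditions and interval conditions. The only cosmetic differences are the choice of encoding and that the paper writes out only the $\famop{=}{+}$ case, declaring the $\getsadd$ and $\affine$ cases similar, just as you handle all three uniformly.
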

We only show the proof for $\bdII(\famop{=}{+})$,
as the remaining proofs are similar.
The idea is to simulate range conditions in one attribute
with equality conditions in multiple attributes.

Consider the instance $(\attK,\calA,\calB,\Rbeg,\Rend)$
of $\bdI(\famop{\range}{+})$, where $\calA=\{\attA\}$,
and $a_1<\cdots<a_\ell$ are values in $V_\attA(\Rbeg,\Rend)$ in order.
The reduction is as follows: we construct the instance
$(\attK,\calA',\calB,\Rbeg',\Rend')$.
Here, $\calA'=\{\attA_1,\ldots,\attA_{2\ell}\}$, and
$\Rbeg'$ and $\Rend'$ are identical to $\Rbeg$ and $\Rend$, respectively,
except that the attribute $\attA=a$ is replaced by $\attA_1,\ldots,\attA_{2\ell}$
in the following fashion.

Let $a^*=a_\ell+1\not\in V_\attA(\Rbeg,\Rend)$. Define
$\rho_i=[a_i,a_\ell]$ and $\rho_{\ell+i}=[a_1,a_i]$ for $i\in[\ell]$.
For each tuple, its $\attA_i$ value is assigned as its $\attA$ value,
except when the value is in the range $\rho_i$, in which case it is
assigned to $a^*$, for $i\in[2\ell]$.

\begin{lemma}
    Best diffs in $\diff(\Rbeg,\Rend,\famop{\range}{+})$
    have cost $n$ (in $\bdI$) if and only if best diffs in
    $\diff(\Rbeg',\Rend',\famop{\leq}{+})$
    have cost $n$ (in $\bdII$).
\label{rangeaddandmulteqadd}
\end{lemma}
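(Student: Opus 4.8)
The plan is to prove that the minimum diff cost is the same on both sides, from which ``best diff has cost $n$'' transfers immediately. Since every operation in $\famop{\range}{+}$ and in $\famop{\leq}{+}$ costs exactly one, it suffices to give, for each best diff on one side, a diff on the other side with the same number of operations. I would obtain the two needed inequalities from two cost-preserving, operation-by-operation translations and conclude that the optima coincide.

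First I would pin down how a single at-most clause on a constructed attribute acts on a tuple in terms of its original $\attA$-value $a$. By the definitions of $\rho_i$ and $\rho_{\ell+i}$ together with $a^*=a_\ell+1$, for $i\in[\ell]$ we have $\attA_i=a^*$ iff $a\ge a_i$ (and $\attA_i=a$ otherwise), while $\attA_{\ell+i}=a^*$ iff $a\le a_i$ (and $\attA_{\ell+i}=a$ otherwise). Because $a^*$ strictly exceeds every original value, a clause $\attA_i\le c$ with $i\le\ell$ selects a down-set $\{a\le\theta\}$ of original values (or everyone, when $c\ge a^*$), contributing only an upper bound; and a clause $\attA_{\ell+i}\le c$ selects $\{a_i<a\le c\}$, a range whose lower endpoint is pinned at $a_{i+1}$ (or everyone). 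The key structural claim---and the technical heart of the lemma---is that an arbitrary $\famop{\leq}{+}$ condition, being a conjunction of such clauses, matches the intersection of one down-set with finitely many lower-pinned ranges, which is again a single contiguous interval $[a_p,a_q]\cap V_\attA(\Rbeg,\Rend)$ of original values. Since only the finitely many breakpoints at values taken by each $\attA_i$ matter, allowing real thresholds adds nothing.

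Next I would build the two translations, using that $\Rbeg',\Rend'$ share keys and $\attB$-values with $\Rbeg,\Rend$ and that read attributes are never modified, so only the selected tuple set (identified by key) and the added increment matter. Range to at-most: given a best diff $F$ in $\diff(\Rbeg,\Rend,\famop{\range}{+})$, which has no empty-selecting operation, each operation matches $[a_p,a_q]\cap V_\attA(\Rbeg,\Rend)$; I realize the same tuples with the condition $\attA_{\ell+(p-1)}\le a_\ell \wedge \attA_{q+1}\le a_\ell$, dropping the first clause when $p=1$, dropping the second when $q=\ell$, and using the single clause $\attA_1\le a^*$ (which matches everyone, keeping the condition nonempty) for the full range $[a_1,a_\ell]$. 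Keeping the modifier $\attB\gets\attB+b$ unchanged yields $F'$ with $F'(\Rbeg')=\Rend'$ and $\cost(F')=\cost(F)$. At-most to range: given a best diff $F'$ in $\diff(\Rbeg',\Rend',\famop{\leq}{+})$, the structural claim says each operation selects a contiguous interval, which I map to the range operation on that interval with the same modifier, giving $F$ with $F(\Rbeg)=\Rend$ and equal cost.

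Combining, the range-to-at-most translation shows the minimum cost over $\diff(\Rbeg',\Rend',\famop{\leq}{+})$ is at most that over $\diff(\Rbeg,\Rend,\famop{\range}{+})$, and the reverse translation shows the opposite inequality, so the two minima are equal; in particular, a best diff in one set has cost $n$ if and only if a best diff in the other does. The step I expect to be the main obstacle is the structural claim behind the at-most-to-range translation: proving that no conjunction of at-most clauses over the $2\ell$ sentinel-encoded attributes can carve out a set of original values containing a gap or a union of two disjoint intervals. This is exactly where the asymmetric design of the ranges $\rho_i$ (upper-pinned for $i\le\ell$, lower-pinned for $i>\ell$) and the strict dominance of $a^*$ are essential, and the careful bookkeeping of which clause supplies the lower versus the upper bound is the one genuinely delicate part of the argument.
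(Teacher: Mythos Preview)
Your argument is correct for the statement as literally written, but note that the lemma almost certainly contains a typo: the surrounding context is the proof that $\bdII(\famop{=}{+})$ is $\NP$-hard (Theorem~\ref{multeqrest}), the construction introduces the sentinel $a^*$ precisely so that \emph{equality} tests $\attA_i=a^*$ recover half-ranges, and the paper's own proof uses only equality clauses. The intended family on the $\bdII$ side is $\famop{=}{+}$, not $\famop{\leq}{+}$.

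With that in mind, the paper's proof is far shorter than yours. It simply records the equivalences $\attA\in[a_i,a_j]\equiv(\attA_i=a^*)\wedge(\attA_{\ell+j}=a^*)$, $\attA\in[a_1,a_j]\equiv(\attA_{\ell+j}=a^*)$, $\attA\in[a_i,a_\ell]\equiv(\attA_i=a^*)$, and the singleton case $\attA\in[a_j,a_j]\equiv(\attA_i=a_j)$ for suitable $i$, then observes that these let one translate operations in both directions at equal cost. No structural lemma about intersections is needed, because a single equality clause on $\attA_i$ already cuts out either a half-range (when the tested value is $a^*$) or a singleton, and any conjunction of such clauses is visibly an interval.

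Your proof, by contrast, genuinely handles the $\leq$ version and therefore does real extra work: you show that an at-most clause on $\attA_i$ with $i\le\ell$ yields a down-set, an at-most clause on $\attA_{\ell+i}$ yields a lower-pinned interval, and that any conjunction of these remains a single interval of original $\attA$-values. That structural claim is correct (each clause selects an interval in $V_\attA(\Rbeg,\Rend)$, and intersections of intervals are intervals), and your explicit translations in both directions are sound, including the edge cases $p=1$, $q=\ell$, and the full range. So you have proved a slightly harder statement than the one the paper actually needs; for the intended $\famop{=}{+}$ target, the four-line equivalence list suffices.
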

\begin{proof}
    The following pair of conditions are equivalent,
    matching corresponding tuples in the respective problems.
    \begin{align*}
      \attA\in[a_i,a_j] &\equiv \attA_i=a^*\wedge \attA_{\ell+j}=a^*\\
      \attA\in[a_1,a_j] &\equiv \attA_{\ell+j}=a^*\\
      \attA\in[a_i,a_\ell] &\equiv \attA_i=a^*
    \end{align*}
    where $i,j\in[\ell]$, and
    \begin{align*}
      \attA\in[a_j,a_j] &\equiv \attA_i=a_j
    \end{align*}
    where $i,j\in[2\ell]$.
    Thus, the best diffs translate from one problem to the other.
\end{proof}

\subsection{With At-most Conditions}
The classification for the problem with the \emph{assignment} modifier is unknown.
The cases with the \emph{increment}, \emph{assignment/increment}, and
\emph{affine} modifiers are $\NP$-hard, even if we
restrict the number of read-only attributes to $2$.

\begin{theorem}
    The $\bdII(\famop{\leq}{+})$,
    $\bdII(\famop{\leq}{\getsadd})$, and
    $\bdII(\famop{\leq}{\affine})$ problems
    are $\NP$-hard, even with
    $2$ read-only attributes.
\label{multleq}
\end{theorem}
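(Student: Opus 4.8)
The plan is to establish $\NP$-hardness of all three variants by a single, uniform reduction from the corresponding single-read-attribute \emph{range} problems --- namely $\bdI(\famop{\range}{+})$, $\bdI(\famop{\range}{\getsadd})$, and $\bdI(\famop{\range}{\affine})$ --- which are already known to be $\NP$-hard by Theorems~\ref{rangeadd}, \ref{rangegetsadd}, and \ref{rangeaffine}. The key observation is that a conjunction of two \emph{at-most} clauses over two attributes can carve out a bounded interval, provided the second attribute runs ``in reverse'' relative to the first. Concretely, if every tuple's two read-only values satisfy $\attA_2 = C - \attA_1$ for a fixed integer $C$, then the clause $\attA_2 \le a_2$ is equivalent to $\attA_1 \ge C - a_2$, so that the conjunction $\attA_1 \le a_1 \wedge \attA_2 \le a_2$ selects exactly the tuples with $C - a_2 \le \attA_1 \le a_1$ --- that is, a range on $\attA_1$.

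I would take an instance $(\attK, \{\attA\}, \{\attB\}, \Rbeg, \Rend)$ of $\bdI(\famop{\range}{\omega})$ (for each modifier type $\omega \in \{+, \getsadd, \affine\}$ in turn) and build a $\bdII(\famop{\leq}{\omega})$ instance over two read-only attributes $\attA_1, \attA_2$ and the same write-only attribute $\attB$. Each tuple keeps its $\attK$ and $\attB$ values, sets $\attA_1$ to its original $\attA$ value, and sets $\attA_2 = C - \attA_1$ for a fixed integer $C$ (so all values remain integral and the reversal is faithful). This is computable in polynomial time. The condition translation is the heart of the correspondence: a range operation $(\attA \in [a,z], u)$ maps to the at-most operation $(\attA_1 \le z \wedge \attA_2 \le C - a, u)$, leaving the modifier $u$ untouched, while a range whose lower (resp.\ upper) endpoint sits at the attribute boundary maps to a single-clause condition $\attA_1 \le z$ (resp.\ $\attA_2 \le C - a$).

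I would then prove that this correspondence is a cost-preserving bijection on diffs: the modifier of each operation is carried over verbatim, so the same modifier family is used on both sides, and since a condition costs $1$ per operation in both the $\range$ and $\leq$ columns of the cost table, a diff and its image have equal cost. The forward direction (range $\Rightarrow$ two at-most clauses) is immediate from the equivalence above. The step I expect to be the main obstacle is the reverse direction: I must argue that every operation available in the constructed $\bdII$ instance selects a tuple set realizable by a single range operation in the original instance, so the reduction does not hand the $\bdII$ side extra power and thereby a spuriously cheaper diff. This amounts to checking each admissible condition shape --- a lone clause $\attA_1 \le a_1$ (a range touching the lower boundary), a lone clause $\attA_2 \le a_2$ (a range touching the upper boundary), and the full conjunction (a bounded range) --- and observing that a conjunction with $C - a_2 > a_1$ selects nothing and can be dropped from any optimal diff without changing its cost. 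With both directions in hand, the optimal diff cost is identical in the two instances, the decision version transfers, and hence each of $\bdII(\famop{\leq}{+})$, $\bdII(\famop{\leq}{\getsadd})$, and $\bdII(\famop{\leq}{\affine})$ is $\NP$-hard; the construction uses exactly two read-only attributes, matching the claimed strengthening.
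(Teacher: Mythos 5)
Your proposal is correct and is essentially the paper's own reduction: the paper also reduces from $\bdI(\famop{\range}{+})$ (and its $\getsadd$, $\affine$ analogues) by duplicating the read attribute as $\attA_1=a$ and $\attA_2=-a$ (your construction with $C=0$) and translating $\attA\in[a,z]$ into $\attA_1\leq z\wedge\attA_2\leq -a$, with boundary ranges becoming single clauses. Your extra care on the reverse direction (empty conjunctions being droppable) only makes explicit a step the paper leaves implicit.
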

We only show the proof for $\bdII(\famop{\leq}{+})$,
as the remaining proofs are similar.
The idea is to simulate range conditions in one attribute
with at-most conditions in two attributes.

Consider the instance $(\attK,\calA,\calB,\Rbeg,\Rend)$
of $\bdI(\famop{\range}{+})$, where $\calA=\{\attA\}$.
The reduction is as follows: we construct the instance
$(\attK,\calA',\calB,\Rbeg',\Rend')$.
Here, $\calA'=\{\attA_1,\attA_2\}$, and
$\Rbeg'$ and $\Rend'$ are identical to $\Rbeg$ and $\Rend$, respectively,
except that the attribute $\attA=a$ is replaced by $\attA_1=a$ and $\attA_2=-a$.

\begin{lemma}
    Best diffs in $\diff(\Rbeg,\Rend,\famop{\range}{+})$
    have cost $n$ (in $\bdI$) if and only if best diffs in
    $\diff(\Rbeg',\Rend',\famop{\leq}{+})$
    have cost $n$ (in $\bdII$).
\label{rangeaddandmultleqadd}
\end{lemma}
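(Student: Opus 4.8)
The plan is to exhibit cost-preserving maps between $\diff(\Rbeg,\Rend,\famop{\range}{+})$ and $\diff(\Rbeg',\Rend',\famop{\leq}{+})$ in both directions, each preserving diff-membership and cost, from which equality of the two minimum diff costs (and hence the claimed iff) follows immediately. The engine is the observation that $\attA_2=-\attA_1$ holds in every tuple of $\Rbeg'$ and $\Rend'$ by construction, and since the read attributes are never altered by any $(\calA',\calB)$-operation, this invariant is preserved throughout any diff. Consequently, on the tuples that actually occur, the range clause $\attA\in[a,z]$ selects exactly the tuples satisfying $\attA_1\leq z\wedge\attA_2\leq -a$, because $\attA_1\leq z$ encodes $\attA\leq z$ and $\attA_2\leq -a$ encodes $\attA\geq a$.

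First I would make the forward translation explicit: map each operation $(\attA\in[a,z],\attB\gets\attB+b)$ to $(\attA_1\leq z\wedge\attA_2\leq -a,\attB\gets\attB+b)$. This is a legal $\famop{\leq}{+}$ operation (a conjunction of single-attribute at-most clauses of the same type), it carries the same modifier, and by the invariant it matches precisely the corresponding tuples. Then I would give the reverse translation: an at-most condition on $\{\attA_1,\attA_2\}$ is either the conjunction $\attA_1\leq z_1\wedge\attA_2\leq z_2$, corresponding to $\attA\in[-z_2,z_1]$, or a single clause $\attA_1\leq z_1$ (resp. $\attA_2\leq z_2$), corresponding to the range $\attA\in[v_\attA^\mathrm{min},z_1]$ (resp. $\attA\in[-z_2,v_\attA^\mathrm{max}]$), using the boundary values to render the unconstrained side finite while selecting the same tuples. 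Each direction keeps the modifier intact and assigns cost $1$ per operation.

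Next I would verify that the translation lifts from single operations to whole sequences. Since corresponding operations match identical sets of tuples (under the $\attK$-correspondence) and apply identical increments to the shared attribute $\attB$, a straightforward induction on the prefix $F_i=(f_1,\ldots,f_i)$ shows that $F_i(\Rbeg)$ and $F'_i(\Rbeg')$ agree on every tuple's $\attB$ value (and trivially on $\attK$ and the read attributes). Hence $F(\Rbeg)=\Rend$ iff $F'(\Rbeg')=\Rend'$, so $F\in\diff(\Rbeg,\Rend,\famop{\range}{+})$ iff $F'\in\diff(\Rbeg',\Rend',\famop{\leq}{+})$, with $\cost(F)=\cost(F')$. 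Applying this in both directions gives $\mathrm{OPT}_{\bdI}\leq\mathrm{OPT}_{\bdII}$ and the reverse, so the minimum diff cost is the same in both instances; in particular the best diff has cost $n$ in the $\bdI$ instance iff it has cost $n$ in the $\bdII$ instance.

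The main obstacle is not any single calculation but making the reverse direction airtight: I must ensure that every at-most condition in the constructed $\bdII$ instance — including the degenerate single-clause conditions, which do not obviously look like ranges — maps back to an honest range clause selecting the same tuples, which is exactly where the boundary values $v_\attA^\mathrm{min}$ and $v_\attA^\mathrm{max}$ together with the read-only invariant $\attA_2=-\attA_1$ do the real work. Once that correspondence is shown to be exhaustive and matching-preserving, the cost bookkeeping and the sequence-level induction are routine.
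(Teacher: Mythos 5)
Your proposal is correct and takes essentially the same route as the paper: the paper's proof likewise rests on the condition equivalences $\attA\in[a,z]\equiv\attA_1\leq z\wedge\attA_2\leq -a$, $\attA\in[v_\attA^\mathrm{min},z]\equiv\attA_1\leq z$, and $\attA\in[a,v_\attA^\mathrm{max}]\equiv\attA_2\leq -a$, and then concludes that diffs translate operation-by-operation between the two instances with identical cost. The details you add --- the read-only invariant $\attA_2=-\attA_1$, the exhaustive case analysis of at-most conditions on two attributes, and the prefix induction showing corresponding sequences produce equal $\attB$ values --- are precisely what the paper leaves implicit in its one-line conclusion.
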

\begin{proof}
    The following pair of conditions are equivalent,
    matching corresponding tuples in the respective problems.
    \begin{align*}
      \attA\in[a,z] &\equiv \attA_1\leq z\wedge \attA_2\leq -a\\
      \attA\in[v_\textrm{min}^\attA,z] &\equiv \attA_1\leq z\\
      \attA\in[a,v_\textrm{max}^\attA] &\equiv \attA_2\leq -a
    \end{align*}
    Thus, the best diffs translate from one problem to the other.
\end{proof}

\subsection{With At-most/At-least, Range, or Union-of-Ranges Conditions}
With \emph{at-most/at-least}, \emph{range}, or \emph{union-of-ranges} conditions,
the problem is $\NP$-hard
when using \emph{increment}, \emph{assignment/increment}, or
\emph{affine} modifiers, via trivial reductions.
\begin{theorem}
    The $\bdII(\famop{\phi}{\omega})$ problem,
    for
    \begin{align*}
        \phi &\in\{\lgeq,\range,\rangeunion\}\text{ and}\\
        \omega &\in\{+,\getsadd,\affine\}
    \end{align*}
    is $\NP$-hard, even with
    $1$ read-only attribute.
\label{multtrivial}
\end{theorem}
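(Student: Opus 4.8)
The plan is to observe that $\bdII(\famop{\phi}{\omega})$ \emph{restricted to a single read-only attribute} is literally the same problem as $\bdI(\famop{\phi}{\omega})$. Both have write set $\calB=\{\attB\}$, no read-write attributes, and a read set $\calA$ that is a singleton; when $\calA=\{\attA\}$ the conjunctive condition $p_1\wedge\cdots\wedge p_h$ permitted in $\bdII$ collapses to a single clause on $\attA$, so the two operation families coincide exactly, as do their costs. Hence the identity map is a valid (trivial) reduction, and I would first make this coincidence precise.

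Concretely, given any instance $(\attK,\{\attA\},\{\attB\},\Rbeg,\Rend)$ of $\bdI(\famop{\phi}{\omega})$, I would read it verbatim as an instance of $\bdII(\famop{\phi}{\omega})$ with a single read-only attribute. Since every $(\calA,\calB)$-operation available in one problem is available in the other with identical semantics and identical cost, the two instances admit the same set $\diff(\Rbeg,\Rend,\famop{\phi}{\omega})$ and therefore the same best diff. A solver for $\bdII(\famop{\phi}{\omega})$ thus solves $\bdI(\famop{\phi}{\omega})$, giving the reduction in the direction needed for hardness. It then suffices to cite the already-established $\bdI$ hardness for each listed combination: for $\phi=\lgeq$, Theorems~\ref{lgeqadd}, \ref{lgeqgetsadd}, and \ref{lgeqaffine}; for $\phi=\range$, Theorems~\ref{rangeadd}, \ref{rangegetsadd}, and \ref{rangeaffine}; and for $\phi=\rangeunion$, Theorem~\ref{unionadd}, which covers all three modifiers at once. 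Composing each with the identity reduction yields $\NP$-hardness of $\bdII(\famop{\phi}{\omega})$ for every $(\phi,\omega)$ in the stated ranges, using only one read-only attribute.

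There is essentially no hard step here; the only point that warrants a sentence of care is verifying that the families $\famop{\phi}{\omega}$ are interpreted identically in the two settings — in particular that the cost model (including the $\kappa_0+\kappa_1\sum r$ cost in the union-of-ranges case) is unchanged when $\calA$ is a singleton. Since the clause-type definitions and the per-operation cost do not depend on $|\calA|$, this is immediate, and the ``reduction'' amounts to the observation that $\bdI$ is exactly the one-read-only-attribute specialization of $\bdII$.
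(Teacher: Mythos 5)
Your proposal is correct and takes exactly the same route as the paper, whose entire proof is the one-line observation that these problems ``are generalizations from their $\bdI$ counterparts, which are all $\NP$-hard.'' You simply spell out what the paper leaves implicit---that the one-read-only-attribute case of $\bdII(\famop{\phi}{\omega})$ coincides with $\bdI(\famop{\phi}{\omega})$ as a problem (same operations, same costs), so the identity map is the reduction and Theorems~\ref{lgeqadd}, \ref{lgeqgetsadd}, \ref{lgeqaffine}, \ref{rangeadd}, \ref{rangegetsadd}, \ref{rangeaffine}, and~\ref{unionadd} finish the job.
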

\begin{proof}
    These are generalizations from their $\bdI$ counterparts,
    which are all $\NP$-hard.
\end{proof}

The case with the \emph{assignment} modifier is different.
The classification for the problem with the \emph{at-most/at-least} condition is unknown.
While there is a version of the view synthesis problem that is similar
to the \emph{range} case, we provide a proof of $\NP$-hardness via a
different problem. The proof works even when we
restrict the number of read-only attributes to $2$.
\begin{theorem}
    The $\bdII(\famop{\range}{\gets})$ problem
    is $\NP$-hard, even with
    $2$ read-only attributes.
\label{multrangegets}
\end{theorem}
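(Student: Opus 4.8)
The plan is to reformulate $\bdII(\famop{\range}{\gets})$ geometrically and then reduce a known $\NP$-hard covering problem into it. With exactly two read-only attributes $\attA_1,\attA_2$, each tuple corresponds to a point $(\attA_1,\attA_2)$ in the integer plane, a range condition $\attA_1\in[a_1,z_1]\wedge\attA_2\in[a_2,z_2]$ is an axis-aligned rectangle, and an assignment modifier $\attB\gets b$ ``paints'' every point inside that rectangle with color $b$. A diff is thus a sequence of rectangle paints applied with overwrite semantics (the last paint touching a point wins), and the problem asks for the shortest such sequence taking the source coloring to the target coloring. Note that, unlike the increment families, commutativity fails here, so the hardness must come from the genuinely two-dimensional interaction: the one-attribute version $\bdI(\famop{\range}{\gets})$ is polynomial (Theorem~\ref{rangegets}), and the one-attribute union-of-ranges hardness (Theorem~\ref{uniongets}) relies essentially on the per-operation grouping discount $\kappa_0$, which has no analogue when every rectangle costs one, so that reduction does not transfer either.

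First I would fix a source problem whose instances map cleanly onto ``which rectangles to paint,'' such as a minimum rectangle cover problem (e.g.\ covering the $1$-cells of a $0/1$ matrix by all-$1$ subrectangles, which is $\NP$-hard through the rectilinear-region covering results). The construction would lay the combinatorial objects out on the grid so that each legitimate choice in the source problem corresponds to one rectangle paint of a designated color, and a solution of size $k$ yields a diff of cost $k$ plus a fixed additive constant for setup paints. To keep the correspondence tight I would introduce \emph{barrier} tuples in the spirit of Lemma~\ref{barrier}: points carrying pairwise-distinct target colors and positioned so that any rectangle crossing a region boundary strands a barrier at the wrong color, forcing a dedicated and prohibitively expensive repair. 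Together with assigning distinct colors to parts that must not be merged, these barriers confine every useful rectangle to a single region and ensure that overwriting can never combine unrelated choices.

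The forward direction (cover $\Rightarrow$ diff) should be routine: translate each chosen rectangle into a paint, order the paints so that overwrites land correctly, and verify the cost. The main obstacle is the lower-bound (backward) direction: I must show that \emph{every} diff of cost at most the target value can be normalized into one that respects the barriers and uses exactly one rectangle per source choice, so that it induces a cover of the corresponding size. This is delicate precisely because assignments are non-commutative and overpainting is permitted---one must rule out ``clever'' solutions that paint one large rectangle and then correct a few stray points, or reorder paints to share work across regions. I expect to handle it by a cost-accounting argument, proving that each barrier violation or cross-region rectangle costs strictly more than it can save, combined with induction on the number of paints, analogous to the canonical-form lemmas used for the $\famop{\lgeq}{+}$ and $\famop{\rangeunion}{\gets}$ reductions. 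Once optimal costs are shown to coincide, the equivalence yields $\NP$-hardness with only two read-only attributes.
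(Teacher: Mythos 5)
Your overall framing (two read-only attributes as the plane, range conditions as axis-aligned rectangles, assignment as painting) and your choice of source problem (covering a rectilinear region by axis-aligned rectangles) coincide with the paper's reduction, which is from \textsc{RectangleCover}. The gap is in the one step you yourself flag as the main obstacle: the lower-bound direction. You propose to enforce the cover structure with barriers whose violation forces ``a dedicated and prohibitively expensive repair,'' and then to rule out paint-large-then-fix solutions by cost accounting. With assignment modifiers this mechanism is unsound as stated: a stray tuple whose target value differs from everything else can be repaired by a single extra operation, and there are instances where overpainting strictly beats every cover. Take the target region to be an $n\times n$ block of cells minus its center cell (a legitimate instance of your 0/1-matrix source problem): the minimum rectangle cover has size $4$, since no admissible rectangle can contain two of the four cells adjacent to the hole, but painting the whole block with $0$ and then re-assigning the center cell its old value is a feasible diff of cost $2$. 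So an equivalence of the form ``cover of size $k$ iff diff of cost $k$ plus a constant'' is false whenever barriers are repairable at unit cost; no amount of accounting or normalization can rescue it, because the cheap diff genuinely exists.

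What makes the reduction work---and what Lemma~\ref{barrier}, which you cite, actually provides---is impossibility rather than expense: place several tuples (the paper uses $99$; two suffice) at the \emph{same} $(\attA_1,\attA_2)$ location with pairwise distinct $\attB$ values, identical in $\Rbeg$ and $\Rend$, on every cell outside the region to be covered. Any range condition matches all of these co-located tuples or none of them, and an assignment gives them all equal values, which no later sequence of assignments can ever make distinct again; hence no feasible diff may touch such a cell at all. With that gadget the backward direction collapses to a triviality: every rectangle of every feasible diff lies inside the region, every inside cell (whose co-located tuples must all become $0$) is hit by at least one rectangle, so the rectangles form a cover of size at most the cost; conversely any cover yields a diff in which every operation assigns $0$, so order is irrelevant and no designated colors, setup paints, or canonical-form lemmas are needed. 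Your plan goes through once you replace ``expensive repair'' by this untouchability argument and protect all non-target cells with it; as written, the central step is missing and the proposed accounting cannot supply it.
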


In order to prove Theorem~\ref{multrangegets},
we provide a polynomial-time reduction from \textsc{RectangleCover},
which is a known $\NP$-hard problem, defined as follows~\cite{Culberson19942}.
\begin{definition}[RectangleCover]
    The \textsc{RectangleCover} decision problem is, given
    an orthogonal polygon $P$ (on a plane) with $n$ vertices,
    and a nonnegative integer $t$, determine whether
    there is a rectangle cover of $P$ of size $t$; that is,
    whether there exists a set of $t$ axis-aligned rectangles
    whose union is exactly $P$.
\end{definition}

\begin{proof}
Consider an instance of the \textsc{RectangleCover} problem with an
orthogonal polygon $P$ with $n$ vertices. Without loss of generality,
let $[\ell]$ be the set of coordinates used by $P$, where $\ell\leq n$.
(Essentially we perform a ``rank-space reduction''~\cite{alstrup2000new}, since stretching
the polygon does not affect the size of the cover.)
Construct an $\ell\times\ell$ grid and superimpose the polygon $P$ on it.

We create $99$ tuples for each of the $\ell\times\ell$ grid cells,
with their $\attA_1$ and $\attA_2$ values corresponding to their $x$ and $y$
coordinates. Their $\attB$ values are set as follows:
in $\Rbeg$, set all $\attB$ values to distinct positive values;
in $\Rend$, set $\attB$ to the same values as in $\Rbeg$, except when the
following condition applies: for the tuple with $\attA_1=x$ and $\attA_2=y$,
the square with opposite corners $(x,y)$ and $(x+1,y+1)$ is contained in
(the superimposed) $P$;
in which case $\attB$ is set to $0$.

The claim is that $P$ has a rectangle cover of size $t$ if and only if
$\Rbeg$ and $\Rend$ has a diff under $\famop{\range}{\gets}$ of cost $t$.
This is because using a rectangle with opposite corners $(x_1,y_1)$
and $(x_2,y_2)$, where $x_1 < x_2$ and $y_1 < y_2$,
corresponds to setting $\attB\gets 0$ to the
tuples matching the condition
$\attA_1\in[x_1,x_2-1]\wedge\attA_2\in[y_1,y_2-1]$.
To see why the off-by-one correction is needed,
consider the case $x_1=x_2$. The range $[x_1,x_2]$ is not
empty (contains one element), but the rectangle defined
by those $x$-coordinates have zero width.

The reason we need multiple tuples per grid cell is
to prevent ``unsetting'' the $\attB$ value from $0$.
Once the $\attB$ values of these
$99$ tuples are set to $0$ (or any value),
they cannot be changed back into distinct $\attB$ values again
using assignment modifier.

Thus, the problem is $\NP$-hard.
\end{proof}

\begin{theorem}
    The $\bdII(\famop{\rangeunion}{\gets})$ problem
    is $\NP$-hard, even with
    $2$ read-only attributes.
\label{multuniongets}
\end{theorem}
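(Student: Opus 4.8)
The plan is to obtain this result as a direct generalization of the single-attribute case, in the same spirit as the proof of Theorem~\ref{multtrivial}. Since $\bdII$ permits any number of read-only attributes, it contains $\bdI(\famop{\rangeunion}{\gets})$ as the special case of one read-only attribute, and the latter is already $\NP$-hard by Theorem~\ref{uniongets}. This immediately yields $\NP$-hardness; the only additional work is to strengthen the statement to instances with \emph{exactly} two read-only attributes.

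To do this, I would take an arbitrary instance $(\attK,\{\attA\},\{\attB\},\Rbeg,\Rend)$ produced by the reduction of Theorem~\ref{uniongets} and pad it with a second read-only attribute $\attA_2$ whose value is a fixed constant $c$ on every tuple of both $\Rbeg$ and $\Rend$. Keeping $\kappa_0$ and $\kappa_1$ as in Theorem~\ref{uniongets}, this transformation is clearly polynomial, and it remains to show that the best-diff cost is unchanged. The forward direction is immediate: any diff for the original instance is a diff for the padded instance verbatim, because a condition need not mention all attributes in $\calA$ (it may use $\attA$ alone), so neither the matched tuples nor the total range count change.

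The heart of the argument, and the step I expect to be the main obstacle, is the backward direction: showing that access to the constant attribute $\attA_2$ can never reduce cost. The key observation is that a union-of-ranges clause on $\attA_2$ is degenerate, since $\attA_2$ takes the single value $c$: such a clause matches either every tuple (when $c$ lies in its union of ranges) or no tuple. Given any diff for the padded instance, I would rewrite each operation accordingly. If a clause on $\attA_2$ matches no tuple, the whole operation is a no-op and can be deleted, which only lowers the total range count $\sum r$ and hence the cost. If it matches every tuple, the clause is redundant and can be dropped, again lowering $\sum r$; and if dropping it leaves the condition empty (the operation constrained only $\attA_2$), I replace it by the single-range clause $\attA\in[v_\attA^\mathrm{min}+1,\,v_\attA^\mathrm{max}-1]$ on $\attA$, which matches all tuples at range count $1$, no greater than the cost of the discarded $\attA_2$ clause. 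The resulting diff mentions only $\attA$, is a valid diff for the original instance, and costs no more, so the two minimum costs coincide.

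Finally, I would remark why the simpler ``same-instance'' reduction used in the increment case (Theorem~\ref{unionadd}, with $\kappa_0=0,\kappa_1=1$) does not transfer cleanly here. With two read-only attributes a union-of-ranges condition selects the Cartesian product of its per-attribute unions, i.e.\ a full grid of $r_1 r_2$ rectangles at cost only $\kappa_1(r_1+r_2)$, so a multi-range operation can be strictly cheaper than the corresponding collection of single rectangles. This breaks the cost correspondence one would need in order to reduce from the range version $\bdII(\famop{\range}{\gets})$ of Theorem~\ref{multrangegets}, which is precisely why routing through the already-hard one-attribute union problem of Theorem~\ref{uniongets} is the cleaner route.
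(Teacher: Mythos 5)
Your proof is correct, and it takes a genuinely different route from the paper's. The paper disposes of this theorem in one line, as a corollary of Theorem~\ref{multrangegets} (the \textsc{RectangleCover} reduction with two read-only attributes). You instead start from the one-attribute hardness of Theorem~\ref{uniongets} and pad each hard instance with a second, constant read-only attribute, then prove the padding is cost-neutral: a union-of-ranges clause on the constant attribute matches all tuples or none, so the operation can be deleted as a no-op, the clause dropped, or the clause replaced by a single full-width range on $\attA$, never increasing $\kappa_0+\kappa_1\sum r$. That argument is complete and sound, and it establishes the statement as written, since your hard instances do have exactly two read-only attributes (albeit one degenerate). Comparing the two: the paper's route keeps the hardness genuinely two-dimensional, whereas yours in essence certifies one-dimensional hardness and lifts it formally to two attributes. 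On the other hand, your final paragraph identifies precisely what the paper's one-line corollary leaves unaddressed: with two attributes, a union-of-ranges condition matches the Cartesian product of its per-attribute unions, so a single operation of cost $\kappa_0+\kappa_1(r_1+r_2)$ covers $r_1r_2$ grid-aligned rectangles at once, and an exact cover by such operations can be strictly cheaper than any rectangle cover (an H-shaped or annulus-shaped polygon already shows this), under any choice of $\kappa_0$ and $\kappa_1$. Hence the cost correspondence with \textsc{RectangleCover} that the corollary implicitly relies on requires an argument the paper does not supply, and your detour through Theorem~\ref{uniongets} sidesteps the issue entirely; as written, your proof is the more rigorous of the two.
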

\begin{proof}
    This is a corollary of Theorem~\ref{multrangegets}.
\end{proof}

\section{Conclusions and Future Work}
\label{sec:future}

This paper introduces the family of \ddiff problems
characterized by a particular set
of modifiers and conditions of interest. It identifies
the base case of $1$ read-only and $1$ write-only attribute
and fully classifies the complexity across families of
operations (Table~\ref{table:summary1}).
It also discusses the generalization to
multiple read-only attributes, showing $\NP$-hardness
in most families of operations (Table~\ref{table:summary2}).

Some remaining open problems are discussed earlier,
particularly characterizing $\bdII(\famop{\leq}{\gets})$
and $\bdII(\famop{\lgeq}{\gets})$. In addition, we have
only discussed the settings with $1$ write-only attribute
and $0$ read-write attributes. In particular, introducing
read-write attributes creates a complexity where an
operation may modify the values in the attributes used
for conditions, and therefore the same condition may match
a different set of tuples depending on when it is used,
making the order of operations even more crucial.
Characterizing the problem under relaxations of these constraints
is therefore an interesting venue for further investigation.

\section*{Acknowledgements}
We would like to thank Jeff Erickson for initial discussions;
we would also like to thank Liqi Xu and Sheng Shen for practical implementations of \ddiff.

\bibliographystyle{abbrv}
\bibliography{main}  

\begin{thebibliography}{10}

\bibitem{abouzied2013learning}
A.~Abouzied, D.~Angluin, C.~Papadimitriou, J.~M. Hellerstein, and
  A.~Silberschatz.
\newblock Learning and verifying quantified boolean queries by example.
\newblock In {\em Proceedings of the 32nd ACM SIGMOD-SIGACT-SIGAI symposium on
  Principles of database systems}, pages 49--60. ACM, 2013.

\bibitem{alexe2011designing}
B.~Alexe, B.~Ten~Cate, P.~G. Kolaitis, and W.-C. Tan.
\newblock Designing and refining schema mappings via data examples.
\newblock In {\em Proceedings of the 2011 ACM SIGMOD International Conference
  on Management of data}, pages 133--144. ACM, 2011.

\bibitem{alstrup2000new}
S.~Alstrup, G.~S. Brodal, and T.~Rauhe.
\newblock New data structures for orthogonal range searching.
\newblock In {\em 41st Annual Symposium on Foundations of Computer Science,
  {FOCS} 2000, 12-14 November 2000, Redondo Beach, California, {USA}}, pages
  198--207, 2000.

\bibitem{barowy2015flashrelate}
D.~W. Barowy, S.~Gulwani, T.~Hart, and B.~Zorn.
\newblock Flashrelate: extracting relational data from semi-structured
  spreadsheets using examples.
\newblock In {\em ACM SIGPLAN Notices}, volume~50, pages 218--228. ACM, 2015.

\bibitem{bhardwaj2014datahub}
A.~Bhardwaj, S.~Bhattacherjee, A.~Chavan, A.~Deshpande, A.~J. Elmore,
  S.~Madden, and A.~G. Parameswaran.
\newblock Datahub: Collaborative data science \& dataset version management at
  scale.
\newblock {\em CIDR}, 2015.

\bibitem{bhattacherjee2015principles}
S.~Bhattacherjee, A.~Chavan, S.~Huang, A.~Deshpande, and A.~Parameswaran.
\newblock Principles of dataset versioning: Exploring the recreation/storage
  tradeoff.
\newblock {\em Proceedings of the VLDB Endowment}, 8(12):1346--1357, 2015.

\bibitem{bonifati2015learning}
A.~Bonifati, R.~Ciucanu, and A.~Lemay.
\newblock Learning path queries on graph databases.
\newblock In {\em 18th International Conference on Extending Database
  Technology (EDBT)}, 2015.

\bibitem{bonifati2014paradigm}
A.~Bonifati, R.~Ciucanu, A.~Lemay, and S.~Staworko.
\newblock A paradigm for learning queries on big data.
\newblock In {\em Proceedings of the First International Workshop on Bringing
  the Value of Big Data to Users (Data4U 2014)}, page~7. ACM, 2014.

\bibitem{bonifati2014interactive}
A.~Bonifati, R.~Ciucanu, and S.~Staworko.
\newblock Interactive join query inference with jim.
\newblock {\em Proceedings of the VLDB Endowment}, 7(13):1541--1544, 2014.

\bibitem{bonifati2016learning}
A.~Bonifati, R.~Ciucanu, and S.~Staworko.
\newblock Learning join queries from user examples.
\newblock {\em ACM Transactions on Database Systems (TODS)}, 40(4):24, 2016.

\bibitem{cate2013learning}
B.~T. Cate, V.~Dalmau, and P.~G. Kolaitis.
\newblock Learning schema mappings.
\newblock {\em ACM Transactions on Database Systems (TODS)}, 38(4):28, 2013.

\bibitem{Culberson19942}
J.~Culberson and R.~Reckhow.
\newblock Covering polygons is hard.
\newblock {\em Journal of Algorithms}, 17(1):2 -- 44, 1994.

\bibitem{das2010synthesizing}
A.~Das~Sarma, A.~Parameswaran, H.~Garcia-Molina, and J.~Widom.
\newblock Synthesizing view definitions from data.
\newblock In {\em Proceedings of the 13th International Conference on Database
  Theory}, pages 89--103. ACM, 2010.

\bibitem{fletcher2009expressive}
G.~H. Fletcher, M.~Gyssens, J.~Paredaens, and D.~Van~Gucht.
\newblock On the expressive power of the relational algebra on finite sets of
  relation pairs.
\newblock {\em IEEE Transactions on Knowledge and Data Engineering},
  21(6):939--942, 2009.

\bibitem{gareyjohnson}
M.~R. Garey and D.~S. Johnson.
\newblock {\em Computers and Intractability: A Guide to the Theory of
  NP-Completeness}.
\newblock W. H. Freeman \& Co., New York, NY, USA, 1979.

\bibitem{gottlob2010schema}
G.~Gottlob and P.~Senellart.
\newblock Schema mapping discovery from data instances.
\newblock {\em Journal of the ACM (JACM)}, 57(2):6, 2010.

\bibitem{DBLP:conf/popl/Gulwani11}
S.~Gulwani.
\newblock {Automating string processing in spreadsheets using input-output
  examples}.
\newblock In {\em POPL}, pages 317--330, 2011.

\bibitem{DBLP:journals/cacm/GulwaniHS12}
S.~Gulwani, W.~R. Harris, and R.~Singh.
\newblock {Spreadsheet data manipulation using examples}.
\newblock {\em Commun. ACM}, 55(8):97--105, 2012.

\bibitem{harris2011spreadsheet}
W.~R. Harris and S.~Gulwani.
\newblock Spreadsheet table transformations from examples.
\newblock In {\em ACM SIGPLAN Notices}, volume~46, pages 317--328. ACM, 2011.

\bibitem{kini2015flashnormalize}
D.~Kini and S.~Gulwani.
\newblock Flashnormalize: Programming by examples for text normalization.
\newblock In {\em IJCAI}, pages 776--783, 2015.

\bibitem{lu2013quicksilver}
E.~Lu, R.~Bodik, and B.~Hartmann.
\newblock Quicksilver: Automatic synthesis of relational queries.
\newblock Technical report, Tech. Rep. UCB/EECS-2013-68, UC-Berkeley, 2013.

\bibitem{maddox2016decibel}
M.~Maddox, D.~Goehring, A.~J. Elmore, S.~Madden, A.~Parameswaran, and
  A.~Deshpande.
\newblock Decibel: The relational dataset branching system.
\newblock {\em Proceedings of the VLDB Endowment}, 9(9):624--635, 2016.

\bibitem{panev2016reverse}
K.~Panev and S.~Michel.
\newblock Reverse engineering top-k database queries with paleo.
\newblock In {\em EDBT}, pages 113--124, 2016.

\bibitem{DBLP:journals/pvldb/SinghG12}
R.~Singh and S.~Gulwani.
\newblock {Learning Semantic String Transformations from Examples}.
\newblock {\em PVLDB}, 5(8):740--751, 2012.

\bibitem{singh2012synthesizing}
R.~Singh and S.~Gulwani.
\newblock Synthesizing number transformations from input-output examples.
\newblock In {\em International Conference on Computer Aided Verification},
  pages 634--651. Springer, 2012.

\bibitem{singh2016transforming}
R.~Singh and S.~Gulwani.
\newblock Transforming spreadsheet data types using examples.
\newblock In {\em ACM SIGPLAN Notices}, volume~51, pages 343--356. ACM, 2016.

\bibitem{timkovskii1989complexity}
V.~Timkovskii.
\newblock Complexity of common subsequence and supersequence problems and
  related problems.
\newblock {\em Cybernetics and Systems Analysis}, 25(5):565--580, 1989.

\bibitem{QBO}
Q.~T. Tran, C.~Y. Chan, and S.~Parthasarathy.
\newblock Query by output.
\newblock In {\em Proc. of {ACM SIGMOD}}, 2009.

\bibitem{tran2014query}
Q.~T. Tran, C.-Y. Chan, and S.~Parthasarathy.
\newblock Query reverse engineering.
\newblock {\em The VLDB Journal}, 23(5):721--746, 2014.

\bibitem{zhang2013reverse}
M.~Zhang, H.~Elmeleegy, C.~M. Procopiuc, and D.~Srivastava.
\newblock Reverse engineering complex join queries.
\newblock In {\em Proceedings of the 2013 ACM SIGMOD International Conference
  on Management of Data}, pages 809--820. ACM, 2013.

\bibitem{zhang2013automatically}
S.~Zhang and Y.~Sun.
\newblock Automatically synthesizing sql queries from input-output examples.
\newblock In {\em Automated Software Engineering (ASE), 2013 IEEE/ACM 28th
  International Conference on}, pages 224--234. IEEE, 2013.

\end{thebibliography}


\newpage
\begin{appendix}
\section{Polynomial-time Results}
\label{sec:polytimeproofs}
In this section, we discuss cases of the \ddiff problem
with polynomial time algorithms in more detail.

\begin{proof}[of Theorem~\ref{eqgets} ($\famop{=}{\gets}$, $\famop{=}{+}$, $\famop{=}{\getsadd}$, $\famop{=}{\affine}$)]
    Each operation affects the value of all tuples with one value of $\attA$,
    and there is not a reason to apply two operations for the same value of
    $\attA$. Therefore, one simply needs to sort the tuples by their $\attA$
    value, iterate through all values of $\attA$,
    and create an appropriate operation to modify the $\attB$ value
    to the right value, if possible.
\end{proof}

\begin{proof}[of Theorem~\ref{leqgets} ($\famop{\leq}{\gets}$)]
    We first consider the following proposition:
    if $\diff(\Rbeg,\Rend,\famop{\leq}{\gets})$
    is nonempty, then it
    contains a best diff $F=(f_1,\ldots,f_m)$
    in which for all $i,j\in[m]$, if $i<j$ and
    \begin{align*}
        f_i&=(\attA\leq a_i,\attB\gets b_i)\text{ and}\\
        f_j&=(\attA\leq a_j,\attB\gets b_j)
    \end{align*}
    then $a_i > a_j$.

    The proof follows.
    Suppose $a_i \leq a_j$, then the diff $F'$ constructed by
    removing $f_i$ from $F$ achieves the same result---that is,
    $F'(\Rbeg)=F(\Rbeg)=\Rend$---because changes caused by $f_i$
    are rendered moot by $f_j$,
    and thus $F$ is not a best diff.
    \begin{center}
    \begin{tikzpicture}
        \begin{scope}[xshift=0.0cm]
            \draw[thin] (-0.2,-1.5) rectangle (3.7,0.3);
            \draw[<->,thick] (-0.1,0)--(3.1,0) node[anchor=west] {$\attA$};
            \draw[<-|] (0,-0.3)--node[below=-1.5pt]{$\attB\gets b_i$} (2.0,-0.3) node[anchor=west] {$a_i$};
            \draw[<-|] (0,-1.0)--node[below=-1.5pt]{$\attB\gets b_j$} (2.7,-1.0) node[anchor=west] {$a_j$};
        \end{scope}
        \node () at (4.0,-0.6) {$=$};
        \begin{scope}[xshift=4.5cm]
            \draw[thin] (-0.2,-1.5) rectangle (3.7,0.3);
            \draw[<->,thick] (-0.1,0)--(3.1,0) node[anchor=west] {$\attA$};
            \draw[<-|] (0,-1.0)--node[below=-1.5pt]{$\attB\gets b_j$} (2.7,-1.0) node[anchor=west] {$a_j$};
        \end{scope}
    \end{tikzpicture}
    \end{center}
    Therefore, one simply needs to sort the tuples
    by their $\attA$, and in decreasing order of $\attA$,
    create an appropriate operation to modify the $\attB$ value
    to the right value, if possible.
\end{proof}

\begin{proof}[of Theorem~\ref{leqadd} ($\famop{\leq}{+}$)]
    We first consider the following proposition:
    if $\diff(\Rbeg,\Rend,\famop{\leq}{+})$
    is nonempty, then it
    contains a best diff $F=(f_1,\ldots,f_m)$
    in which for all $i,j\in[m]$, if $i<j$ and
    \begin{align*}
        f_i&=(\attA\leq a_i,\attB\gets\attB+b_i)\text{ and}\\
        f_j&=(\attA\leq a_j,\attB\gets\attB+b_j)
    \end{align*}
    then $a_i > a_j$. The correctness of the proposition follows from the fact
    that the operations are commutative.

    Therefore, one simply needs to sort the tuples
    in decreasing order of $\attA$,
    create an appropriate operation to modify the $\attB$ value
    to the right value, if possible.
\end{proof}

\begin{proof}[of Theorem~\ref{leqgetsadd} ($\famop{\leq}{\getsadd}$)]
    We first consider the following proposition:
    if $\diff(\Rbeg,\Rend,\famop{\leq}{\getsadd})$
    is nonempty, then it
    contains a best diff $F=(f_1,\ldots,f_m)$
    in which for all $i,j\in[m]$, if $i<j$ and one of the following is true:
    \begin{center}
    {\setlength{\tabcolsep}{2pt}
    \begin{tabular}{l l l}
    (a)&$f_i=(\attA\leq a_i,\attB\gets b_i)$,&
    $f_j=(\attA\leq a_j,\attB\gets b_j)$\\
    (b)&$f_i=(\attA\leq a_i,\attB\gets\attB+b_i)$,&
    $f_j=(\attA\leq a_j,\attB\gets b_j)$\\
    (c)&$f_i=(\attA\leq a_i,\attB\gets\attB+b_i)$,&
    $f_j=(\attA\leq a_j,\attB\gets\attB+b_j)$\\
    (d)&$f_i=(\attA\leq a_i,\attB\gets b_i)$,&
    $f_j=(\attA\leq a_j,\attB\gets\attB+b_j)$
    \end{tabular}}
    \end{center}
    then $a_i > a_j$. Equivalently, we define an \emph{inversion} as
    a pair $(i,j)$ such that the preconditions hold but instead
    $a_i\leq a_j$, and we claim that there exists a best diff without inversions.

    The proof follows.
    For cases (a) and (b), the proof is the same as in Theorem~\ref{leqgets}.
    For cases (c) and (d), the proof is as follows.
    Here, an inversion, as defined above, is a pair
    $(i,j)\in[m]$ such that $i<j$,
    the condition for $f_i$ is $\attA\leq a_i$,
    the condition for $f_j$ is $\attA\leq a_j$,
    and $a_i > a_j$.

    If $\diff(\Rbeg,\Rend,\famop{\leq}{+})$
    is nonempty, then it
    contains a best diff $F=(f_1,\ldots,f_m)$
    that does not violate (a), or (b),
    with the fewest inversions.
    We show that $F$ has zero inversions.

    Suppose $F$ has an inversion of type (c) or type (d), that is,
    there are $i,j\in[m]$ such that $i<j$ and either
    \begin{align*}
        f_i&=(\attA\leq a_i,\attB\gets\attB+b_i)\text{ and}\\
        f_j&=(\attA\leq a_j,\attB\gets\attB+b_j)
    \end{align*}
    (for an inversion of type (c))
    \begin{align*}
        f_i&=(\attA\leq a_i,\attB\gets b_i)\text{ and}\\
        f_j&=(\attA\leq a_j,\attB\gets\attB+b_j)
    \end{align*}
    (for an inversion of type (d)), and $a_i \leq a_j$. Without loss of generality, let $(i,j)$
    be such a pair where the index difference $j-i$ is the smallest.
    It must be the case that $j-i=1$, for otherwise $f_k$
    where $i<k<j$ will create a violation for (a) or (b).

    If $(f_i, f_j)$ constitutes an inversion of type (c), we may create
    a new diff $F'$ is equivalent to $F$, but with $f_i$ and $f_j$ switched.
    $F'(\Rbeg)=F(\Rbeg)=\Rend$ since the increment updates are commutative
    (and there are no operations $f_k$, $i<k<j$ with assignment updates to break
    said commutativity since $j-i=1$).

    Otherwise, $(f_i, f_j)$ constitutes an inversion of type (d).
    In this case, let $F'=(f_1,\ldots,f_{i-1},f_j,g,f_{j+1},\ldots,f_m)$
    where
    \begin{align*}
        g&=(\attA\leq a_i,\attB\gets b_i+b_j)
    \end{align*}
    then $F'(\Rbeg)=F(\Rbeg)=\Rend$. In either case (inversion of type (c)
    or of type (d)), $F'$ has one fewer inversion than $F$,
    a contradiction to the fact that $F$ has the fewest inversions.
    \begin{center}
    \begin{tikzpicture}
        \begin{scope}[xshift=0.0cm]
            \draw[thin] (-0.2,-1.5) rectangle (3.7,0.3);
            \draw[<->,thick] (-0.1,0)--(3.1,0) node[anchor=west] {$\attA$};
            \draw[<-|] (0,-0.3)--node[below=-1.5pt]{$\attB\gets b_i$} (2.0,-0.3) node[anchor=west] {$a_i$};
            \draw[<-|] (0,-1.0)--node[below=-1.5pt]{$\attB\gets\attB+b_j$} (2.7,-1.0) node[anchor=west] {$a_j$};
        \end{scope}
        \node () at (4.0,-0.6) {$=$};
        \begin{scope}[xshift=4.5cm]
            \draw[thin] (-0.2,-1.5) rectangle (3.7,0.3);
            \draw[<->,thick] (-0.1,0)--(3.1,0) node[anchor=west] {$\attA$};
            \draw[<-|] (0,-0.3)--node[below=-1.5pt]{$\attB\gets\attB+b_j$} (2.7,-0.3) node[anchor=west] {$a_j$};
            \draw[<-|] (0,-1.0)--node[below=-1.5pt]{$\attB\gets b_i+b_j$} (2.0,-1.0) node[anchor=west] {$a_i$};
        \end{scope}
    \end{tikzpicture}
    \end{center}
    Therefore, one simply needs to sort the tuples
    by their $\attA$, and compute the smallest number of
    operations required using dynamic programming.
\end{proof}

\begin{proof}[of Theorem~\ref{leqaffine} ($\famop{\leq}{\affine}$)]
    We first consider the following proposition:
    if $\diff(\Rbeg,\Rend,\famop{\leq}{\affine})$
    is nonempty, then it
    contains a best diff $F=(f_1,\ldots,f_m)$
    in which for all $i,j\in[m]$, if $i<j$ and
    \begin{align*}
        f_i&=(\attA\leq a_i,\attB\gets b_i\attB+c_i)\text{ and}\\
        f_j&=(\attA\leq a_j,\attB\gets b_j\attB+c_j)
    \end{align*}
    then $a_i > a_j$. Equivalently, we define an \emph{inversion} as
    a pair $(i,j)$ such that the preconditions hold but instead
    $a_i\leq a_j$, and we claim that there exists a best diff without inversions.

    The proof follows. Suppose $F$ has an inversion, that is,
    there are $i,j\in[m]$ such that $i<j$ and
    \begin{align*}
        f_i&=(\attA\leq a_i,\attB\gets b_i\attB+c_i)\text{ and}\\
        f_j&=(\attA\leq a_j,\attB\gets b_j\attB+c_j)
    \end{align*}
    and $a_i \leq a_j$. Without loss of generality, let $(i,j)$
    be such a pair where the index difference $j-i$ is the smallest.
    It must be the case that $j-i=1$, for otherwise $f_k$
    where $i<k<j$ will be such that either $(i,k)$ or $(k,j)$
    is an inversion with a smaller index difference.

    Let $F'=(f_1,\ldots,f_{i-1},f_j,g,f_{j+1},\ldots,f_m)$
    where
    \begin{align*}
        g&=(\attA\leq a_i,\attB\gets b_i\attB+c')\text{ and}\\
        c' &=c_j+b_jc_i-c_jb_i
    \end{align*}
    then $F'(\Rbeg)=F(\Rbeg)=\Rend$, but $F'$ has one fewer inversion than $F$,
    a contradiction to the fact that $F$ has the fewest inversions.
    \begin{center}
    \begin{tikzpicture}
        \begin{scope}[xshift=0.0cm]
            \draw[thin] (-0.2,-1.5) rectangle (3.7,0.3);
            \draw[<->,thick] (-0.1,0)--(3.1,0) node[anchor=west] {$\attA$};
            \draw[<-|] (0,-0.3)--node[below=-1.5pt]{$\attB\gets b_i\attB+c_i$} (2.0,-0.3) node[anchor=west] {$a_i$};
            \draw[<-|] (0,-1.0)--node[below=-1.5pt]{$\attB\gets b_j\attB+c_j$} (2.7,-1.0) node[anchor=west] {$a_j$};
        \end{scope}
        \node () at (4.0,-0.6) {$=$};
        \begin{scope}[xshift=4.5cm]
            \draw[thin] (-0.2,-1.5) rectangle (3.7,0.3);
            \draw[<->,thick] (-0.1,0)--(3.1,0) node[anchor=west] {$\attA$};
            \draw[<-|] (0,-0.3)--node[below=-1.5pt]{$\attB\gets b_j\attB+c_j$} (2.7,-0.3) node[anchor=west] {$a_j$};
            \draw[<-|] (0,-1.0)--node[below=-1.5pt]{$\attB\gets b_i\attB+c'$} (2.0,-1.0) node[anchor=west] {$a_i$};
        \end{scope}
    \end{tikzpicture}
    \end{center}
    Label the tuples $1,\ldots,n$ in order of increasing $\attA$.
    We can use dynamic programming to compute $f(m)$, the cost
    of modifying tuples $1$ throught $m$ in order to match $\Rend$.
    The process effectively segments the tuples into blocks, each
    of which containing tuples that can be transformed together using one affine transformation,
    i.e., lie on the same ``line'', taking extra care of
    constant transformations---those with the modifier $\attB\gets b\attB+c$
    with $b=0$.

    The final answer $f(n)$ can be computed in $O(N\log N)$.
\end{proof}

\begin{proof}[of Theorem~\ref{lgeqgets} ($\famop{\lgeq}{\gets}$)]
    We first consider the following proposition:
    if $\diff(\Rbeg,\Rend,\famop{\lgeq}{\gets})$
    is nonempty, then it
    contains a best diff $F=(f_1,\ldots,f_n)$
    in which there are no operations
    \begin{align*}
        f_i&=(\attA\leq a_i, \attB\gets b_i)\text{ and}\\
        f_j&=(\attA\geq a_j, \attB\gets b_j)
    \end{align*}
    such that $a_i\geq a_j$.
    In other words, $F$ contains no two ``overlapping'' operations.

    The proof follows.
    Let $F=(f_1,\ldots,f_m)$ be a best diff in
    $\diff(\Rbeg,\Rend,\famop{\lgeq}{\gets})$
    with the smallest total length.
    Assume to the contrary that there exists operations $f_i$ and $f_j$
    such that
    \begin{align*}
        f_i&=(\attA\leq a_i, \attB\gets b_i)\text{ and}\\
        f_j&=(\attA\geq a_j, \attB\gets b_j)
    \end{align*}
    and $a_i\geq a_j$.
    If $i<j$, then let $g=(\attA\leq a_j-1, \attB\gets b_i)$,
    and $F'=(f_1,\ldots,f_{i-1},g,f_{i+1},\ldots,f_m)$. Then,
    $F'(\Rbeg)=F(\Rbeg)=\Rend$, but $F'$ has smaller total length than $F$.
    \begin{center}
    \begin{tikzpicture}
        \begin{scope}[xshift=0.0cm]
            \draw[thin] (-0.2,-1.5) rectangle (3.7,0.3);
            \draw[<->,thick] (-0.1,0)--(3.1,0) node[anchor=west] {$\attA$};
            \draw[<-|] (0,-0.3)--node[below=-1.5pt]{$\attB\gets b_i$} (1.7,-0.3) node[anchor=west] {$a_i$};
            \draw[|->] (1.3,-1.0) node[anchor=east] {$a_j$} --node[below=-1.5pt]{$\attB\gets b_j$} (3.0,-1.0);
        \end{scope}
        \node () at (4.0,-0.6) {$=$};
        \begin{scope}[xshift=4.5cm]
            \draw[thin] (-0.2,-1.5) rectangle (3.7,0.3);
            \draw[<->,thick] (-0.1,0)--(3.1,0) node[anchor=west] {$\attA$};
            \draw[<-|] (0,-0.3)--node[below=-1.5pt]{$\attB\gets b_i$} (1.2,-0.3) node[anchor=west] {$a_j-1$};
            \draw[|->] (1.3,-1.0) node[anchor=east] {$a_j$} --node[below=-1.5pt]{$\attB\gets b_j$} (3.0,-1.0);
        \end{scope}
    \end{tikzpicture}
    \end{center}
    Otherwise, if $i>j$, then let $g=(\attA\geq a_i+1, \attB\gets b_j)$,
    and $F'=(f_1,\ldots,f_{j-1},g,f_{j+1},\ldots,f_m)$. Then,
    $F'(\Rbeg)=F(\Rbeg)=\Rend$, but $F'$ has smaller total length than $F$.
    In either case, it contradicts with the fact that
    $F$ has the smallest total length.

    Therefore, one simply needs to sort the tuples
    by their $\attA$,
    decide on the ``breakpoint'' that separates the
    $\attA\leq a$ conditions from the $\attA\geq a$ conditions, then
    use the algorithm similar to the one given in Theorem~\ref{leqgets}
    on each side.
\end{proof}

\begin{proof}[of Theorem~\ref{rangegets} ($\famop{\range}{\gets}$)]
    If $\diff(\Rbeg,\Rend,\famop{\range}{\gets})$
    is nonempty, then it
    contains a best diff $F=(f_1,\ldots,f_n)$
    in which for any two operations
    $f_i=(\attA\in [a_i,z_i], \attB\gets b_i)$ and
    $f_j=(\attA\in [a_j,z_j], \attB\gets b_j)$ where
    $i<j$, either $z_i<a_j$ or $z_j<a_i$ or $a_i\leq a_j\leq z_j\leq z_i$.
    The proof is similar to that given in Theorem~\ref{lgeqgets}.

    Label the tuples $1,\ldots,n$ in order of increasing $\attA$.
    We can use dynamic programming to compute $f(m_1, m_2, \delta)$,
    the cost of modifying tuples $m_1$ through $m_2$ in order to match $\Rend$
    where all tuples have $\attB$ values set to $\delta$, unless $\delta$
    is $\textsc{Null}$ in which case all tuples have the
    original $\attB$ values like in $\Rbeg$.

    The final answer $f(1, n, \textsc{Null})$ can be computed in $O(N^4)$.
\end{proof}

\section{Approximation Results}
\label{sec:approxproofs}
For some $\NP$-hard cases of the \ddiff problem, we are able
to provide polynomial-time approximation algorithms. In fact,
these algorithms are discussed earlier, since they provide
exact results for different condition and modifier settings.

\subsection{Approximation for $\bdI(\famop{\lgeq}{+})$}

\begin{theorem}
    For the $\bdI(\famop{\lgeq}{+})$ problem
    an additive $1$-approximation can be found in $O(N\log N)$ time.
\label{lgeqaddapprox}
\end{theorem}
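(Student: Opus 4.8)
The plan is to prove that the optimum cost of every instance lies in $\{m, m+1\}$ for an easily computed integer $m$, and then always to output a valid diff of cost at most $m+1$; since the optimum is at least $m$, this is an additive $1$-approximation. First I would reduce the instance to a one-dimensional ``jump'' problem. Because an increment operation adds a fixed constant to every matching tuple and $\attA$ is never modified, every diff realizes, for each distinct value $a$ of $\attA$, the same total increment $\delta(a)$ of the $\attB$ attribute, namely the difference between the $\attB$ values in $\Rend$ and $\Rbeg$ for tuples with that $\attA$ value. So I first sort by $\attA$ and check that this required increment is consistent within each group of equal $\attA$ value; if not, I report that no diff exists. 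Otherwise let $a_1<\cdots<a_k$ be the distinct values with required increments $\delta_1,\ldots,\delta_k$, and let $m$ be the number of internal gaps $i\in[k-1]$ with $\delta_i\neq\delta_{i+1}$.

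Next I would establish the lower bound that every diff has cost at least $m$. The argument mirrors the ``jump'' reasoning of Lemma~\ref{rangelower}: an at-most (resp.\ at-least) operation affects a prefix (resp.\ suffix) of $a_1,\ldots,a_k$, so its threshold lies in at most one internal gap, and it can change the relative $\attB$-increment across at most that one gap (an operation matching all or none of the values straddles no internal gap). For each of the $m$ active gaps the relative increment must move from $0$ to $\delta_{i+1}-\delta_i\neq0$, so at least one operation must straddle it; since each operation straddles at most one gap, the $m$ active gaps require $m$ distinct operations, giving $\mathrm{OPT}\geq m$.

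For the matching upper bound I would exhibit an explicit diff of cost at most $m+1$ built from at-least operations only (a subfamily of $\famop{\lgeq}{+}$): take $f_0=(\attA\geq a_1,\attB\gets\attB+\delta_1)$ when $\delta_1\neq0$, together with $f_i=(\attA\geq a_{i+1},\attB\gets\attB+(\delta_{i+1}-\delta_i))$ for each active gap $i$. Since these operations commute (Lemma~\ref{lgeqcommute}), the tuple at $a_j$ receives total increment $\delta_1+\sum_{i<j}(\delta_{i+1}-\delta_i)=\delta_j$, so this is a valid diff; its cost is at most $m+1$. Combining the bounds yields a returned cost of at most $m+1\leq\mathrm{OPT}+1$, and sorting, the consistency check, and emitting the $O(k)$ operations all run in $O(N\log N)$.

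I expect the only delicate point to be the boundary term $f_0$: this single ``global shift'' at the leftmost value is exactly what the lower bound cannot charge to any internal gap, and it is the source of the unavoidable additive slack of $1$. Indeed, deciding whether $f_0$ can be absorbed so that the optimum is exactly $m$ is precisely the \textsc{SubsetSum}-hard question underlying Theorem~\ref{lgeqadd}, which is why we settle for an additive $1$-approximation rather than an exact polynomial-time algorithm.
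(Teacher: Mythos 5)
Your proof is correct, but it takes a genuinely different route from the paper's. The paper's argument is a solution transformation plus a black-box call: starting from an optimal diff under $\famop{\lgeq}{+}$, it replaces every at-least operation $(\attA\geq a_i,\attB\gets\attB+b_i)$ by the complementary at-most operation $(\attA\leq a_i-1,\attB\gets\attB-b_i)$ and appends a single global operation adding $\sum_{i\in I_\geq}b_i$ to all tuples; this yields an at-most-only diff of cost $\mathrm{OPT}+1$, so the exact $O(N\log N)$ algorithm of Theorem~\ref{leqadd} for $\bdI(\famop{\leq}{+})$ returns something at least as good. You never transform an optimal solution at all; instead you sandwich the optimum by an explicitly computable quantity: the lower bound $\mathrm{OPT}\geq m$ comes from your straddling argument (a one-sided condition selects a prefix or suffix of the distinct $\attA$-values, hence crosses at most one internal gap, and every gap with $\delta_{i+1}\neq\delta_i$ must be crossed by some operation), and the upper bound comes from an explicit at-least-only diff of cost at most $m+1$. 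Both halves check out, including the consistency test for non-existence and the telescoping verification that the tuple at $a_j$ receives exactly $\delta_j$. Your route is self-contained (no reliance on Theorem~\ref{leqadd} or its subroutine) and proves the stronger structural fact $\mathrm{OPT}\in\{m,m+1\}$; your closing remark is also accurate, since in the reduction behind Theorem~\ref{lgeqadd} all $n$ gaps are active and $\delta_1=-t\neq 0$, so deciding whether $\mathrm{OPT}=m$ there is exactly \textsc{SubsetSum}. What the paper's approach buys in exchange is modularity: once the one-sided family is solved exactly, the two-line flipping argument immediately delivers the additive guarantee, and the same template reappears in its multiplicative $2$-approximation for ranges (Theorem~\ref{rangeaddapprox}).
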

\begin{proof}
    In short, we show that if we only use the one of the ``at most''
    and ``at least'' condition type exclusively, and the cost of the
    best diff only increases by at most one.

    Let $F=(f_1,\ldots,f_m)$ be the best diff between $\Rbeg$ and $\Rend$,
    such that $I_\leq$ is the set of indices $i\in[m]$
    where $f_i$ is in the form $(\attA\leq a_i,\attB\gets\attB+b_i)$,
    and $I_\geq$ is the set of indices $i\in[m]$
    where $f_i$ is in the form $(\attA\geq a_i,\attB\gets\attB+b_i)$.
    Then, $F'=(f'_1,\ldots,f'_m,g)$ where, for $i\in[m]$,
    \begin{align*}
    f'_i&=\begin{cases}
    (\attA\leq a_i,\attB\gets\attB+b_i)&\text{if $i\in I_\leq$}\\
    (\attA\leq a_i-1,\attB\gets\attB-b_i)&\text{if $i\in I_\geq$}
    \end{cases}\\
    g&=(\attA\leq v_\attA^\mathrm{max},\attB\gets\attB+\sum_{i\in I_\geq} b_i)
    \end{align*}
    is also a diff between $\Rbeg$ and $\Rend$.
    By Theorem~\ref{leqadd}, a diff better or as good as $F'$ can be found in $O(N\log N)$.
\end{proof}

\subsection{Approximation for $\bdI(\famop{\range}{+})$}

\begin{theorem}
    For the $\bdI(\famop{\range}{+})$ problem,
    a multiplicative $2$-approximation can be found in $O(N\log N)$ time.
\label{rangeaddapprox}
\end{theorem}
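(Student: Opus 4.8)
The plan is to reduce the range problem to the already-solved at-most problem $\bdI(\famop{\leq}{+})$ (Theorem~\ref{leqadd}) and to sandwich the two optimal costs. Write $\mathrm{OPT}_\range$ and $\mathrm{OPT}_\leq$ for the costs of the best diffs in $\diff(\Rbeg,\Rend,\famop{\range}{+})$ and $\diff(\Rbeg,\Rend,\famop{\leq}{+})$ respectively; since every operation costs $1$, each of these equals the number of operations in a best diff. I would establish $\mathrm{OPT}_\range\leq\mathrm{OPT}_\leq\leq 2\,\mathrm{OPT}_\range$. Then the optimal at-most diff, re-expressed with range conditions, is itself a valid range diff whose cost $\mathrm{OPT}_\leq$ is at most $2\,\mathrm{OPT}_\range$, which is exactly a multiplicative $2$-approximation; the whole computation is one call to the $O(N\log N)$ algorithm of Theorem~\ref{leqadd}.

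The key step is the upper bound $\mathrm{OPT}_\leq\leq 2\,\mathrm{OPT}_\range$, obtained by a difference-array style decomposition. Take a best range diff $(g_1,\ldots,g_k)$ with $g_i=(\attA\in[a_i,z_i],\attB\gets\attB+b_i)$ and $k=\mathrm{OPT}_\range$, and replace each $g_i$ by the consecutive pair
\begin{align*}
h_i^+&=(\attA\leq z_i,\attB\gets\attB+b_i)\\
h_i^-&=(\attA\leq a_i-1,\attB\gets\attB-b_i)
\end{align*}
Applying $h_i^+$ then $h_i^-$ adds $b_i$ to every tuple with $\attA\leq z_i$ and subtracts it again from every tuple with $\attA\leq a_i-1$, leaving a net increment of $b_i$ exactly on the tuples with $a_i\leq\attA\leq z_i$, the same effect as $g_i$. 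Because increments commute (the $\attA$ values never change, as in Lemma~\ref{rangecommute}), the concatenation $(h_1^+,h_1^-,\ldots,h_k^+,h_k^-)$ gives each tuple precisely the increments it received under the range diff, so it is an at-most diff producing $\Rend$ with cost $2k$; hence $\mathrm{OPT}_\leq\leq 2\,\mathrm{OPT}_\range$.

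The reverse inequality $\mathrm{OPT}_\range\leq\mathrm{OPT}_\leq$ is immediate: an at-most clause $\attA\leq z$ matches exactly the same tuples as the range clause $\attA\in[v_\attA^\mathrm{min},z]$, since every $\attA$ value strictly exceeds $v_\attA^\mathrm{min}$, so any at-most diff is verbatim a range diff of equal cost. This same re-expression is how I would emit the final answer, and the two conversions together show that a diff under one family exists iff one exists under the other, so if the Theorem~\ref{leqadd} algorithm reports that no at-most diff exists we may correctly report that no range diff exists. The main obstacle is not the running time but verifying that the decomposition preserves the output relation exactly; this rests on the commutativity of increment operations and on the telescoping of the two opposite-sign prefix updates. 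Once that is in hand, the sandwich $\mathrm{OPT}_\range\leq\mathrm{OPT}_\leq\leq 2\,\mathrm{OPT}_\range$ and the $O(N\log N)$ bound follow directly.
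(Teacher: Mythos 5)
Your proposal is correct and takes essentially the same route as the paper: the paper's proof likewise splits each range operation $(\attA\in[a_i,z_i],\attB\gets\attB+b_i)$ of a best range diff into the two prefix operations $(\attA\leq a_i-1,\attB\gets\attB-b_i)$ and $(\attA\leq z_i,\attB\gets\attB+b_i)$, yielding an at-most diff of cost $2m$, and then invokes Theorem~\ref{leqadd} to find a diff at least as good in $O(N\log N)$ time. The extra details you supply (the explicit sandwich $\mathrm{OPT}_\range\leq\mathrm{OPT}_\leq\leq 2\,\mathrm{OPT}_\range$, re-expressing at-most conditions as ranges so the output is a valid range diff, and the infeasibility case) are points the paper leaves implicit.
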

\begin{proof}
    Let $F=(f_1,\ldots,f_m)$ be the best diff between $\Rbeg$ and $\Rend$,
    where $f_i=(\attA\in[a_i,z_i],\attB\gets\attB+b_i)$
    for $i\in[m]$.
    Then, $F'=(f'_1,\ldots,f'_{2m})$ where, for $i\in[m]$,
    \begin{align*}
    f'_{2i-1}&=(\attA\leq a_i-1,\attB\gets\attB-b_i)\\
    f'_{2i}&=(\attA\leq z_i,\attB\gets\attB+b_i)
    \end{align*}
    is also a diff between $\Rbeg$ and $\Rend$.
    By Theorem~\ref{leqadd}, a diff better or as good as $F'$ can be found in $O(N\log N)$.
\end{proof}

As a side note, there is the following reduction to the
edge-cost flow problem. Construct a flow network where
each vertex corresponds to a tuple. Let $v_1,\ldots,v_n$
be vertices corresponding to tuples in increasing order of $\attA$.
Construct an edge between every pair of vertices.
Assume $\attB$ of a vertex changes from $b$ to $b'$.
If $b<b'$, then let the supply of that vertex be $b'-b$.
If $b>b'$, then let the demand of that vertex be $b-b'$.

The claim is that there is that there is a diff of cost $m$
if and only if there is a flow of cost $m$. Thus, any approximation
scheme for edge-cost flow can also be used for $\bd$.

\section{The \textsc{2DistinctSCS} Problem}
\label{sec:2scs}
We prove that \textsc{2DistinctSCS} is
$\NP$-hard, via a polynomial-time reduction from \textsc{2SCS}.

Consider an instance of 2SCS with set $S$ of strings of length
two and a nonnegative integer $t$. Let $C$ be the set of symbols $c$
where $cc$ is in $S$.

For each symbol $c\in C$, create two new symbols $c_1$ and $c_2$.
Let $S'=\{f_1(u)f_2(v)\mid uv\in S\}$ where, for $i\in[2]$,
\[f_i(c)=\begin{cases}c_i&\text{if $c\in C$}\\c&\text{otherwise}\end{cases}\]

\begin{theorem}
$S$ has a supersequence of length at most $t$ iff
$S'$ has a supersequence of length at most $t$.
\end{theorem}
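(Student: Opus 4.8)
The plan is to prove both directions by exhibiting \emph{length-preserving} maps between common supersequences of $S$ and of $S'$. Since each map keeps the length fixed, the existence of a supersequence of length $\le t$ will transfer in both directions, which is exactly the stated equivalence (and in fact shows the two optimal supersequence lengths coincide).

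For the ($\Leftarrow$) direction I would use the obvious \emph{collapse}. Define a symbol map $g$ sending $c_1\mapsto c$ and $c_2\mapsto c$ for every $c\in C$ and fixing all other symbols, and extend it letterwise to strings. Given a common supersequence $s'$ of $S'$ with $|s'|\le t$, set $s=g(s')$, which has the same length. Because $g$ is applied symbol-by-symbol it preserves the subsequence relation, and $g(f_1(u)f_2(v))=uv$ for every $uv\in S$ (as $g\circ f_i$ is the identity on base symbols). Hence every $uv\in S$ is a subsequence of $s$, so $s$ is a common supersequence of $S$ of length $\le t$. This direction is routine.

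The ($\Rightarrow$) direction is the main obstacle, since a single occurrence of a symbol $c\in C$ in $s$ cannot simultaneously play the roles of both $c_1$ and $c_2$ in $s'$, so a careless relabeling can fail. My plan is to relabel \emph{in place}, again preserving length: for each $c\in C$, relabel the \emph{last} occurrence of $c$ in $s$ as $c_2$ and every earlier occurrence as $c_1$, leaving symbols outside $C$ untouched; call the result $s'$. The fact that makes this particular choice work is that $c\in C$ forces $cc\in S$, so any common supersequence $s$ of $S$ contains at least two occurrences of $c$; consequently the first occurrence of $c$ is always relabeled $c_1$ and the last is relabeled $c_2$, with the former strictly preceding the latter. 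To verify $s'$ is a common supersequence of $S'$, I would fix an arbitrary $uv\in S$ with witnessing positions $i<j$ ($s_i=u$, $s_j=v$) and case-split on membership of $u,v$ in $C$: a letter outside $C$ keeps its witness position, a first-position letter $u\in C$ is witnessed by the first occurrence of $u$ (guaranteed to be $u_1$, at some position $\le i$), and a second-position letter $v\in C$ is witnessed by the last occurrence of $v$ (guaranteed to be $v_2$, at some position $\ge j$).

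The only subtlety I expect to need care is the ordering in the mixed cases, for instance $u\in C$ with $v\notin C$, where the occurrence of $u$ at position $i$ might itself be the \emph{last} occurrence and thus carry the label $u_2$ rather than $u_1$. Here I would fall back on the ``at least two occurrences'' fact: the first occurrence of $u$ is labeled $u_1$ and sits at a position $\le i<j$, so it still precedes the witness for $v$, yielding $f_1(u)f_2(v)$ as a subsequence of $s'$. Checking that this ordering argument closes in each of the four cases (including $u=v=c$, whose target $c_1c_2$ is handled directly by first-occurrence $c_1$ before last-occurrence $c_2$) completes the proof, and since the relabeling leaves $|s'|=|s|\le t$, the forward implication follows.
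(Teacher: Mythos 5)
Your proposal is correct and takes essentially the same approach as the paper: the backward direction collapses $c_1,c_2\mapsto c$ letterwise, and the forward direction relabels occurrences of each $c\in C$ in place (the paper marks only the first occurrence as $c_1$ and the last as $c_2$, while you mark all non-last occurrences as $c_1$, but the witnessing argument---first occurrence of $u$ at position $\le i$, last occurrence of $v$ at position $\ge j$, using that $cc\in S$ forces two occurrences---is identical).
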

\begin{proof}
    ($\Leftarrow$) Let $s'$ be a supersequence of $S'$ of length
    at most $t$. Changing $c_1$ and $c_2$ to $c$ for $c\in C$
    from $s'$ and $S'$ preserves the supersequence constraint:
    characters at the same indices can be removed from $s'$ (or $s$) to
    obtain each string in $S'$ (or $S$).

    ($\Rightarrow$) Let $s=w_1\ldots w_m$ be a supersequence of $S$ of length
    $m\leq t$. Let $s'$ be identical to $s$, with the following change:
    for each $c\in C$, change its first occurrence in $s$ to $c_1$
    and its last occurrence in $s$ to $c_2$. By definition, each symbol
    in $c$ must have at least two occurrences in $s$, so there is no conflict.

    Consider $uv\in S$. Let $i,j\in[m]$ be indices such that $i<j$ and
    $w_i=u$ and $w_j=v$.
    The first occurence of $u$ in $s$ is at index
    $i'\leq i$ and the last occurence of $v$ in $s$ is at index
    $j'\geq j$. Thus, removing symbols other than at indices $i'$ and $j'$
    from $s'$ would give $w_{i'}w_{j'}=f_1(u)f_2(v)$.
    Therefore, $s'$ is a supersequence of $S'$ of length at most $t$.
\end{proof}

Therefore \textsc{2DistinctSCS} is $\NP$-hard.

\section{Additional Proofs}
\label{sec:additionalproofs}
\begin{proof}[of Theorem~\ref{lgeqandrange}]
    ($\Leftarrow$) By Lemma \ref{lgeqinteger},
    let $F=(f_1,\ldots,f_n)$ be a bounded best diff
    in $\diff(\Rbeg,\Rend,\famop{\lgeq}{+})$
    of cost $n$.
    Entries in the $\attA$ attribute in $\Rbeg$ and $\Rend$ are integers in $\{0,\ldots,n\}$.
    We can construct $F'$ from $F$ as follows.
    \begin{itemize}
    \item If $f_i=(\attA\leq a_i,\attB\gets\attB+b_i)$, then we construct
    $f'_i=(\attA\in [0,a_i],\attB\gets\attB+b_i)$,
    since $\attA\leq a_i$ if and only if $\attA\in [0,a_i]$.
    \item If $f_i=(\attA\geq a_i,\attB\gets\attB+b_i)$, then we construct
    $f'_i=(\attA\in [a_i,n],\attB\gets\attB+b_i)$,
    since $\attA\geq a_i$ if and only if $\attA\in [a_i,n]$.
    \end{itemize}
    Hence, $F'=(f'_1,\ldots,f'_n)$ is a best diff in
    $\diff(\Rbeg,\Rend,\famop{\range}{+})$.

    ($\Rightarrow$)
    By Lemma \ref{rangenocollision},
    let $F=(f_1,\ldots,f_n)$, where $f_i=(\attA\in[a_i,z_i],\attB\gets\attB+b_i)$
    for $i\in[n]$, be a bounded best diff
    in $\diff(\Rbeg,\Rend,\famop{\range}{+})$
    of cost $n$ such that $|C(F)|=0$.
    Consider a directed graph $G=(V,E)$ where $V=\{0,\ldots,n+1\}$
    and $E=\{(a_i, z_i+1)\mid i\in[m]\}$. That is, for each operation $f_i$,
    there is a corresponding edge $(a_i,z_i+1)$ in $E$.

    From the construction of $G$,
    vertex $0$ has in-degree $0$, and vertex $n+1$ has out-degree $0$.
    In addition, any vertex in $V$ cannot have in-degree greater than $1$.
    Assume the contrary: $\exists i,j$ s.t. $z_i+1=z_j+1=k$ then
    $z_i=z_j$.
    Likewise, any vertex in $V$ cannot have out-degree greater than $1$.
    A directed graph whose maximum in-degree and out-degree is $1$ can be
    decomposed into vertex-disjoint paths and cycles. However, $E$ only contains
    edges $(u,v)$ such that $u<v$, so $G$ does not contain cycles.
    Thus, $G$ can be decomposed into vertex-disjoint paths.

    Let $P=(v_1,\ldots,v_p)$ be a path in the vertex-disjoint path decomposition
    of $G$. We prove that $v_1=0$ or $v_p=n+1$.
    Assume for contradiction that $v_1\neq 0$ and $v_p\neq n+1$.
    From the degree requirements, we also have $v_1\neq n+1$ and $v_p\neq 0$.
    For each $k\in\{1,\ldots,p-1\}$,
    let $f^P_k=(\attA\in[v_k, v_{k+1}-1],\attB\gets\attB+b^P_k)$
    be an operation from $\{f_1,\ldots,f_n\}$ corresponding to the edge
    $(v_k,v_{k+1})$.

    Note that $f^P_{k-1}$ and $f^P_k$ are the only two operations
    in $\{f_1,\ldots,f_n\}$ that can affect the difference in the
    $\attB$ attribute between the tuples
    at $\attK=\attA=v_k-1$ and at $\attK=\attA=v_k$. In particular,
    it must be the case that $b^P_1=s_{v_1}>0$ and $-b^P_{p-1}=s_{v_p-1}>0$
    and $b^P_k-b^P_{k-1}=s_{v_k}>0$ in order for $F(\Rbeg)$ to agree with $\Rend$.
    However, these facts imply $0<b^P_1<b^P_2<\cdots <b^P_{p-1}$ and $b^P_{p-1}<0$,
    a contradiction. Therefore, $v_1=0$ or $v_p=n+1$.

    Define $b^P_0=b^P_p=0$.
    If $v_1=0$, we create the operation
    $f'_i=(\attA\leq v_{i+1}-1,\attB\gets\attB+(b^P_i-b^P_{i+1}))$
    for $i\in\{1,\ldots,p-1\}$.
    Otherwise, if $v_p=n+1$, we create the operation
    $f'_i=(\attA\geq v_i,\attB\gets\attB+(b^P_i-b^P_{i-1}))$
    for $i\in\{1,\ldots,p-1\}$.
    It follows that $F'_P=(f'_1,\ldots,f'_{p-1})$ are
    equivalent to $F_P=(f^P_1,\ldots,f^P_{p-1})$, and therefore
    $\diff(\Rbeg,\Rend,\famop{\lgeq}{+})$
    contains a bounded best diff of cost $n$.
\end{proof}

\begin{proof}[of Theorem~\ref{rangeaddandgetsadd}]
    ($\Leftarrow$) Any diff in $\diff(\Rbeg,\Rend,\famop{\range}{+})$
    is also a diff in $\diff(\Rbeg,\Rend,\famop{\range}{\getsadd})$.

    ($\Rightarrow$) Let $F=(f_1,\ldots,f_n)$ be a best diff in
    $\diff(\Rbeg,\Rend,\famop{\range}{\getsadd})$
    of cost $n$ that has the smallest number of assignment modifiers
    and, among the best diffs with the smallest number of assignment modifiers,
    has the smallest total length. We show that $F$ has no
    assignment modifiers.

    The proof follows.
    Assume to the contrary, and let $i$ be the smallest index in $[n]$ such that
    $f_i=(\attA\in[a_i,z_i],\attB\gets b_i)$ has an assignment modifier.

    Case 1: There is an operation $f_j=(\attA\in[a_j,z_j],\attB\gets\attB+b_j)$
    where $j<i$ and $a_j<a_i\leq z_j\leq z_i$. Then, let
    $f'_j=(\attA\in[a_j,a_i-1],\attB\gets\attB+b_j)$. If $F'$ is defined as
    $F$ where $f_j$ is replaced with $f'_j$, then $F'$ would still yield
    $F'(\Rbeg)=(\Rend)$, but the total length of $F'$ is smaller than that of $F$.

    Case 2: There is an operation $f_j=(\attA\in[a_j,z_j],\attB\gets\attB+b_j)$
    where $j<i$ and $a_i\leq a_j\leq z_i<z_j$. This case has an argument
    symmetric to Case 1.

    Case 3: There is an operation $f_j=(\attA\in[a_j,z_j],\attB\gets\attB+b_j)$
    where $j<i$ and $a_i\leq a_j<z_j\leq z_i$. If $F'$ is defined as
    $F$ where $f_j$ is removed, then $F'$ would still yield
    $F'(\Rbeg)=(\Rend)$, but the cost of $F'$ is smaller than that of $F$.

    Case 4: None of the above. Then, all tuples matching $\attA\in[a_i,z_i]$
    still has the same value in the $\attB$ attribute, say $\beta$, in
    $F''(\Rbeg)$, where $F''=(f_1,\ldots,f_{i-1})$. Then, let
    $f'_i=(\attA\in[a_i,z_i],\attB\gets\attB+(b_i-\beta))$. If $F'$ is defined as
    $F$ where $f_i$ is replaced with $f'_i$, then $F'$ would still yield
    $F'(\Rbeg)=(\Rend)$, but $F'$ has fewer assignment modifiers than $F$.

    Therefore, $F$ has no assignment modifiers. Thus, $F$ is also
    a best diff in $\diff(\Rbeg,\Rend,\famop{\range}{+})$.
\end{proof}

\begin{proof} [of Lemma~\ref{barrier}]
    Suppose a diff between $\Rbeg$ and $\Rend$ contains an operation
    \[f=(\attA\in\bigcup_{j=1}^{r} [a_{j},z_{j}],\attB\gets b)\]
    that matches $\attA=4k$---that is,
    there exists $j\in[r]$ such that $a_j\leq 4k\leq z_j$---for
    some $k\in\{0,\ldots,n\}$.
    Because, by construction, there are two tuples with $\attA=4k$, this operation
    changes their $\attB$ values to the same value $b$. However, these tuples
    have different $\attB$ values in $\Rend$ ($-1$ and $-2$), and assignment
    operators cannot assign different $\attB$ values to them, a contradiction.
\end{proof}

\begin{proof} [of Theorem~\ref{multeqgets}]
The problem is similar to the view synthesis problem with unions
of conjunctive queries with equality predicates, which is
$\NP$-hard~\cite{das2010synthesizing}.

When a view $V$ respective to attributes in $\calA$ is desired,
we set the $\attB$ values as follows:
in $\Rbeg$, set all $\attB$ values to distinct positive values;
in $\Rend$, set $\attB$ to the same values as in $\Rbeg$, except when the
tuple is in $V$, in which case $\attB$ is set to $0$.

The claim is that there is a view definition $V$ of cost $t$ if and only if
$\Rbeg$ and $\Rend$ has a diff under $\famop{=}{\gets}$ of cost $t$.
This is because including a conjunctive query into the view $V$ corresponds to
using the same query to set $\attB\gets 0$.

In fact, one modification to the reduction above is required to prevent
``unsetting'' the $\attB$ value
from $0$. For each tuple, make multiple, say $99$, copies preassigned
with different positive $\attB$ values.
Once the $\attB$ values of these
$99$ tuples are set to $0$ (or any value),
they cannot be changed back into distinct $\attB$ values again
using assignment modifier.

Thus, the problem is $\NP$-hard.
\end{proof}

\end{appendix}

\end{document}